\documentclass[12pt]{article}
\usepackage{graphicx}
\def\hybrid{\topmargin 0pt      \oddsidemargin 0pt
        \headheight 0pt \headsep 0pt
       \voffset-1cm
        \textwidth 6.25in       % A4 paper
       \textheight 9.5in       % A4 paper
        \marginparwidth 0.0in
        \parskip 5pt plus 1pt   \jot = 1.5ex}
\catcode`\@=11
\def\marginnote#1{}

\newcount\hour
\newcount\minute
\newtoks\amorpm
\hour=\time\divide\hour by60
\minute=\time{\multiply\hour by60 \global\advance\minute by-\hour}
\edef\standardtime{{\ifnum\hour<12 \global\amorpm={am}%
        \else\global\amorpm={pm}\advance\hour by-12 \fi
        \ifnum\hour=0 \hour=12 \fi
        \number\hour:\ifnum\minute<10 0\fi\number\minute\the\amorpm}}
\edef\militarytime{\number\hour:\ifnum\minute<10 0\fi\number\minute}

\def\draftlabel#1{{\@bsphack\if@filesw {\let\thepage\relax
   \xdef\@gtempa{\write\@auxout{\string
      \newlabel{#1}{{\@currentlabel}{\thepage}}}}}\@gtempa
   \if@nobreak \ifvmode\nobreak\fi\fi\fi\@esphack}
        \gdef\@eqnlabel{#1}}
\def\@eqnlabel{}
\def\@vacuum{}
\def\draftmarginnote#1{\marginpar{\raggedright\scriptsize\tt#1}}

\def\draftlabel#1{{\@bsphack\if@filesw {\let\thepage\relax
   \xdef\@gtempa{\write\@auxout{\string
      \newlabel{#1}{{\@currentlabel}{\thepage}}}}}\@gtempa
   \if@nobreak \ifvmode\nobreak\fi\fi\fi\@esphack}
        \gdef\@eqnlabel{#1}}
\def\@eqnlabel{}
\def\@vacuum{}
\def\draftmarginnote#1{\marginpar{\raggedright\scriptsize\tt#1}}

\def\draft{\oddsidemargin -.5truein
        \def\@oddfoot{\sl preliminary draft \hfil
        \rm\thepage\hfil\sl\today\quad\militarytime}
        \let\@evenfoot\@oddfoot \overfullrule 3pt
        \let\label=\draftlabel
        \let\marginnote=\draftmarginnote
   \def\@eqnnum{(\theequation)\rlap{\kern\marginparsep\tt\@eqnlabel}%
\global\let\@eqnlabel\@vacuum}  }

\def\numberbysection{\@addtoreset{equation}{section}
        \def\theequation{\thesection.\arabic{equation}}}

\def\underline#1{\relax\ifmmode\@@underline#1\else
        $\@@underline{\hbox{#1}}$\relax\fi}

\def\titlepage{\@restonecolfalse\if@twocolumn\@restonecoltrue\onecolumn
     \else \newpage \fi \thispagestyle{empty}\c@page\z@
        \def\thefootnote{\fnsymbol{footnote}} }

\def\endtitlepage{\if@restonecol\twocolumn \else  \fi
        \def\thefootnote{\arabic{footnote}}
        \setcounter{footnote}{0}}  %\c@footnote\z@ }
\relax

\numberbysection
\hybrid

\newfont{\Bbb}{msbm10 scaled 1\@ptsize00}
\newfont{\Bbbb}{msbm7 scaled 1\@ptsize00}
\newcommand{\CC}{\mbox{\Bbb C}}
\newcommand{\CCC}{\mbox{\Bbbb C}}

\newcommand{\DDD}{\raise-1pt\hbox{$\mbox{\Bbbb D}$}}

\newcommand{\SSS}{\mbox{\Bbb S}}

\newcommand{\UUU}{\raise-1pt\hbox{$\mbox{\Bbbb U}$}}

\newcommand{\ZZ}{\mbox{\Bbb Z}}
\newcommand{\z}{\raise-1pt\hbox{$\mbox{\Bbbb Z}$}}
\newcommand{\s}{\raise-1pt\hbox{$\mbox{\Bbbb S}$}}

\def\beq{\begin{equation}}
\def\eeq{\end{equation}}
\def\p{\partial}

\newtheorem{theorem}{Theorem}[section]
\newtheorem{lemma}{Lemma}[section]
\newtheorem{lemma-definition}{Lemma-Definition}[section]
\newtheorem{corollary}{Corollary}[section]

\newtheorem{remark}{Remark}[section]

\newtheorem{definition}{Definition}[section]
\newtheorem{proposition}{Proposition}[section]

\def\normord{ {\scriptstyle {{\bullet}\atop{\bullet}}} }

\def\square{\hfill
{\vrule height6pt width6pt depth1pt} \break \vspace{.01cm}}

\begin{document}

\begin{titlepage}

\title{Bilinear formalism for 
Schwarzian KP \\ and Harry Dym hierarchies}

\author{V. Prokofev\thanks{
Skolkovo Institute of Science and Technology, 143026, Moscow, Russia
and National Research University Higher School of Economics,
20 Myasnitskaya Ulitsa,
Moscow 101000, Russia,
e-mail: vadprokofev@gmail.com}
\and
A.~Zabrodin\thanks{
National Research University Higher School of Economics,
20 Myasnitskaya Ulitsa,
Moscow 101000, Russia; 
%and NRC ``Kurchatov institute'', Moscow, Russia;
e-mail: zabrodin@itep.ru}}

\date{April 2026}
\maketitle

%\vspace{-7cm} \centerline{ \hfill ITEP-TH-15/26}\vspace{7cm}

\begin{abstract}

We consider the Schwarzian KP and
Harry Dym hierarchies in the framework of the bilinear formalism
which is well known for such integrable hierarchies as KP, modified KP,
BKP, Toda lattice and other. We show that, similarly to the
bilinear formulation of the modified KP hierarchy, the 
Schwarzian KP can be reformulated as an integral bilinear
equation for a pair of KP tau-functions
with the property that any linear combination 
of them is again a tau function of the KP hierarchy. 
The Harry Dym hierarchy is then obtained as the Lax-Sato 
formulation of the SchKP one. The close connection
with B\"acklund-Darboux transformations for integrable hierarchies
is also discussed. Besides, it is shown that the SchKP 
hierarchy has a natural embedding
into the multi-component KP hierarchy.

\end{abstract}

\end{titlepage}

\vspace{5mm}

%

%\newpage
\tableofcontents

\vspace{5mm}

\section{Introduction}

In this article, we revisit the  Schwarzian KP (SchKP) and
Harry Dym (HD) integrable hierarchies and
suggest a formulation of them by means of the bilinear
formalism. The main hero is the tau-function satisfying certain
bilinear differential equations (or functional relations).
In a nutshell, we deal with 
the three basic versions of the integrable KP hierarchy
(KP, modified KP (mKP), SchKP) and trace the 
connections between them:
\beq\label{int0}
\mbox{KP}\longleftrightarrow \mbox{modified KP} \longleftrightarrow 
\mbox{Schwarzian KP (Harry Dym)}.
\eeq
For the KP and mKP hierarchies both bilinear and Lax-Sato 
formulations, as well as their equivalence, are well known.
The bilinear formulation of the SchKP hierarchy seems to be new.
In this approach, the HD hierarchy arises as the Lax-Sato
reformulation of the SchKP one. 

The Kadomtsev-Petviashvili (KP) equation was suggested in 1970
\cite{KP,book1}. It has the form
\beq\label{int-kp}
3u_{yy}=\Bigl (4u_t -6uu_x -u_{xxx}\Bigr )_x.
\eeq
This equation is the first member of an infinite
hierarchy of integrable equations known as the KP hierarchy. 
The modern understanding of this hierarchy is based on the works
of the Kyoto school \cite{DJKM83,JM83} (see
also books \cite{Hirota-book,HarnadBalogh}), where the bilinear approach
was developed and its equivalence to the Lax-Sato formulation
was proved. The universal dependent variable is the tau-function
$\tau ({\bf t})$
depending on an infinite number of independent variables (``times''
${\bf t}=\{t_1, t_2, t_3, \ldots \}$ with $t_1=x$, $t_2=y$, $t_3=t$) 
and satisfying an infinite
number of bilinear differential equations. All of them 
are encoded in the
following bilinear integral functional relation:
\begin{equation}
 \label{MEq1}
     \oint_{C_{\infty}}e^{\xi({\bf t}-{\bf t'}, z)}\,
      {\tau}({\bf t}-
   [z^{-1}]) \tau ({\bf t'}+[z^{-1}]) dz=0,
 \end{equation}
which holds for all ${\bf t}$, ${\bf t}'$.
Here the standard notations are used:
\beq\label{not}
\begin{array}{l}
\displaystyle{
\xi ({\bf t}, z)=\sum_{k\geq 1}t_k z^k,} 
\\ \\
{\bf t}\pm [z^{-1}]=\Bigl \{ t_1\pm z^{-1},
\, \frac{1}{2}z^{-2}, \,  \frac{1}{3}z^{-3}, \, 
\ldots \Bigr \}.
\end{array}
\eeq
The contour $C_{\infty}$ is a big circle of radius $R\to \infty$.
The function $u=u(x,y,t)$ in (\ref{int-kp}) is expressed through
the tau-function as $u=2\p_x^2 \log \tau$.

Later some other appearances of the KP equation and the KP hierarchy
were suggested. The most widely known one is
the modified KP (mKP) hierarchy. Its first member
has the form
\beq\label{int-mkp}
v_{xxxx}-4v_{xt}+3v_{yy}+6v_{xx}v_y -6v_x^2v_{xx}=0.
\eeq
In the literature 
there exist several different versions of the mKP hierarchy (see 
\cite{K85,K89,Kiso,KO93,Dickey99,TakTeo06}).
We will mostly deal with the Kupershmidt and Kiso version
\cite{K85,K89,Kiso}. Its bilinear formulation requires  
two tau-functions $\tau ({\bf t})$, $\hat \tau ({\bf t})$
rather than one. Each of them is a KP tau-function
(i.e., satisfies the integral bilinear equation (\ref{MEq1})),
and, in addition, there is a set of bilinear 
equations which link them together. The dependent variable
$v=v(x,y,t)$ in (\ref{int-mkp}) is $v=\log (\hat \tau /\tau )$.
The transition from
$\tau$ to $\hat \tau$ is an example of what is called 
(auto) B\"acklund-Darboux transformation for the KP hierarchy,
i.e., a transformation that sends any solution to another 
solution. A more general point of view is based on the embedding
of the mKP hierarchy into the more general 2D Toda lattice 
hierarchy \cite{UT84}. In this approach, the tau-function 
$\tau (n, {\bf t})$ depends also on
a discrete variable $n\in \ZZ$, and one identifies 
$\tau ({\bf t})=\tau (n, {\bf t})$,  
$\hat \tau ({\bf t})=\tau (n+1, {\bf t})$ for some fixed 
value of $n$.

The Schwarzian KP equation first appeared 
in \cite{Weiss83} in the context of Painlev\'e analysis
as the 
singularity manifold equation associated with the KP equation.
It has the form
\begin{equation}\label{int-schkp}
    \left(4\frac{S_{t}}{S_x}-\{S,x\}\right)_x
    =3\left(\frac{S_{y}}{S_x}\right)_y+\frac{3}{2}
    \left(\frac{S_{y}^2}{S_x^2}\right)_x.
\end{equation}
Here
\beq\label{Sder1}
\{S,x\}=\frac{S_{xxx}}{S_x}-\frac{3}{2}\left(\frac{S_{xx}}{S_x}\right)^2
\eeq
is the Schwarzian derivative.
Equation (\ref{int-schkp}) is invariant
under fractional-linear transformations of the variable $S=S(x,y,t)$.
The corresponding SchKP hierarchy was discussed in
\cite{BK98,BK99,KS02,Schief03}.
 
The original $(1+1)$-dimensional version of the HD equation 
was suggested by Harry Dym in 1973–1974 in 
an unpublished paper.
It is the following nonlinear 
partial differential equation for a function $U(s,t)$ 
of two variables $s, t$:
 \begin{equation}
 \label{HD1}
    4U_{t}=U^3U_{sss}.
    \end{equation}
In the literature, this equation
first appeared in 1975 in Kruskal's paper \cite{Kr75}, 
and received the name of its discoverer. A slightly different version
of the HD equation arises in the the 
Saffman–Taylor problem with surface tension \cite{Kad90}.
Later, in 1984, Konopelchenko and Dubrovskii \cite{KD84} 
suggested a (2+1)-dimensional generalization of this equation, 
\begin{equation}
\label{HD2}
    4U_{t}=U^3U_{sss}-\frac{3}{U}
    \left(U^2\p_s^{-1}\left(\frac{1}{U}\right)_y\right)_y,
    \end{equation}
where $\p^{-1}_s$ is the formal anti-derivative with respect 
    to $s$. If $U$ does not depend on
    $y$, the second term in the right-hand side 
    vanishes and this equation converts into (\ref{HD1}).
One may say that the original HD equation (\ref{HD1}) is related to its
(2+1)-dimensional generalization (\ref{HD2}) in the same way as the KdV equation
is related to the KP equation.
Like in the KP and mKP cases, the HD equation is the first member of an
infinite hierarchy.

It is known that the three hierarchies have commutation representations
of the Lax-Sato type. This formulation uses the well known notion of 
pseudo-differential operators.
Consider
the following pseudo-differential operators:
\begin{equation}
\label{LLL}
\begin{array}{l}
L^{\rm KP}=\partial+u_1\partial^{-1}+u_{2}\partial^{-2}+ \ldots
\\ \\
L^{\rm mKP}=\partial+v_0+v_1\partial^{-1}+v_{2}\partial^{-2}+ \ldots
\\ \\
L^{\rm HD}=U\partial+w_0+w_1\partial^{-1}+w_{2}\partial^{-2} +\ldots ,
\end{array}
\end{equation}
where $\p$ is derivative with respect to some space variable
(which is $x$ for KP and mKP and $s$ in the HD case).
The functions $u_i, v_i$ in the first two lines are functions of $x$
while the functions $U$ and $w_i$ in the 
third line are functions of $s$.
The first and the second ones are the Lax operators for 
 the KP and mKP hierarchies respectively. The third one
 is the Lax operator for the HD hierarchy.
In all the three cases,
 the commuting flows of the hierarchy 
are defined by the following Lax equations:
\begin{equation}\label{Lax3}
     \partial_{t_n}L=[(L^n)_{\geq k},\, L], \quad n\geq 2,
 \end{equation}
where $k=0,\, 1,\, 2$ for the first, second and third cases respectively. 
(Hereafter we use the notation $(\sum\limits_{n}a_n\partial^{n})_{\geq k}=
\sum\limits_{n\geq k}a_n\partial^{n}$.) The Lax equations define 
a dynamics in the space of the pseudo-differential operators
of the form (\ref{LLL}) and the functions $u_i, v_i, w_i$ and $U$ become
functions not only of the space variable ($x$ or $s$) but
also of all the hierarchical times $t_n$.

The three hierarchies are known to be closely related. 
For example, in \cite{OR} B{\"a}cklund-Darboux (BD) transformations 
were constructed between the three hierarchies, in addition to 
some auto-BD transformations.  
 However, despite the fact that the first two hierarchies have been well studied from various points of view (for example, in terms
 of tau-functions and free fermions, 
for the review see, e.g., \cite{AZ12}), 
 the same can not be said about the SchKP and HD hierarchies.
 
 The main goal of this work is to fill some gaps 
 in the theory of the SchKP and HD hierarchies and, in particular, to
 give their formulation in terms of tau-functions 
 and bilinear equations for them. More precisely, 
 we provide a new formulation of the SchKP hierarchy through a
pair of KP tau-functions $(\hat{\tau}({\bf t}),\bar{\tau}({\bf t}))$ 
(recall that where $t_1$ is
identified with the space variable $x$) linked together by an
additional bilinear equation. The main results can be summarized 
as the following theorem.

\begin{theorem}
Let $\hat \tau({\bf t})$ and  $\bar \tau({\bf t})$ be two KP
tau-functions (i.e. solutions of equation (\ref{MEq1})) such that
they satisfy the bilinear equation
\begin{equation}
 \label{MEq}
     \oint_{C_{\infty}}e^{\xi({\bf t}-{\bf t'},z)}\,
      \Bigl ( \hat \tau({\bf t}-[z^{-1}])
      \bar \tau({\bf t'}+ [z^{-1}])+\bar \tau ({\bf t}-[z^{-1}])
      \hat \tau({\bf t'}+ [z^{-1}]) \Bigr ) dz=0
 \end{equation}
for all ${\bf t}$, ${\bf t}'$. Then, for all $a_1, a_2, a_3\in \CC$, 
the function
\beq\label{int-S}
S({\bf t})=\frac{\bar \tau ({\bf t})}{\hat \tau ({\bf t})}
\eeq
solves the functional equation
\beq\label{int-S1}
\frac{(S^{[a_1]} - S^{[a_1 a_2]})(S^{[a_2]} - S^{[a_2 a_3]})
(S^{[a_3]} - S^{[a_1 a_3]})}{(S^{[a_3]} - S^{[a_2 a_3]})
(S^{[a_1]} - S^{[a_1 a_3]})
(S^{[a_2]} - S^{[a_1 a_2]})}=1,
\eeq
where $S^{[a_i]}\equiv S({\bf t}-[a_i^{-1}])$, 
$S^{[a_ia_j]}\equiv S({\bf t}-[a_i^{-1}]-[a_j^{-1}])$. 
In particular, it satisfies
the SchKP equation (\ref{int-schkp}) which is obtained from
(\ref{int-S1}) in the limit $a_1, a_2, a_3\to \infty$.

Conversely, let $S({\bf t})$ be any solution to (\ref{int-S1}),
then there exist two KP tau-functions $\hat \tau$, $\bar \tau$
such that the bilinear equation (\ref{MEq}) holds and the function
$S$ is given by (\ref{int-S}).
\end{theorem}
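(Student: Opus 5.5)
The direct part rests on the remark in the abstract. Since $\hat\tau$ and $\bar\tau$ each satisfy (\ref{MEq1}), equation (\ref{MEq}) says exactly that the cross terms cancel. Hence for every $\lambda\in\CC$ the combination $\tau_\lambda=\hat\tau+\lambda\bar\tau$ is again a KP tau-function: expanding (\ref{MEq1}) for $\tau_\lambda$ gives the $\lambda^0$ and $\lambda^2$ terms, which vanish by (\ref{MEq1}), and the $\lambda^1$ term, which is (\ref{MEq}). I then use the standard consequence of (\ref{MEq1}), the Hirota--Miwa (discrete KP) equation. It is obtained by choosing ${\bf t}-{\bf t}'=[a_1^{-1}]+[a_2^{-1}]+[a_3^{-1}]$ and evaluating residues at $z=a_i$. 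It states that for any KP tau-function $\tau$
\begin{equation*}
(a_1-a_2)\tau^{[a_3]}\tau^{[a_1a_2]}+(a_2-a_3)\tau^{[a_1]}\tau^{[a_2a_3]}+(a_3-a_1)\tau^{[a_2]}\tau^{[a_1a_3]}=0 .
\end{equation*}
The polarised version is the $\lambda$-linear part. For the pair $(\hat\tau,\bar\tau)$ it gives the same identity with each product $\tau^{[\cdot]}\tau^{[\cdot\cdot]}$ replaced by the symmetrised product $\hat\tau^{[\cdot]}\bar\tau^{[\cdot\cdot]}+\bar\tau^{[\cdot]}\hat\tau^{[\cdot\cdot]}$.

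Next I extract cross-ratio relations. Put $P_{ij,k}=\hat\tau^{[a_k]}\hat\tau^{[a_ia_j]}\,(S^{[a_k]}-S^{[a_ia_j]})$, which equals $\bar\tau^{[a_k]}\hat\tau^{[a_ia_j]}-\hat\tau^{[a_k]}\bar\tau^{[a_ia_j]}$. This antisymmetric combination is not directly given by the identities above. It does come from the three-term Hirota--Miwa equation for the pair, namely the mKP-type relation obtained from (\ref{MEq}) when ${\bf t}-{\bf t}'$ has only \emph{two} shifts. Taking ${\bf t}-{\bf t}'=[a_i^{-1}]+[a_j^{-1}]$, a residue computation (the integrand now has poles only at $a_i,a_j$ and at infinity) yields
\begin{equation*}
(a_i-a_j)\,\bigl(\bar\tau\,\hat\tau^{[a_ia_j]}+\hat\tau\,\bar\tau^{[a_ia_j]}\bigr)=a_i\bigl(\bar\tau^{[a_j]}\hat\tau^{[a_i]}+\hat\tau^{[a_j]}\bar\tau^{[a_i]}\bigr)-a_j\bigl(\bar\tau^{[a_i]}\hat\tau^{[a_j]}+\hat\tau^{[a_i]}\bar\tau^{[a_j]}\bigr).
\end{equation*}
I combine it with the single-function identities $(a_i-a_j)\tau\tau^{[a_ia_j]}=a_i\tau^{[a_j]}\tau^{[a_i]}-a_j\tau^{[a_i]}\tau^{[a_j]}$, which hold up to the usual normalisation since each of $\hat\tau$ and $\bar\tau$ solves (\ref{MEq1}). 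The aim is to show that $(S^{[a_i]}-S^{[a_ia_j]})/(S^{[a_j]}-S^{[a_ia_j]})$ is expressible through $\hat\tau$ alone, of the form $-\,\dfrac{a_j\,\hat\tau^{[a_j]}\,(\ldots)}{a_i\,\hat\tau^{[a_i]}\,(\ldots)}$. The cleanest route I would try is to work with the tau-function $\tau_\lambda$ and the identity $S^{[a_i]}-S^{[a_ia_j]}\propto$ a Wronskian-type quantity. Equivalently, use Fay's identity for $\tau_\lambda$ and differentiate in $\lambda$. Once each two-index ratio factors as (function of $i$)/(function of $j$) times a common factor, the triple product in (\ref{int-S1}) telescopes to $1$. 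I expect this bookkeeping to be the main obstacle in the direct part, mainly the signs and whether the factors really cancel cyclically. The SchKP equation (\ref{int-schkp}) then follows by expanding (\ref{int-S1}) in $a_i^{-1}$, using $S^{[a]}=S-a^{-1}S_x+\tfrac12a^{-2}(S_{xx}-S_y)+\dots$. The lowest nontrivial order gives (\ref{int-schkp}), and the lower orders vanish identically by symmetry.

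For the converse, the plan is to reconstruct the tau-functions. Equation (\ref{int-S1}) says the logarithm of a certain symmetric ratio is a coboundary. Concretely, I will show that $\psi_{ij}=\dfrac{S^{[a_i]}-S^{[a_ia_j]}}{S^{[a_j]}-S^{[a_ia_j]}}$ satisfies a cocycle condition. By the standard argument of the Kyoto school, this implies that there exists $\hat\tau$ with $\psi_{ij}=-\dfrac{a_j\hat\tau^{[a_j]}\hat\tau^{[a_i]\,\prime}}{\ldots}$, i.e. that $\hat\tau$ is determined from $S$ up to the usual gauge $e^{\text{linear in }{\bf t}}$. I then set $\bar\tau=S\hat\tau$. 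The remaining task is to verify that the discrete Hirota--Miwa equations for $\hat\tau$, $\bar\tau$ and the mixed one hold; these discrete equations are equivalent to (\ref{MEq1}) and (\ref{MEq}). I expect the existence of $\hat\tau$, which is an integrability/closedness argument for the shifts, to be the genuinely hard step. My first attempt would be to reduce it to the known theorem that a solution of the discrete mKP (or Hirota--Miwa) system comes from a tau-function.
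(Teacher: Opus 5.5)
\textbf{The direct part.} Your first paragraph already contains everything this part needs, but you never carry out the decisive step, and the route you sketch instead rests on false identities.

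\begin{itemize}
\item With ${\bf t}-{\bf t}'=[a_i^{-1}]+[a_j^{-1}]$ the kernel $a_ia_j/((a_i-z)(a_j-z))$ decays like $z^{-2}$. The residues at $a_i$ and $a_j$ multiply the same symmetric product $\hat\tau^{[a_i]}\bar\tau^{[a_j]}+\bar\tau^{[a_i]}\hat\tau^{[a_j]}$ with opposite coefficients. So (\ref{MEq}), like (\ref{MEq1}), gives only $0=0$.
\item The right-hand side of your displayed mixed identity is just $(a_i-a_j)$ times that symmetric product. Hence both it and your single-function identities reduce to $\tau\tau^{[a_ia_j]}=\tau^{[a_i]}\tau^{[a_j]}$ once you put $\hat\tau=\bar\tau=\tau$, for which (\ref{MEq}) holds automatically. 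This already fails for a one-soliton tau-function.
\item The genuine two-point relation of that shape is (\ref{mkp5c}). It needs an mKP pair and the $dz/z$ relation (\ref{mkp5}), neither of which you have.
\end{itemize}

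The factorisation you are aiming at does exist inside the square (\ref{diag1}). By (\ref{b1c}), $S-S^{[a]}=\hat{\bar\tau}\tau^{[a]}/(a\hat\tau\hat\tau^{[a]})$, so your ratio equals $a_iF^{[a_i]}/(a_jF^{[a_j]})$ with $F=\hat{\bar\tau}/\hat\tau$. But $\tau$ and $\hat{\bar\tau}$ are not supplied by the hypotheses, and constructing them is essentially the converse.

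The missing idea is a one-line elimination, which is what the paper does. Substitute $\bar\tau=S\hat\tau$ into (\ref{kp1c}) for $\hat\tau$, into the polarised relation (\ref{bil1}), and into (\ref{kp1c}) for $\bar\tau$. The three relations then say that $v=(a_{23}\hat\tau^{1}\hat\tau^{23},\,a_{31}\hat\tau^{2}\hat\tau^{13},\,a_{12}\hat\tau^{3}\hat\tau^{12})$ is annihilated by the rows $(1,1,1)$, $(S^{1}+S^{23},\,S^{2}+S^{13},\,S^{3}+S^{12})$ and $(S^{1}S^{23},\,S^{2}S^{13},\,S^{3}S^{12})$. Since $v\neq 0$ for generic ${\bf t}$, the $3\times 3$ determinant vanishes. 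Expanding it gives exactly (\ref{int-S1}) with denominators cleared, with no factorisation or telescoping needed.

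\textbf{The converse.} Your cocycle step is right in spirit. From $\psi_{ji}=\psi_{ij}^{-1}$ and $\psi_{12}\psi_{23}\psi_{31}=1$, sending the third point to $\infty$ gives $\psi_{ij}=h(a_i)/h(a_j)$ with $h(a)=S_x^{[a]}/(S-S^{[a]})$; this is (\ref{RR1}). Lemma \ref{lemma:technical2} then yields $g$ with $(S-S^{[a]})/S_x=g^{[a]}/(ag)$.

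However, $g$ is not $\hat\tau$. In the intended answer $g=\tau/\hat\tau$, which involves a third tau-function. You must check that $1/g$ solves the discrete mKP equation (\ref{R3}) and then apply Theorem \ref{theorem:inversetoKPtomKP} to get an mKP pair $(\tau,\hat\tau)$.

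Setting $\bar\tau=S\hat\tau$ afterwards does not finish the proof, for two reasons:
\begin{itemize}
\item You give no reason why $\bar\tau$ satisfies (\ref{MEq1}).
\item You cannot appeal to an equivalence between the three-term mixed equation and (\ref{MEq}). Whether (\ref{bil1}) implies (\ref{bil3}) is posed in the paper as an open problem. Only the full family (\ref{bil4}) for all $m$ is known to be equivalent (Proposition \ref{proposition:equiv}).
\end{itemize}

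What your plan lacks is a common parent $\tau$. The paper obtains it from the M\"obius invariance of (\ref{int-S1}). Running the same construction for $S^{-1}$ and for $(\alpha S+\beta)/(\gamma S+\delta)$ shows that $\phi'=S\phi$ and every $c_1\phi'+c_2\phi$ solve mKP. Proposition \ref{proposition:phiphi} then forces $\phi$ and $\phi'$ to be KP wave functions of one and the same $\tau$. Hence $\hat\tau=\phi\tau$ and $\bar\tau=\phi'\tau$ are KP tau-functions, and Theorem \ref{theorem:hat-tau-bar-tau} delivers (\ref{MEq}) in full integral form.
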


\noindent
Equation (\ref{int-S1}) is the generating functional equation
of the SchKP hierarchy. Partial differential equations 
of the hierarchy are obtained by expanding its 
left-hand side in the Taylor series in inverse powers 
of $a_1, a_2, a_3$ and equating the coefficients to zero. 
In the leading non-vanishing order one obtains equation
(\ref{int-schkp}). At the same time, equation
(\ref{int-schkp}) can be interpreted as an integrable
discretization of the SchKP equation. Namely, by setting
$S(n_1, n_2, n_3)=S ({\bf t}-n_1[a_1^{-1}]-n_2[a_2^{-1}]
-n_3[a_3^{-1}])$ for any fixed ${\bf t}$, 
(\ref{int-schkp}) converts into a discrete equation for
$S(n_1, n_2, n_3)$.

We also clarify the interrelation between the SchKP and HD 
hierarchies. The latter is nothing else than
a reformulation of 
the former in terms of pseudo-differential operators (i.e., 
an equivalent formulation of the Lax-Sato type). 
This reformulation requires 
a nontrivial change of variables: the new space variable is 
$s=S(x)$, where is $S$ given by (\ref{int-S}), so $\p_x =S_x \p_s$,
and $U(s)$ from (\ref{HD2}) and (\ref{LLL}) is defined
as $U=S_x$.
Besides, some previously known results 
about BD transformations are reformulated in terms of tau-functions.
This allows them to be given a particularly simple and clear form.

Having mentioned BD transformations\footnote{Sometimes they are 
called gauge transformations.} of the integrable hierarchies,
we should say that there is a huge literature devoted to this issue.
(See, for example, \cite{book}-\cite{Z25a} and references therein.)
In turn, we could not but touch upon the issues related to this.
In fact the language of BD transformations is 
especially convenient 
in analysis of the three hierarchies (\ref{int0}) and interrelations
between them. The key notion in the theory of BD transformations
is a wave function\footnote{Some authors
prefer to call them ``eigenfunctions'' but such terminology seems to 
be rather misleading because in general wave functions are {\it not} 
eigenfunctions of Lax operators. The name ``wave function'' that we use
in this paper is motivated by the fact that the linear differential
equation of the form (\ref{wave1}) below for the $t_2$-flow 
formally coincides with the non-stationary Schredinger equation
in imaginary time.} 
for a given Lax operator (the one for KP, mKP or
HD). For example, in the case of KP wave functions
$\phi ({\bf t})$ are defined as
common solutions to the equations
$$
\p_{t_n}\phi ({\bf t})=B_n \phi ({\bf t}),  \quad n\geq 2,
$$
where $B_n$ are differential operators
$B_n =(L^n)_{\geq 0}$. Given a solution to the KP hierarchy
with a tau-function $\tau ({\bf t})$, the function 
$\hat \tau ({\bf t})=\phi ({\bf t}) \tau ({\bf t})$ is then another
solution. This defines an (auto)BD transformation of the KP
hierarchy and, simultaneously, a solution to the mKP herarchy,
which is just given by the pair $(\tau , \hat \tau )$ of KP 
tau-functions. Moreover, any solution to the mKP hierarchy
is defined in this way by a pair $(\tau , \phi )$, 
where
$\tau$ is a KP tau-function and $\phi$ is a wave function associated
with it (more precisely, with the Lax operator corresponding to
this tau-function). This story can be further continued by
considering wave functions for the mKP Lax operator 
$L^{\rm mKP}={\cal L}$
corresponding to the solution $(\tau , \hat \tau )$. Namely,
let $S$ be such wave function, then $\bar \tau =S\hat \tau$
is again a KP tau-function, and the pair $(\hat \tau , \bar \tau )$
solves the bilinear equation (\ref{MEq}). Then $S=\bar \tau /\hat \tau$
solves the SchKP equation (\ref{int-schkp}).
Moreover, any solution to the SchKP hierarchy
is defined in this way by a pair $(\hat \tau , S)$, 
where
$\hat \tau$ is a KP tau-function 
and $S$ is a wave function for the mKP Lax operator associated
with the solution $(\tau , \hat \tau )$.  Equivalently, to 
construct a solution to the SchKP hierarchy, one may start
from two solutions to the mKP hierarchy: 
$(\tau , \hat \tau )$ (defined 
by fixing a wave function $\phi$) and $(\tau , \bar \tau )$
(defined by fixing another wave function $\phi '$)
For clarity, all said above can be represented as the following
diagram:
\beq\label{diag11}
\begin{array}{ccc}
\bar \tau  & \stackrel{\bar \phi}\longrightarrow & \hat{\bar \tau}
\\
\uparrow\lefteqn{{\scriptstyle \phi '} } 
&& \uparrow\lefteqn{{\scriptstyle \hat \phi}}
\\ 
\tau &\stackrel{\phi}\longrightarrow &\hat \tau
\end{array}
\eeq
Note that the vertices correspond to solutions of KP, 
edges correspond to solutions to mKP, and the whole square
(more precisely, the diagonal from $\bar \tau$ to $\hat \tau$)
can be associated to a solution to SchKP (more precisely,
two solutions: $S=\bar \tau/\hat \tau$ and $S^{-1}=
\hat \tau/\bar \tau$).

Lastly, we establish a close connection
with multi-component KP hierarchy \cite{DJKM81a,KL93,TT07,Teo11}. 

\begin{theorem}
Let
$\tau({\bf s},{\bf t})$, where ${\bf s}=\{s_1, s_2, s_3\}$,  
$s_i\in \ZZ$ with the condition
$s_1+s_2+s_3=0$, be the tau-function of the multi-component
KP hierarchy restricted 
to the three-component sector, with all other discrete 
and continuous variables frozen. 
Then the pair of functions 
$(\tau(0,0,0),\tau(0,1,-1))$ solves equation (\ref{MEq}) and thus 
provides a solution to the SchKP and HD hierarchies.
\end{theorem}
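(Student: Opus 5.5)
The approach is to start from the integral bilinear identity of the multi-component KP hierarchy, specialize it to three components, and read off equation (\ref{MEq}) by a suitable choice of discrete shifts. In the $N$-component KP hierarchy the tau-function $\tau({\bf s},{\bf t}^{(1)},\dots ,{\bf t}^{(N)})$ satisfies
\beq\label{mcKP}
\sum_{\gamma=1}^{N}\epsilon_{\alpha\gamma}({\bf s})\epsilon_{\beta\gamma}({\bf s}')
\oint_{C_\infty} z^{s_\gamma-s'_\gamma+\delta_{\alpha\gamma}+\delta_{\beta\gamma}-2}
e^{\xi({\bf t}^{(\gamma)}-{\bf t}'^{(\gamma)},z)}
\tau_{\alpha}({\bf s}+{\bf e}_\gamma,{\bf t}-[z^{-1}]_\gamma)
\tau_{\beta}({\bf s}'-{\bf e}_\gamma,{\bf t}'+[z^{-1}]_\gamma)\,dz=0 ,
\eeq
with appropriate sign factors $\epsilon$. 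I would use the simpler, equivalent form in which the index $\alpha,\beta$ is absorbed into ${\bf s}$, namely
\[
\sum_{\gamma}\epsilon_\gamma({\bf s})\epsilon_\gamma({\bf s}')\oint z^{s_\gamma-s'_\gamma-2}e^{\xi({\bf t}^{(\gamma)}-{\bf t}'^{(\gamma)},z)}\tau({\bf s}+{\bf e}_\gamma,{\bf t}-[z^{-1}]_\gamma)\tau({\bf s}'-{\bf e}_\gamma,{\bf t}'+[z^{-1}]_\gamma)dz=0,
\]
valid for all ${\bf s},{\bf s}'$ with $\sum s_i=\sum s'_i+2$. The KP times ${\bf t}$ of the theorem are identified with the first component ${\bf t}^{(1)}$, and ${\bf t}^{(2)},{\bf t}^{(3)}$ are frozen, that is, ${\bf t}^{(2)}={\bf t}'^{(2)}$ and ${\bf t}^{(3)}={\bf t}'^{(3)}$.

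The key steps are the following.

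\emph{Step one.} Setting ${\bf s}={\bf s}'-{\bf e}_1+{\bf e}_1$ in the standard way, with ${\bf s}+{\bf e}_1={\bf s}'$ shifted back, the $\gamma=1$ term alone gives the one-component identity (\ref{MEq1}) for each fixed ${\bf s}$. The remaining terms must then vanish. For $\gamma=2,3$ with frozen times, the integrand is $z^{m}$ times a function independent of $z$, and the residue vanishes whenever the exponent $s_\gamma-s'_\gamma-2\ne -1$. With ${\bf s}={\bf s}'$ this exponent is $-2$, so these terms do vanish. This shows that both $\hat\tau=\tau(0,0,0)$ and $\bar\tau=\tau(0,1,-1)$ are KP tau-functions in ${\bf t}^{(1)}$.

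\emph{Step two.} Choose the discrete data so that the $\gamma=1$ term yields the cross products in (\ref{MEq}). Take ${\bf s}=(-1,1,0)$ and ${\bf s}'=(1,0,-1)$, so that $\sum s-\sum s'=0$; the lattice constraint must be checked carefully in the form used. Then ${\bf s}+{\bf e}_1=(0,1,0)$ and ${\bf s}'-{\bf e}_1=(0,0,-1)$. This is not yet the desired pair, so the choice has to be adjusted. The real requirement is to get two terms whose products are $\tau(0,0,0)\,\tau(0,1,-1)$ and $\tau(0,1,-1)\,\tau(0,0,0)$. Natural candidates come from $\gamma=1$ together with a $\gamma=2$ or $\gamma=3$ term in which the power of $z$ is exactly $z^{-1}$. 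The residue of that term is then a non-integral product, namely $\tau(0,0,0)\tau(0,1,-1)$ at shifted ${\bf t}^{(1)}$ of zero. I expect instead to pick ${\bf s}$ and ${\bf s}'$ so that the $\gamma=2$ and $\gamma=3$ contributions combine, through the relation $s_2-s'_2=\pm1$, into boundary terms that cancel against each other. In that case only the $\gamma=1$ term survives, and with ${\bf s}+{\bf e}_1$ and ${\bf s}'-{\bf e}_1$ ranging over the two points $(0,0,0)$ and $(0,1,-1)$ it produces one of the products. The symmetric combination then arises by adding the identity with the roles of the two lattice points exchanged. Alternatively, it arises because the $\gamma=2,3$ residues reproduce the other ordering after using the Fay-type relation between $\tau$ at neighbouring lattice points.

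\emph{Step three.} Compare the result with (\ref{MEq}), paying attention to the signs $\epsilon_\gamma$. The plus sign in (\ref{MEq}), as opposed to a minus, is exactly what the sign factors must produce. By the first theorem, this establishes that $S=\tau(0,1,-1)/\tau(0,0,0)$ solves SchKP and, through the Lax–Sato reformulation, HD.

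The main obstacle is step two: bookkeeping the discrete shifts and the sign cocycle so that the two cross terms appear with a relative plus sign. My fallback is a more structural argument. Equation (\ref{MEq}) is equivalent, given the first theorem's remark, to the statement that $\hat\tau+c\bar\tau$ is a KP tau-function for all $c$. In the fermionic picture, $\tau(0,0,0)+c\,\tau(0,1,-1)=\langle 0|e^{H({\bf t}^{(1)})}(1+c\,\psi^{(2)*}_0\psi^{(3)}_0)\,g|0\rangle$ up to signs. Since $1+c\,\psi^{*(2)}\psi^{(3)}=e^{c\psi^{*(2)}\psi^{(3)}}$ is a group-like element, the result is again a one-component KP tau-function in ${\bf t}^{(1)}$. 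Expanding the KP identity for this tau-function at order $c$ then yields (\ref{MEq}) directly.
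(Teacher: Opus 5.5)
Your direct route has a genuine gap at step two, and the mechanism you anticipate there cannot work. You list the right ingredients: a $\mu=1$ term (your $\gamma=1$) plus one $\mu\in\{2,3\}$ term carrying exactly $z^{-1}$, and exchanging the two lattice points. But you misidentify how they fit together.

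No single specialization of (\ref{s3}) can reduce to the $\mu=1$ term alone with $\hat\tau$, $\bar\tau$ in it. The reason is that the one-ordering integral $\oint_{C_\infty}e^{\xi({\bf t}-{\bf t}',z)}\hat\tau({\bf t}-[z^{-1}])\bar\tau({\bf t}'+[z^{-1}])\,dz$ is not zero. At ${\bf t}'={\bf t}$ it equals $2\pi i(\hat\tau\,\partial_x\bar\tau-\bar\tau\,\partial_x\hat\tau)=2\pi i\,\hat\tau^2S_x$. Nor is the $z^{-1}$ residue of a $\mu=2,3$ term equal to $\tau(0,0,0)\tau(0,1,-1)$: once the $\mu=1$ term contains $\hat\tau,\bar\tau$, the other terms contain $\tau$ at ${\bf e}_1-{\bf e}_3$ and ${\bf e}_2-{\bf e}_1$.

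The missing step is the paper's. Take both matrix indices in (\ref{s3}) equal to $1$, freeze ${\bf t}_2,{\bf t}_3$, and use the two choices $({\bf s},{\bf s}')=({\bf e}_2-{\bf e}_3,{\bf 0})$ and $({\bf s},{\bf s}')=({\bf 0},{\bf e}_2-{\bf e}_3)$. In the first, the $\mu=2$ term carries $z^{-1}$ and the $\mu=3$ term carries $z^{-3}$; in the second it is the other way round. Each identity is therefore one ordering of the integrand of (\ref{MEq}) plus a single residue term. That residue term is the same product $\tau({\bf e}_1-{\bf e}_3,{\bf t})\,\tau({\bf e}_2-{\bf e}_1,{\bf t}')$ in both, with sign factors $\epsilon_{12}({\bf e}_2-{\bf e}_3)\epsilon_{12}({\bf 0})=-1$ and $\epsilon_{13}({\bf 0})\epsilon_{13}({\bf e}_2-{\bf e}_3)=+1$. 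Adding the two identities cancels the residues and gives (\ref{MEq}); this is (\ref{s4})+(\ref{s4a}). So the cancellation is between two identities, not within one. The sign that matters sits on the boundary terms, not on the $\mu=1$ terms, whose factor is $\epsilon_{11}\epsilon_{11}=1$.

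Separately, your ``simplified form'' has inconsistent charge bookkeeping: with $\sum s_i=\sum s'_i+2$, the points ${\bf s}+{\bf e}_\gamma$ and ${\bf s}'-{\bf e}_\gamma$ cannot both have zero total charge. Step one then uses ${\bf s}={\bf s}'$, violating that constraint. Its conclusion is still right, via (\ref{s3}) with equal matrix indices and ${\bf s}={\bf s}'$, where the $\mu\neq 1$ exponents are $-2$.

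Your fermionic fallback, on the other hand, is a correct and genuinely different proof. Write $\langle{\bf e}_2-{\bf e}_3|=\pm\langle{\bf 0}|X$ with $X$ a product of one component-2 and one component-3 mode; then $X^2=0$ and $X$ commutes with $H({\bf t}_1)$. Absorbing the frozen times into $g$, $\hat\tau+c\bar\tau=\langle{\bf 0}|e^{H({\bf t}_1)}e^{\pm cX}g|{\bf 0}\rangle$ is the $n$-component tau-function of the group element $e^{\pm cX}g$ at ${\bf s}={\bf 0}$. Hence it is a KP tau-function in ${\bf t}_1$ by Proposition~\ref{proposition:1comp}. Its KP identity is quadratic in $c$, and the coefficients at $c^0$, $c^2$, $c^1$ give the KP property of $\hat\tau$, that of $\bar\tau$, and (\ref{MEq}), respectively.

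The price is that you need the representation-theoretic characterization: every solution of (\ref{s3}) has the form $\langle{\bf s}|e^{H({\bf t})}g|{\bf 0}\rangle$ with $g$ in the completed $GL_\infty$. The paper needs only the bilinear identity (\ref{s3}) and two residue computations. What your route buys is a structural explanation of the property underlying (\ref{MEq}), namely that every linear combination of $\hat\tau$ and $\bar\tau$ is again a KP tau-function. It also applies verbatim to any pair of lattice points differing by ${\bf e}_\beta-{\bf e}_\gamma$ with $\beta,\gamma$ distinct from the active component.
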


The outline of the paper is as follows. 
Section 2 is devoted to the well known KP hierarchy.
We consider it both within the framework 
of bilinear formalism and the Lax-Sato approach, and provide a proof
of their equivalence. All this is a basic preliminary stuff for 
what follows. A similar treatment of the mKP hierarchy
is the subject of Section 3. The main idea here (which is not new),
and also a useful technical tool, 
is realization of the mKP hierarchy as a chain of B\"acklund-Darboux
transformations of solutions to the KP one.
Section 4 is devoted to the SchKP and HD hierarchies.
First, the SchKP hierarchy is obtained from considering 
B\"acklund-Darboux transformations of the mKP hierarchy
(Section 4.1). Second, the bilinear approach to SchKP via a pair
of KP tau-functions is developed. Third, it is shown that the
reformulation in the framework of the Lax-Sato approach 
provides the HD hierarchy. In Section 5 the embedding of 
the SchKP hierarchy into the multi-component hierarchy is
discussed. Section 6 contains some concluding remarks.

\section{The KP hierarchy}

The set of independent variables 
is ${\bf t}=\{t_1, t_2, t_3, \ldots \}$.
The variable $t_1$ can be identified with the space variable $x$,
so hereafter we put $t_1=x$. 
All other variables are interpreted as times associated with
an infinite number of commuting flows.

In the bilinear formalism, 
the universal dependent variable is tau-function
$\tau ({\bf t})$.
The generating bilinear equation for the KP tau-function
is of the form
\beq\label{kp1}
\oint_{C_{\infty}}e^{\xi ({\bf t}-{\bf t}', z)}
\tau ({\bf t}-[z^{-1}])\tau ({\bf t}'+[z^{-1}])\, dz =0,
\eeq
where we the standard notation (\ref{not}) is used.
The contour $C_{\infty}$ is a big circle of radius $R\to \infty$
around $\infty$.
Equation (\ref{kp1}) is valid for all ${\bf t}$ and ${\bf t}'$.
The simplest possible solution (the trivial one) is $\tau ({\bf t})=1$.
Note also that if $\tau ({\bf t})$ is a solution of (\ref{kp1}), then
$e^{\ell ({\bf t})}\tau ({\bf t})$, where $\ell ({\bf t})$ is 
an arbitrary linear function of the times, is a solution, too.
Two tau-functions that differ by such a factor 
are said to be equivalent.

Let us mention one of the most important corollaries of
equation (\ref{kp1}).
Setting 
$$
{\bf t}-{\bf t'}=[a_1^{-1}]+[a_2^{-1}]+[a_3^{-1}],
$$
where $a_1, a_2, a_3 \in \CC$ are three different points, 
we have
$$
e^{\xi ({\bf t}-{\bf t}', z)}=\frac{a_1 \, a_2 \, 
a_3}{(a_1-z)(a_2-z)(a_3-z)},
$$
and the integral in (\ref{kp1}) can be calculated by an elementary
residue calculus. The result is:
\beq\label{kp1b}
\begin{array}{l}
(a_1-a_2)\tau ({\bf t}-[a_1^{-1}]-[a_2^{-1}])
\tau ({\bf t}-[a_3^{-1}])
\\ \\ 
\phantom{aaaaaaaaaaa}+\, 
(a_2-a_3)\tau ({\bf t}-[a_2^{-1}]-[a_3^{-1}])
\tau ({\bf t}-[a_1^{-1}])
\\ \\ 
\phantom{aaaaaaaaaaaaaaaaaaaaaa}+\,
 (a_3-a_1)\tau ({\bf t}-[a_1^{-1}]-[a_3^{-1}])
\tau ({\bf t}-[a_2^{-1}]) =0.
\end{array}
\eeq
This is the famous equation first obtained by Miwa \cite{Miwa82}.
As is proved in \cite{TT95},
it is in fact equivalent to the entire KP hierarchy defined
by the integral equation (\ref{kp1}) (see also \cite{Shigyo13} for the
direct proof). 

To write down such equations in a compact form, it is convenient
to introduce some short-hand notation. Set
\beq\label{not1}
\begin{array}{l}
\tau ({\bf t}-[a_j^{-1}])=\tau^j, \qquad
\tau ({\bf t}-[a_j^{-1}]-[a_k^{-1}])=\tau^{jk}, \quad \mbox{etc},
\\ \\
\tau ({\bf t}+[a_j^{-1}])=\tau_j, \qquad
\tau ({\bf t}+[a_j^{-1}]+[a_k^{-1}])=\tau_{jk}, \quad \mbox{etc}.
\end{array}
\eeq
Similar notation will be used for other functions of ${\bf t}$ 
introduced below (Baker-Akhiezer functions, wave functions, etc).
This notation allows us to rewrite equation (\ref{kp1b}) in
the following compact form:
\beq\label{kp1c}
a_{12}\tau^{12}\tau^3 +a_{23}\tau^{23}\tau^1 +
a_{31}\tau^{13}\tau^2=0,
\eeq
or, after an overall shift of ${\bf t}$,
\beq\label{kp1d}
a_{12}\tau_{12}\tau_3 +a_{23}\tau_{23}\tau_1 +
a_{31}\tau_{13}\tau_2=0
\eeq
(here and below $a_{ij}\equiv a_i-a_j$). 

More generally, fix an integer $m\geq 2$,
$m+1$ distinct points $a_i\in \CC$, $m-1$ 
distinct points $b_i\in \CC$ and make 
the substitution
\beq\label{miwa1}
{\bf t}-{\bf t}'=\sum_{i=1}^{m+1}[a_i^{-1}]-
\sum_{i=1}^{m-1}[b_i^{-1}].
\eeq
Using residue calculus, we convert
(\ref{kp1}) into the following equation:
\beq\label{g2b}
\begin{array}{l}
\displaystyle{
\sum_{k=1}^{m+1}(-1)^{k-1}
\Delta_m (a_1, \ldots , \hat a_k, \ldots , a_{m+1})
\prod_{i=1}^{m-1}(b_i-a_k)}
\\ \\
\phantom{aaaaaaaaaaaaaaaaaa}
\displaystyle{\times \,
\tau \Bigl ({\bf t}-\sum_{j=1, \neq k}^{m+1}[a_j^{-1}]\Bigr )
\tau \Bigl ({\bf t}-[a^{-1}_k]- 
\sum_{j=1}^{m-1}[b_j^{-1}]\Bigr )=0}
\end{array}
\eeq
(equation ({\ref{kp1b}) is its particular case for $m=2$, $b_1=\infty$).
Here
\beq\label{vandermonde}
\Delta_m (a_1, \ldots , a_m)=\det_{1\leq i,j \leq m}(a_i^{m-j})=
\prod_{i<j}^m (a_i-a_j)
\eeq
is the Vandermonde determinant, and $\hat a_k$ means that 
$a_k$ is omitted. Equations (\ref{g2b}) can be written in the form
which resembles Pl\"ucker relations. To this end, introduce the
notation
\beq\label{pl1}
\tau \bigl [a_1, \ldots , a_m\bigr ]=
\Delta_m (a_1, \ldots , a_{m})\tau \Bigl ({\bf t}-
\sum_{j=1}^m [a_j^{-1}]\Bigr ).
\eeq
Clearly, $\tau \bigl [a_1, \ldots , a_m\bigr ]$ 
is antisymmetric with respect to
any permutation $a_i \leftrightarrow a_k$. In this notation,
equation (\ref{g2b}) acquires the form
\beq\label{g2c}
\sum_{k=1}^{m+1} (-1)^{k-1}
\tau \bigl [a_1, \ldots , \hat a_k, \ldots ,a_{m+1}\bigr ]\,
\tau \bigl [a_k, b_1, \ldots , b_{m-1}\bigr ]=0.
\eeq

Other important objects are 
Baker-Akhiezer\footnote{This name is usually used 
in the context of algebraic-geometrical solutions to the KP hierarchy 
related to Riemann surfaces of finite genus.
Here we extend this name to a broader class of solutions.} (BA) function
$\psi ({\bf t}, z)$ and its adjoint (dual) $\psi^* ({\bf t}, z)$.
For any fixed values of the times, they are functions of
a complex variable $z$ which is called {\it spectral parameter}.
For the sake of brevity, we will sometimes refer to both
$\psi ({\bf t}, z)$ and $\psi^* ({\bf t}, z)$ as BA functions,
or simply $\psi$-functions.
In general, these functions, as functions of $z$,
are assumed to be holomorphic in a sufficiently small
neighborhood ${\sf U}_{\infty}$ of $\infty$ 
(i.e., for $|z|>R$ with some sufficiently
large $R$) except maybe the very point $z=\infty$, where an essential
singularity is allowed.
In this neighborhood they can be
expressed through the tau-function as follows:
\beq\label{kp4}
\begin{array}{l}
\displaystyle{
\psi ({\bf t}, z)=e^{\xi ({\bf t}, z)}\, 
\frac{\tau ({\bf t}-[z^{-1}])}{\tau ({\bf t})},}
\\ \\
\displaystyle{
\psi^* ({\bf t}, z)=e^{-\xi ({\bf t}, z)}\, 
\frac{\tau ({\bf t}+[z^{-1}])}{\tau ({\bf t})}.}
\end{array}
\eeq
These remarkable formulas were obtained in \cite{DJKM83,JM83}.

The expansions of the BA functions as $z\to \infty$ are
\beq\label{kp5}
\begin{array}{l}
\displaystyle{
\psi ({\bf t}, z)=e^{\xi ({\bf t}, z)}\, 
\Bigl (1+\xi_1({\bf t})z^{-1}+ \xi_2({\bf t})z^{-2}+\ldots \, \Bigr ),}
\\ \\
\displaystyle{
\psi^* ({\bf t}, z)=e^{-\xi ({\bf t}, z)}\, 
\Bigl (1+\xi^*_1({\bf t})z^{-1}+ \xi^*_2({\bf t})z^{-2}+\ldots \, 
\Bigr ).}
\end{array}
\eeq
The coefficients $\xi_k({\bf t})$, $\xi^*_k({\bf t})$ can be
expressed through logarithmic derivatives of the tau-function
with respect to the $t_k$'s but we do not need these formulas here.
Their dependence on the times is determined by 
a chosen solution of (\ref{kp1}). It is explicitly specified
in terms of partial differential equations given below.
In terms of the BA functions, the bilinear equation (\ref{kp1})
acquires the form
\beq\label{kp1a}
\oint_{C_{\infty}}
\psi ({\bf t}, z)\psi^* ({\bf t'}, z)\, dz =0.
\eeq
It holds for all ${\bf t}$, ${\bf t'}$.

\begin{remark}
\label{remark:psi}
The latter equation for the $\psi$-functions 
can serve as a starting point of the theory. 
In particular, it implies 
existence of the tau-function, i.e., a function $\tau ({\bf t})$
such that the $\psi$-functions are expressed through it by means
of formulas (\ref{kp4}). The proof is given in \cite{DJKM83}.
\end{remark}

It is worth noting that for some important classes
of solutions the BA functions
can be analytically continued outside ${\sf U}_{\infty}$
as analytic functions on $\CC \setminus {\sf S}$, where
${\sf S}\subset \CC$ is a set of points of nonzero 
co-dimension (which is $2$ for a number of isolated points or
$1$ for lines). (The well known examples are soliton 
and rational solutions.) To avoid unnecessary analytical 
complications in this work, we will have in mind 
the classes of solutions for which such analytic
continuation is possible.

\begin{remark}\label{remark:kp2}
For completeness,
let us say some words about the general case, when
the BA functions can not be analytically continued from
${\sf U}_{\infty}$ to bigger
domains of the complex plane. In this case 
$\p_{\bar z}\psi ({\bf t}, z)$ and $\p_{\bar z}\psi^* ({\bf t}, z)$
may be nonzero in some compact 2D domain ${\sf D}\subset \CC$,
and the BA functions (which are functions of both $z$ and 
$\bar z$ for $z\in {\sf D})$ are characterized by certain integral
or integro-diffrential equations which were suggested in
\cite{ZM85} and are known in the literature as non-local
$\bar \p$-problem. They are:
\beq\label{dbar}
\begin{array}{l}
\displaystyle{
\p_{\bar z}\psi ({\bf t}, z)=\int_{\CCC}
K(z, \zeta )\psi ({\bf t}, \zeta )\, d^2 \zeta,
}
\\ \\
\displaystyle{
\p_{\bar z}\psi^* ({\bf t}, z)=-\int_{\CCC}
\psi^* ({\bf t}, \zeta )
K(\zeta , z)\, d^2 \zeta.
}
\end{array}
\eeq
The kernel $K(z, \zeta )$ is assumed to have a compact support
with respect to the both arguments, i.e., $K(z, \zeta )=0$ if
$z$ or $\zeta$ belongs to ${\sf U}_{\infty}$. This condition allows
one to require the holomorphic in $z$ asymptotics as $z\to \infty$
of the form (\ref{kp5}). Under certain conditions imposed on the
kernel these integro-differential equations are known to have 
nontrivial solutions, and the linear space of solutions is in general
one-dimensional.
\end{remark}

It is instructive to think of the BA function as the result 
of the action of a ``dressing operator'' $W$ to the 
simplest possible 
BA function $e^{\xi ({\bf t}, z)}$ corresponding to 
the trivial solution. The operator $W$ is a pseudo-differential
operator of order 0 of the form
\beq\label{W}
W=1+ \xi_1 \p_x^{-1} + \xi_2 \p_x^{-2} +\ldots ,
\eeq
where $\p_x =\p /\p x$ (recall that
we identify $x=t_1$) and 
the coefficient functions $\xi_k=\xi_k({\bf t})$ 
are the same as in the first formula in (\ref{kp5}). 
Clearly, we can write
\beq\label{kp6}
\psi ({\bf t}, z)=W e^{\xi ({\bf t}, z)},
\eeq
where it is implied that $\p_x^{-1}$ formally acts on
$e^{xz}$ as $\p_x^{-1}e^{xz}=z^{-1}e^{xz}$.

The dressing operator allows one to introduce the
Lax operator $L$ of the KP hierarchy by ``dressing'' of the
$\p_x$:
\beq\label{Lax}
L =W \p_x W^{-1} = \p_x + u_1\p_x^{-1}+u_2\p_x^{-2}+\ldots \, .
\eeq
It is a pseudo-differential operator of order 1.
The coefficient functions $u_i$ can be
expressed through logarithmic derivatives of the tau-function.
Along with the Lax operator, we introduce the purely differential
operators
\beq\label{Bk}
B_k=(L^k)_{\geq 0}, \qquad k\geq 1,
\eeq
where for any subset $\SSS \subset \ZZ$ 
the operation $(\ldots )_{\s}$ applied to any 
pseudo-differential operator $\displaystyle{
P=\sum_{j\in \z}p_j \p_x^j}$ is defined as
$$
(P)_{\s}=\Bigl (\sum_j p_j \p_x^j\Bigr )_{\s}=\sum_{j\in \s}p_j\p_x^j.
$$
For example, $B_1=\p_x$, $B_2=\p_x^2 +2u_1$.

Remarkably, the bilinear relation (\ref{kp1a}) for the 
$\psi$-functions alone 
implicitly defines possible
dynamics in the space of the coefficients $u_i$ of the Lax operator
making them certain functions of the times: $u_i=u_i({\bf t})$.
This dynamics can be described by partial differential equations.

\begin{theorem}(\cite{DJKM83}) \label{theorem:kp1}
Let $\psi ({\bf t}, z)$, $\psi^* ({\bf t}, z)$ be functions of the form (\ref{kp5}). If the bilinear relation (\ref{kp1a}) is satisfied, then
there exists a dressing
operator $W$ of the form (\ref{W}) such that
$\psi ({\bf t}, z)=W e^{\xi ({\bf t}, z)}$ satisfying the evolution
equations
\beq\label{kp7c}
\p_{t_k}W=-(L^k)_{<0}W \;\;
\mbox{for any $k\geq 1$}.
\eeq
\end{theorem}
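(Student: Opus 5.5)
The plan is to follow the standard Date--Jimbo--Kashiwara--Miwa route, which rests on a duality lemma for pseudo-differential operators. \emph{Lemma:} if $P=\sum_j p_j\p_x^j$ and $Q=\sum_j q_j\p_x^j$ are pseudo-differential operators in $x$, then
\[
\mathrm{res}_z\Bigl(P e^{xz}\cdot Q e^{-x'z}\Bigr)\Big|_{x'=x}\ \text{and its derivatives in } x
\]
recover $(PQ^*)_{<0}$, where $Q^*$ is the formal adjoint. In particular, $\oint P e^{xz}\,Q e^{-x'z}dz=0$ for all $x,x'$ implies $(PQ^*)_{<0}=0$. Here $\oint$ means $\frac{1}{2\pi i}\oint_{C_\infty}$, i.e. the coefficient of $z^{-1}$. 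I will prove this by expanding in powers of $x-x'$ and matching coefficients. Only $x=t_1$ is treated as the operator variable; the other times are parameters.

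\textbf{Step 1: dressing operators.} Given $\psi$ of the form (\ref{kp5}), define $W$ by (\ref{W}) with the same coefficients $\xi_k$, so that (\ref{kp6}) holds. Similarly, set $W^*=1+\xi^*_1\p_x^{-1}+\dots$, which gives $\psi^*=W^*e^{-\xi}$. Now apply (\ref{kp1a}) with ${\bf t}'$ differing from ${\bf t}$ only in $x'$. The lemma then gives $(W (W^*)^*)_{<0}=0$. Both $W$ and $(W^*)^*$ are of the form $1+O(\p^{-1})$, so their product is exactly $1$. Hence $(W^*)^*=W^{-1}$, which identifies $\psi^*=(W^{-1})^*e^{-\xi}$.

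\textbf{Step 2: evolution equations.} Differentiate (\ref{kp1a}) with respect to $t_k$ at ${\bf t}$. Since $\p_{t_k}e^{\xi}=z^ke^{\xi}=\p_x^k e^\xi$, we get
\[
\p_{t_k}\psi=(\p_{t_k}W+W\p_x^k)e^{\xi}=(\p_{t_k}W\,W^{-1}+L^k)\psi .
\]
Set $M_k=\p_{t_k}W\,W^{-1}+L^k$, an operator of order at most $k$, and take ${\bf t}'$ differing from ${\bf t}$ only in $x'$. Then $\oint (M_k\psi)({\bf t},z)\,\psi^*({\bf t}',z)\,dz=0$. Writing $M_k\psi=M_kW e^{xz}$ and using Step 1, the lemma yields $(M_kWW^{-1})_{<0}=(M_k)_{<0}=0$. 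So $M_k$ is a differential operator, and $\p_{t_k}W\,W^{-1}=M_k-L^k$.

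Next I use that $\p_{t_k}W\,W^{-1}$ has order $\le -1$, because $W$ starts with $1$. Taking the nonnegative part gives $M_k=(L^k)_{\ge0}$, and therefore $\p_{t_k}W=-(L^k)_{<0}W$. This is (\ref{kp7c}). For $k=1$ it is consistent because $L_{<0}W=W\p_x-\p_xW=-\p_xW$ in the sense $\p_{t_1}=\p_x$.

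\textbf{Main obstacle.} The main difficulty is the rigorous form of the lemma. One must justify that the vanishing of the residue pairing for all $x,x'$, with the other times fixed, forces $(PQ^*)_{<0}=0$. I would prove it by checking the identity $\mathrm{res}_z\bigl(\p_x^m e^{xz}\cdot\p_{x'}^n e^{-x'z}\bigr)$ expanded in $x-x'$ on monomials $\p^m,\p^n$, then extending by linearity and the Leibniz rule. Care is also needed with convergence of $\psi$ near $\infty$, handled formally in $z^{-1}$.
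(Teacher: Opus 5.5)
Your proposal is correct and follows essentially the same route as the paper: the DJKM duality lemma is first used to get $\psi^*=(W^{\dagger})^{-1}e^{-\xi}$, then applied to the $t_k$-derivative of (\ref{kp1a}). Showing that $M_k=\p_{t_k}W\,W^{-1}+L^k$ is purely differential is the same statement as the paper's $R=\p_{t_k}W+(L^k)_{<0}W=0$, since $RW^{-1}=(M_k)_{<0}$; the one small slip is that reproducing the expansion (\ref{kp5}) needs $W^*=1+\sum_j\xi^*_j(-\p_x)^{-j}$, as in the paper's lemma, rather than $\p_x^{-j}$, which changes nothing else in the argument.
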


\noindent
{\it Proof.} The proof given below is borrowed from \cite{DJKM83}.
The following technical lemma is 
useful\footnote{It is a simplified version of Lemma 1.1 from
\cite{DJKM83}.}:

\begin{lemma} \label{lemma:technical}
Let $P$ and $Q$ be monic pseudo-differential operators
of the form
$$
P=1+\sum_{j\geq 1}p_j(x) \p_x^{-j}, \quad
Q=1+\sum_{j\geq 1}q_j(x) (-\p_x)^{-j}.
$$
Then imposing the condition
\beq\label{lem}
\oint_{C_{\infty}}(P_x e^{xz})\, (Q_{x'}e^{-x'z}) \, dz =0
\eeq
for all $x,x'$ implies that
\beq\label{lem1}
\Bigl (P\, Q^{\dag}\Bigr )_{<0}=0.
\eeq 
(The notation
$P_x$, $Q_{x'}$ in (\ref{lem}) means that the operators act on functions 
of $x,x'$, respectively.)
\end{lemma}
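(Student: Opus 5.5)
The plan is to reduce (\ref{lem}) to a residue identity for the symbols of the two operators, and then to test that identity on a suitable family of functions of $x'$. Write $P_x e^{xz}=\hat p(x,z)e^{xz}$ with $\hat p(x,z)=1+\sum_{j\geq 1}p_j(x)z^{-j}$. Similarly, $Q_{x'}e^{-x'z}=\hat q(x',z)e^{-x'z}$ with $\hat q(x',z)=1+\sum_{j\ge1}q_j(x')z^{-j}$, since $(-\p_{x'})^{-j}e^{-x'z}=z^{-j}e^{-x'z}$. The condition (\ref{lem}) then reads
\[
\mathrm{res}_{z}\Bigl(\hat p(x,z)\,\hat q(x',z)\,e^{(x-x')z}\Bigr)=0
\]
for all $x,x'$, where $\mathrm{res}_z$ denotes the coefficient of $z^{-1}$ in the formal expansion. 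The key algebraic fact I will use is the standard one: for pseudo-differential operators $A=\sum a_i(x)\p_x^i$ and $B=\sum b_j(x)\p_x^j$,
\[
\mathrm{res}_z\bigl((Ae^{xz})(Be^{-x'z})\bigr)=\sum_{k}\p_x^{\,k}\text{-type terms of }(AB^{\dag})\Big|_{x'}\,(x-x')^k/k! ,
\]
and in particular $(AB^{\dag})_{<0}$ is determined by, and determines, this residue as a function of $(x,x')$.

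Concretely, I will establish the identity
\[
\mathrm{res}_z\bigl((Ae^{xz})(B_{x'}e^{-x'z})\bigr)=\sum_{m\geq 0}c_m(x)\frac{(x'-x)^m}{m!},
\]
where $(AB^{\dag})_{<0}=\sum_{m\ge0}c_m(x)\p_x^{-m-1}$. To prove it, I write $B_{x'}e^{-x'z}=(B^{\dag})^{*}$ acting on $e^{-x'z}$, and use $\mathrm{res}_z\bigl(z^{n}e^{(x-x')z}\bigr)=0$ for $n\ge0$. For $n=-m-1<0$ the formal residue of $z^{-m-1}e^{(x-x')z}$ equals $(x-x')^m/m!$ (up to sign conventions), and one checks this against the kernel of $\p^{-m-1}$. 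An equivalent and cleaner route is to test against an arbitrary function: multiply (\ref{lem}) by $f(x')$, integrate in $x'$ formally using $\int f(x')e^{-x'z}dx'$, and interpret the $x'$-operator via its adjoint. This gives $\mathrm{res}_z\bigl(P e^{xz} z^{-?}\ldots\bigr)$, which reproduces $(PQ^{\dag})_{<0}f$ up to Taylor data. Matching the Taylor coefficients in $(x'-x)$ then shows that all $c_m$ vanish.

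I would therefore organize the argument in three steps. First, reduce to symbols. Second, prove the kernel formula for $\mathrm{res}_z(\hat a(x,z)e^{xz}\hat b(x',z)e^{-x'z})$ by expanding $\hat b(x',z)$ in Taylor series around $x'=x$. This expansion is exactly where $Q^{\dag}$ appears: moving $x'$-dependence to $x$ produces derivatives, which become the adjoint composition. Third, conclude from vanishing for all $x'$ that every $c_m(x)=0$, hence (\ref{lem1}).

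The main obstacle will be the bookkeeping in the second step, namely showing that the Taylor coefficients in $(x'-x)$ of the residue are precisely the negative-order coefficients of $PQ^{\dag}$, with the correct signs coming from $(-\p)^{-j}$ and the adjoint. I would handle this by first checking monomials $P=p(x)\p^{-i}$ and $Q=q(x)(-\p)^{-j}$. For these, $Q^{\dag}=\p^{-j}q$, and $\mathrm{res}_z\bigl(z^{-i-j}e^{(x-x')z}\bigr)\,p(x)q(x')$ can be compared directly with the kernel of $p\,\p^{-i-j}q$. The general case then follows by bilinearity.
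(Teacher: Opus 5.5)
Your plan is correct and is essentially the standard argument behind Lemma~1.1 of \cite{DJKM83}, which the paper cites instead of proving. In that argument the residue is the formal kernel of $(PQ^{\dag})_{<0}$: it is checked on monomials $p\,\p_x^{-i}$, $q\,(-\p_x)^{-j}$, where it equals $p(x)q(x')(x-x')^{i+j-1}/(i+j-1)!$ for $i+j\geq 1$ and vanishes for $i=j=0$, and then extended bilinearly, since each power of $x-x'$ receives only finitely many contributions. One harmless sign slip: the correct expansion is $\sum_{m\geq 0}c_m(x)(x-x')^m/m!$, not in powers of $x'-x$ (e.g.\ $\mathrm{res}_z\, z^{-2}e^{(x-x')z}=x-x'$), and this still forces every $c_m=0$.
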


\noindent
In (\ref{lem1}), $Q^{\dag}$ is the conjugated 
operator. The conjugation $(\ldots )^{\dag}$ of differential
and pseudo-differential operators is
defined as $(f(x)\circ \p_x )^{\dag}=
-\p_x \circ f(x)$, or, more generally,
if $\displaystyle{P=\sum_{k}p_k \circ \p_x^k}$, then 
$\displaystyle{
P^{\dag}_k=\sum_k(-\p_x)^k \circ p_k .}
$
For the proof of the lemma see \cite{DJKM83} (Lemma 1.1).

Set
$$
\psi ({\bf t}, z)=W e^{\xi ({\bf t}, z)},\quad
\psi^* ({\bf t}, z)=V e^{-\xi ({\bf t}, z)}
$$
where both $W$ and $V$ are of the form (\ref{W}). Plugging this
into (\ref{kp1a}) with $t_k=t'_k$ for $k\geq 2$, we have:
$$
\oint_{C_{\infty}} (W_x e^{xz})\,
(V_{x'} e^{-x'z})\, dz =0.
$$
Now we can apply Lemma 
\ref{lemma:technical} with $P=W$, $Q=V$ and conclude that
$(W\, V^{\dag})_{<0}=0$. Since the both operators are monic 
and contain non-positive powers of $\p_x$ only, this means that
$W\, V^{\dag}=1$, i.e., 
\beq\label{V}
V=(W^{\dag})^{-1}.
\eeq
Next, define the operators $L$ and $B_k$ as in (\ref{Lax}) and
(\ref{Bk}). Then (\ref{kp1a}) implies:
\beq\label{kp11}
\oint_{C_{\infty}}  (\p_{t_n}-B_n  )
\psi ({\bf t}, z)\psi^* ({\bf t'}, z)\, dz =0.
\eeq
We have, using (\ref{kp6}) and (\ref{Lax}):
$$
(\p_{t_n}-B_n  )\psi ({\bf t}, z)=
(\p_{t_n}W +W\p_x^n -B_n W)e^{\xi ({\bf t}, z)}
$$
$$
= (\p_{t_n}W +L^n W  -B_n W)e^{\xi ({\bf t}, z)}
=(\p_{t_n}W +(L^n)_{<0} W)e^{\xi ({\bf t}, z)}.
$$
Again, putting $t_k=t_k'$ for $k\geq 2$, we rewrite (\ref{kp11}) as
$$
\oint_{C_{\infty}}  
(R e^{xz })\, ((W^{\dag})^{-1}e^{-x'z})
\, dz =0,
$$
where $R$ is the operator
$
R=\p_{t_n}W +(L^n)_{<0} W
$.
By Lemma \ref{lemma:technical}, we 
conclude that $(R\, W^{-1})_{<0}=0$.
Clearly, $R$ contains strictly negative powers of $\p_x$ only 
and the same is true for the operator
$R\, W^{-1}$, i.e., $(R\, W^{-1})_{<0}=R\, W^{-1}=0$.
Since $W$ is a monic operator, it follows from this that $R=0$,
which proves (\ref{kp7c}).
\square

\begin{corollary} \label{corollary:kp1}
The function
$\psi ({\bf t}, z)$ is an eigenfunction of the Lax operator $L$
(\ref{Lax})
with eigenvalue $z$,
\beq\label{kp7}
L\psi ({\bf t}, z)=z\psi ({\bf t}, z),
\eeq
and for any $z\in \CC$ it satisfies the differential equations
\beq\label{kp7a}
\p_{t_k}\psi ({\bf t}, z)=B_k \psi ({\bf t}, z) \;\;
\mbox{for any $k\geq 1$},
\eeq
where the $k$-th order differential operators $B_k$ of the form
$\displaystyle{
B_k =\p_x^k +\sum_{j=0}^{k-2}b_k \p_x^j}
$
are given by equation (\ref{Bk}). 
\end{corollary}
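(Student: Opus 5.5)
This corollary follows from Theorem \ref{theorem:kp1} together with the definitions (\ref{kp6}), (\ref{Lax}) and (\ref{Bk}). The plan is to verify the eigenvalue equation directly from the dressing structure, and then to turn the evolution equations for $W$ into evolution equations for $\psi$.

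\emph{Eigenvalue equation.} We have $L\psi = W\p_x W^{-1} W e^{\xi({\bf t},z)} = W\p_x e^{\xi({\bf t},z)}$. Since $\p_x e^{\xi({\bf t},z)} = z e^{\xi({\bf t},z)}$, and $z$ is a constant that commutes with $W$, this gives $L\psi = zWe^{\xi} = z\psi$. Here one only needs the convention that pseudo-differential operators act on $e^{xz}$ termwise, $\p_x^{j}e^{xz}=z^{j}e^{xz}$ for all $j\in\ZZ$. This convention is compatible with composition, so $(W\p_xW^{-1})W e^{xz}=W\p_x e^{xz}$ holds as an identity of formal series in $z^{-1}$ multiplied by $e^{\xi}$.

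\emph{Evolution equations.} Differentiate $\psi=We^{\xi({\bf t},z)}$ with respect to $t_k$. Since $\p_{t_k}e^{\xi}=z^k e^{\xi}=\p_x^k e^{\xi}$, we get
$$
\p_{t_k}\psi=(\p_{t_k}W)e^{\xi}+W\p_x^k e^{\xi}.
$$
Using (\ref{kp7c}) and writing $W\p_x^k=L^kW$, this becomes $\bigl(-(L^k)_{<0}+L^k\bigr)We^{\xi}=(L^k)_{\geq 0}\psi=B_k\psi$. This is essentially the computation in the proof of Theorem \ref{theorem:kp1}, run in reverse.

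\emph{Form of $B_k$.} It remains to check the stated form of $B_k$. From (\ref{Lax}), $L^k=\p_x^k+ku_1\p_x^{k-2}+\ldots$, because $L$ has no $\p_x^0$ term. Hence $B_k$ is monic of order $k$ and has no $\p_x^{k-1}$ term, which explains why the sum runs only up to $j=k-2$. For $k=1$ this gives $B_1=\p_x$, and the equation is just $\p_x\psi=\p_x\psi$.

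\emph{Main obstacle.} The only delicate point is the phrase ``for any $z\in\CC$''. The identities above hold as formal expansions near $\infty$. The plan is to argue that both sides are analytic in $z$ on the domain where $\psi$ is continued, with the standing assumption of analytic continuation stated earlier. Then equality in ${\sf U}_{\infty}$ propagates to that whole domain by uniqueness of analytic continuation. The coefficients of $B_k$ do not depend on $z$, so this step causes no difficulty.
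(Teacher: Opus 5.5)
Your proposal is correct and follows the paper's proof essentially verbatim. The eigenvalue equation is read off from $\psi=We^{\xi}$ and $L=W\p_xW^{-1}$, and the flows come from differentiating $We^{\xi}$ and inserting $\p_{t_k}W=-(L^k)_{<0}W$ together with $W\p_x^k=L^kW$. Your extra remarks are sound supplements that the paper leaves implicit: $B_k$ has no $\p_x^{k-1}$ term, and the identities extend from ${\sf U}_{\infty}$ to other $z$ by analytic continuation.
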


\noindent
{\it Proof.} Again, we follow \cite{DJKM83}.
The eigenvalue equation (\ref{kp7}) is a trivial consequence 
of (\ref{kp6}), (\ref{Lax}). Equation (\ref{kp7a}) easily 
follows from (\ref{kp7c}):
$$
\begin{array}{lll}
\p_{t_k}\psi ({\bf t}, z)& =& \p_{t_k}W \, e^{\xi ({\bf t}, z)}+
W z^k e^{\xi ({\bf t}, z)}
\\ && \\
&=&-(L^k)_{<0}W  e^{\xi ({\bf t}, z)}
+W \p_x^k e^{\xi ({\bf t}, z)}
\\ && \\
&=&\Bigl (-(L^k)_{<0}+L^k\Bigr )
W  e^{\xi ({\bf t}, z)}=(L^k)_{\geq 0}\psi ({\bf t}, z).
\end{array}
$$
\square

In what follows, by the BA function we mean a function of the form
(\ref{kp5}) such that it satisfies equations (\ref{kp7}) and
(\ref{kp7a}). Theorem \ref{theorem:kp1} implies that
this definition is equivalent to the first formula in (\ref{kp4}),
where the tau-function satisfies (\ref{kp1}).

There are similar statements for the 
adjoint BA function $\psi^* ({\bf t}, z)$.
From the proof of Theorem \ref{theorem:kp1} we already know
that
\beq\label{kp12}
\psi^*({\bf t}, z)=(W^{\dag})^{-1} e^{-\xi ({\bf t}, z)}.
\eeq
An easy calculation that uses equation (\ref{kp7c}) shows that
the adjoint BA function satisfies linear equations which are
conjugated to those for $\psi ({\bf t}, z)$ 
(see (\ref{kp7}), (\ref{kp7a})).
\begin{corollary}
The function $\psi^* ({\bf t}, z)$ is the adjoint 
BA function, i.e., it 
satisfies the equations
\beq\label{kp8}
L^{\dag}\psi^* ({\bf t}, z)=z\psi^* ({\bf t}, z),
\eeq
\beq\label{kp8a}
-\p_{t_k}\psi^* ({\bf t}, z)=B^{\dag}_k \psi^* ({\bf t}, z) \;\;
\mbox{for any $k\geq 1$}.
\eeq
\end{corollary}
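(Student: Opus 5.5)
We use the representation $\psi^*({\bf t},z)=(W^{\dag})^{-1}e^{-\xi({\bf t},z)}$ from (\ref{kp12}), established in the proof of Theorem \ref{theorem:kp1}, together with the evolution equations (\ref{kp7c}) for $W$. Write $V=(W^{\dag})^{-1}$. Since $L=W\p_xW^{-1}$, conjugation gives $L^{\dag}=(W^{-1})^{\dag}\p_x^{\dag}W^{\dag}=-V\p_x V^{-1}$. Acting on $e^{-\xi}$, we have $-\p_x e^{-xz}=z e^{-xz}$. Hence
$$
L^{\dag}\psi^*=-V\p_x e^{-\xi({\bf t},z)}=zVe^{-\xi({\bf t},z)}=z\psi^*,
$$
which is (\ref{kp8}). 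This step is purely algebraic; the only care needed is that conjugation reverses products and inverts compatibly, i.e. $(W^{-1})^{\dag}=(W^{\dag})^{-1}$.

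For (\ref{kp8a}) we first derive the evolution of $V$ from (\ref{kp7c}). Write $\p_{t_k}W=-(L^k)_{<0}W$ and take adjoints: $\p_{t_k}W^{\dag}=-W^{\dag}\bigl((L^k)_{<0}\bigr)^{\dag}$. Since $V=(W^{\dag})^{-1}$, we get $\p_{t_k}V=-V(\p_{t_k}W^{\dag})V=V\bigl((L^k)_{<0}\bigr)^{\dag}W^{\dag}V=\bigl((L^k)^{\dag}\bigr)_{<0}V$. Here we use that conjugation preserves the splitting into $(\cdot)_{\geq0}$ and $(\cdot)_{<0}$ parts: the adjoint of $f\p_x^j$ is $(-\p_x)^j f$, which for $j<0$ expands into strictly negative powers and for $j\geq0$ is a differential operator. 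This preservation is the one point requiring a remark, and it is the main (mild) obstacle.

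Then we compute
$$
\p_{t_k}\psi^*=(\p_{t_k}V)e^{-\xi}+V(-z^k)e^{-\xi}=\bigl((L^k)^{\dag}\bigr)_{<0}\psi^*-V(-\p_x)^k e^{-\xi}.
$$
Now $V(-\p_x)^kV^{-1}=(-V\p_xV^{-1})^k=(L^{\dag})^k=(L^k)^{\dag}$. So $\p_{t_k}\psi^*=\bigl(((L^k)^{\dag})_{<0}-(L^k)^{\dag}\bigr)\psi^*=-\bigl((L^k)^{\dag}\bigr)_{\geq0}\psi^*=-B_k^{\dag}\psi^*$, using $\bigl((L^k)^{\dag}\bigr)_{\geq0}=\bigl((L^k)_{\geq0}\bigr)^{\dag}=B_k^{\dag}$ by the same splitting property. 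This yields (\ref{kp8a}).
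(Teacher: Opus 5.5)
Your proposal is correct and is exactly the ``easy calculation'' the paper alludes to: start from $\psi^*=(W^{\dag})^{-1}e^{-\xi({\bf t},z)}$ in (\ref{kp12}), conjugate $L=W\p_xW^{-1}$ to get $L^{\dag}=-V\p_xV^{-1}$, and take adjoints of (\ref{kp7c}), using that $\dag$ respects the splitting into $(\cdot)_{\geq 0}$ and $(\cdot)_{<0}$ parts. One harmless slip: the middle expression in your formula for $\p_{t_k}V$ should read $VW^{\dag}\bigl((L^k)_{<0}\bigr)^{\dag}V$ rather than $V\bigl((L^k)_{<0}\bigr)^{\dag}W^{\dag}V$, but your final result $\p_{t_k}V=\bigl((L^k)^{\dag}\bigr)_{<0}V$ is right.
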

\square

\begin{remark}
\label{remark:kpWtau}
Taking into account equations (\ref{kp4}), we can represent
the dressing operator $W$ as
\beq\label{dres}
W=\normord \frac{\tau ({\bf t}-[\p_x^{-1}])}{\tau ({\bf t})}
\normord , \qquad
(W^{\dag})^{-1}=\normord \frac{\tau ({\bf t}+[\p_x^{-1}])}{\tau ({\bf t})}
\normord ,
\eeq
where the normal ordering means that operator $\p_x$ 
in each term of the expansion in negative powers of $\p_x^{-1}$
is moved
to the right. Note that the second equality is by no means 
an algebraic consequence of the first one. 
It holds only on condition that 
$\tau ({\bf t})$ is a KP tau-function, i.e., satisfies the
bilinear equation (\ref{kp1}).
\end{remark}

The following well known proposition directly links the tau-function
and the Lax operator.

\begin{proposition}
Let $L$ be the Lax operator of the KP hierarchy.
The coefficient at $\p_x^{-1}$ of the pseudo-differential
operator $L^n$ (sometimes called the operator
residue of the pseudo-differential operator $L^n$) is expressed
through the tau-function as follows:
\beq\label{Ltau}
(L^n)_{-1} =\p_{t_n}\p_x \log \tau , \quad n\geq 1.
\eeq
\end{proposition}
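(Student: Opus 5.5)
The plan is to derive (\ref{Ltau}) from the bilinear relation (\ref{kp1a}), using the standard residue pairing between pseudo-differential operators and their symbols. Recall that for any pseudo-differential operators $P,Q$ one has
$$
\mathrm{res}_z\Bigl (P e^{xz}\cdot Q e^{-xz}\Bigr )=\bigl (P\,Q^{\dag}\bigr )_{-1},
$$
where $\mathrm{res}_z$ denotes the coefficient at $z^{-1}$. This is the identity underlying Lemma \ref{lemma:technical}, and I will use it as the bridge between operator residues and contour integrals in $z$.

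First step: rewrite the left-hand side of (\ref{Ltau}) in terms of the BA functions. Since $\psi=We^{\xi}$ and $\psi^*=(W^{\dag})^{-1}e^{-\xi}$ by (\ref{kp12}), and $L^n W=W\p_x^n$, we get
$$
z^n\psi\psi^* = (L^nWe^{\xi})\bigl ((W^{\dag})^{-1}e^{-\xi}\bigr ).
$$
By the residue identity its residue equals $\bigl (L^nW\,W^{-1}\bigr )_{-1}=(L^n)_{-1}$. Hence
$$
(L^n)_{-1}=\mathrm{res}_z\, z^n\psi({\bf t},z)\psi^*({\bf t},z)=\frac{1}{2\pi i}\oint_{C_\infty}z^n\psi\psi^*\,dz .
$$

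Second step: compute the same residue from the tau-function representation (\ref{kp4}). There
$$
\psi\psi^*=\frac{\tau({\bf t}-[z^{-1}])\,\tau({\bf t}+[z^{-1}])}{\tau({\bf t})^2}.
$$
The cleanest route is to differentiate the bilinear identity (\ref{kp1}). Apply $\p_{t_n}$ with respect to ${\bf t}$, then set ${\bf t}'={\bf t}-[w^{-1}]$ and take the $w\to\infty$ expansion to first order in $w^{-1}$, which acts as $\p_x$ on the primed argument. This gives an identity of the form
$$
\mathrm{res}_z\, z^n\psi\psi^*=\p_x\bigl(\text{coefficient at }z^{-1}\text{ of }\p_{t_n}\psi/\psi\bigr).
$$
Equivalently, I can use the known expansion
$$
\log\frac{\tau({\bf t}-[z^{-1}])}{\tau({\bf t})}=-\sum_{k\geq 1}\frac{z^{-k}}{k}\p_{t_k}\log\tau+\ldots ,
$$
which shows that the $z^{-1}$ coefficient of $\p_{t_n}\log\psi$ is $-\p_{t_n}\p_x\log\tau$ (up to sign conventions). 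It remains to relate $\mathrm{res}_z z^n\psi\psi^*$ to this coefficient.

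A more direct alternative: from (\ref{kp7c}), $\p_{t_n}\xi_1=-\bigl((L^n)_{<0}W\bigr)_{-1}=-(L^n)_{-1}$, because $W$ is monic. From (\ref{kp4}) and the Taylor expansion of $\tau({\bf t}-[z^{-1}])$ we get $\xi_1=-\p_x\log\tau$. Combining the two gives $(L^n)_{-1}=\p_{t_n}\p_x\log\tau$ immediately. I will adopt this second route as the main proof and keep the residue argument as a check.

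The main step needing care is the identification $\xi_1=-\p_x\log\tau$. It follows from $\tau({\bf t}-[z^{-1}])=\tau-z^{-1}\p_{t_1}\tau+O(z^{-2})$, since only $t_1$ is shifted at order $z^{-1}$. One must also check that the $\p_x^{-1}$ coefficient of $(L^n)_{<0}W$ is exactly $(L^n)_{-1}$. This holds because $W=1+O(\p_x^{-1})$, so the corrections are of order $\p_x^{-2}$ and lower. With these two facts the proof is a two-line computation.
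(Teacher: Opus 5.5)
\textbf{Verdict.} Your main proof, the one you adopt at the end, is correct, but it follows a different route from the one the paper indicates.

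\textbf{Your route.} You take the coefficient at $\partial_x^{-1}$ in the Sato equation (\ref{kp7c}). This gives $\partial_{t_n}\xi_1=-(L^n)_{-1}$, because $W$ is monic. You then combine it with $\xi_1=-\partial_x\log\tau$, read off from (\ref{kp4}).

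\textbf{The paper's route.} The paper differentiates the bilinear relation (\ref{kp1}) with respect to $t_n$ and sets ${\bf t}'={\bf t}$. This says that the $z^{-1}$-coefficient of $\bigl(z^n\tau({\bf t}-[z^{-1}])+\partial_{t_n}\tau({\bf t}-[z^{-1}])\bigr)\tau({\bf t}+[z^{-1}])$ vanishes. The second product has $z^{-1}$-coefficient $\partial_{t_n}\tau\,\partial_x\tau-\tau\,\partial_x\partial_{t_n}\tau$, so one gets $\mathrm{res}_z\, z^n\psi\psi^*=\partial_{t_n}\partial_x\log\tau$. Together with $\mathrm{res}_z\, z^n\psi\psi^*=(L^n)_{-1}$, which is exactly your first step, this gives (\ref{Ltau}). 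So the residue argument you set aside is essentially the paper's proof.

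\textbf{Comparison.} Your version is shorter and purely operator-algebraic. It needs only the leading coefficient of $W$ and the Sato equations already established in Theorem 2.1, with no adjoint function and no residue pairing. The paper's version stays at the level of the bilinear identity. It does not need the evolution equations (\ref{kp7c}) at all, only $\psi^*=(W^{\dagger})^{-1}e^{-\xi}$. As a by-product it gives the useful representation $(L^n)_{-1}=\mathrm{res}_z\, z^n\psi\psi^*$.

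\textbf{A correction if you keep the residue argument as a check.} Your second step is not right as written.
\begin{itemize}
\item No shift ${\bf t}'={\bf t}-[w^{-1}]$ or first-order expansion in $w^{-1}$ is needed; set ${\bf t}'={\bf t}$ directly.
\item Since $\psi\psi^*=1+O(z^{-2})$, this gives $\mathrm{res}_z\, z^n\psi\psi^*=-\bigl(\mbox{coefficient at } z^{-1} \mbox{ of } \partial_{t_n}\psi/\psi\bigr)$, with no extra $\partial_x$.
\item The identity you wrote, with the extra $\partial_x$, would produce $-\partial_x^2\partial_{t_n}\log\tau$ instead of $\partial_{t_n}\partial_x\log\tau$.
\end{itemize}
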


\noindent
This relation can be proved by applying the differential operator
$\p_{t_n}$ to the bilinear relation (\ref{kp1}) and putting
${\bf t'}={\bf t}$ afterwords. Details of the proof can be
found e.g. in \cite{Z18}.

As is well known, compatibility of equations (\ref{kp7}),
(\ref{kp7a}) (or (\ref{kp8}), (\ref{kp8a})) implies the 
Lax equations
\beq\label{kp9}
\p_{t_k}L=[B_k, L], \quad k\geq 1,
\eeq
and the Zakharov-Shabat equations
\beq\label{kp10}
\p_{t_n}B_m -\p_{t_m}B_n +[B_m, B_n]=0, \quad n,m \geq 1.
\eeq
The latter are in fact compatibility conditions for the 
Lax equations (\ref{kp9}). They
determine dynamics in the space of dependent variables.
Usually, these equations serve as a definition of the KP hierarchy.

An important role in the theory is played by {\it wave functions}.

\begin{definition}\label{definition:wavekp}
Any solution $\phi ({\bf t})$ to the system of linear 
differential equations (\ref{kp7a}), i.e.,
\beq\label{wave1}
\p_{t_k}\phi ({\bf t}, z)=B_k \phi ({\bf t}, z) \;\;
\mbox{for any $k\geq 1$},
\eeq 
is called a wave function (for the KP hierarchy).
Any solution $\phi^* ({\bf t})$ to the conjugated system of 
equations (\ref{kp8a}), i.e.,
\beq\label{wave2}
-\p_{t_k}\phi^* ({\bf t}, z)=B^{\dag}_k \phi^* ({\bf t}, z) \;\;
\mbox{for any $k\geq 1$},
\eeq
is called an adjoint wave function.
\end{definition}

\noindent
For brevity, we will call both 
$\phi$ and $\phi^*$ wave functions, if this does not lead to
a misunderstanding. 

Note that the BA functions (adjoint BA functions) 
are particular and very special 
solutions to the linear system (\ref{wave1})
(respectively, (\ref{wave2})). Namely, 
they are simultaneously eigenfunctions
of the Lax operator. However, the whole set of wave functions is
much broader. Indeed, since the linear equations (\ref{kp7a})
and (\ref{kp8a})
do not contain the spectral parameter $z$ explicitly, any 
linear combination of BA functions $\psi ({\bf t}, z)$ for
different values of $z$ is a wave function.
So, wave functions
$\phi ({\bf t})$, $\phi^*({\bf t})$ 
can be constructed by integrating 
the BA functions with respect to the
spectral parameter $z$ with arbitrary functions (in general,
distributions)
$\rho (z)$, $\rho^*(z)$ of $z, \bar z$ (we call them density functions):
\beq\label{mkp2a}
\begin{array}{l}
\displaystyle{
\phi ({\bf t})=\int_{\CCC}\psi ({\bf t}; z)\rho (z)d^2 z,}
\\ \\
\displaystyle{
\phi^* ({\bf t})=\int_{\CCC}\psi^* ({\bf t}; z)\rho^* (z)d^2 z,}
\end{array}
\eeq
where $d^2z \equiv dx dy$ is the standard measure 
in the complex plane\footnote{We assume that the density 
functions $\rho (z)$,
$\rho^*(z)$ are such that the integrals converge; for example,
we may require that they are bounded and have a compact support.}.
Below we will deal with equations (\ref{mkp2a}) rewritten in terms
of the tau-function:
\beq\label{mkp2b}
\begin{array}{l}
\displaystyle{
\phi ({\bf t})=\frac{1}{\tau ({\bf t})}
\int_{\CCC}d^2 z \, \rho (z) \, e^{\xi ({\bf t}, z)}
 \tau \Bigl ({\bf t}-[z^{-1}]\Bigr ),}
\\ \\
\displaystyle{
\phi^* ({\bf t})=\frac{1}{\tau ({\bf t})}
\int_{\CCC}d^2 z \, \rho^* (z) \, e^{-\xi ({\bf t}, z)}
\tau \Bigl ({\bf t}+[z^{-1}]\Bigr ).}
\end{array}
\eeq
The following proposition proved in \cite{ANP98} states that
formulas (\ref{mkp2a}) provide in fact general solution to the linear
equations (\ref{wave1}), (\ref{wave2}).

\begin{proposition}\cite{ANP98}
Any KP wave function, i.e., a common solution to the linear equations
(\ref{wave1}) (or (\ref{wave2}) for adjoint wave functions) 
have integral representation of the form (\ref{mkp2a}) with 
some density functions (more generally, distributions) 
$\rho$, $\rho^*$.
\end{proposition}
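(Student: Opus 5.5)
Let $\phi({\bf t})$ be a common solution of (\ref{wave1}). The plan is to construct a density $\rho$ for which (\ref{mkp2a}) reproduces $\phi$. The tool is the bilinear identity (\ref{kp1a}) in its dressing-operator form. By Theorem \ref{theorem:kp1} and (\ref{kp12}) we have $\psi = We^{\xi}$ and $\psi^* = (W^\dagger)^{-1}e^{-\xi}$.

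\textbf{Step 1: undress $\phi$.} Set $\phi_0 = W^{-1}\phi$. Then (\ref{kp7c}) and $B_k = L^k - (L^k)_{<0}$ give
\[
\partial_{t_k}\phi_0 = -W^{-1}(\partial_{t_k}W)W^{-1}\phi + W^{-1}B_k\phi = W^{-1}\bigl((L^k)_{<0} + B_k\bigr)\phi = W^{-1}L^k W\phi_0 = \partial_x^k\phi_0 .
\]
Hence $\phi_0$ solves the free hierarchy $\partial_{t_k}\phi_0 = \partial_x^k\phi_0$. Such solutions are superpositions of exponentials, $\phi_0({\bf t}) = \int e^{\xi({\bf t},z)}\rho(z)\,d^2z$ with $\rho$ a distribution; this is a Fourier--Laplace representation in $x$ propagated by the commuting flows. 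Applying $W$ term by term, using $W e^{\xi} = \psi$, gives $\phi = \int \psi({\bf t},z)\rho(z)\,d^2z$, which is (\ref{mkp2a}).

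\textbf{Step 2: justify the undressing.} The formal step above is problematic because $W^{-1}$, a pseudo-differential operator of negative order, acting on a general function is not canonically defined. I would therefore replace it by the integral formula that the bilinear identity provides:
\[
\phi({\bf t}) = \oint_{C_\infty}\frac{dz}{2\pi i}\,\psi({\bf t},z)\int^{x}\psi^*({\bf t}',z)\,\phi({\bf t}')\,dx' ,
\]
or an equivalent version built from the squared-eigenfunction potential $\Omega(\phi,\psi^*)$, defined by $\partial_x\Omega = \phi\,\psi^*$, which is compatible with all flows. The argument would run as follows.

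1. Show that $\Omega(\phi,\psi^*({\bf t},z))$ has the expansion $e^{-\xi({\bf t},z)}\bigl(z^{-1}\phi_0 + \cdots\bigr)$ in $z$. This fixes the transform $\rho$ as the boundary value, or jump, of the density $z\,\Omega\,e^{\xi}$.
2. Deform $C_\infty$ to collapse the contour integral onto the support, writing it as a $\bar\partial$-integral over the plane by Stokes' theorem. This is in the spirit of Remark \ref{remark:kp2}.
3. Use (\ref{kp1a}) to show that the remainder vanishes, leaving exactly $\int\psi\rho\,d^2z$.

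The adjoint case is identical: use $\psi^*$ and (\ref{wave2}), undressing by $W^\dagger$.

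\textbf{Main obstacle.} I expect the analytic step, not the algebra, to be the difficulty. One must show that the free solution $\phi_0$, or equivalently the $z$-dependence of $\Omega$, really admits a representation by a distribution $\rho$ of compact support, which requires growth conditions on $\phi$ in the times. I would first try to establish this for the soliton and rational classes the paper restricts to. There $\psi$ continues analytically off a finite set ${\sf S}$, so $\rho$ can be taken as a combination of delta functions and their derivatives at points, or densities on lines. The general case would then follow by a limiting argument, deferring to \cite{ANP98}.
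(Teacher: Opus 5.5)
The paper gives no proof of this proposition; it only cites \cite{ANP98}. Your proposal does not supply one either, since it ends by deferring the general case to \cite{ANP98}.

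Step~1 is circular rather than merely formal. For $\tau=1$ the proposition says exactly that every solution of $\p_{t_k}\phi_0=\p_x^k\phi_0$ equals $\int e^{\xi({\bf t},z)}\rho(z)\,d^2z$. Undressing therefore only reduces the claim to its trivial-background case, which you then assert without argument. There are two further problems:
\begin{itemize}
\item $W^{-1}\phi$ needs a choice of $\p_x^{-1}$ that commutes with all $\p_{t_k}$.
\item $We^{\xi({\bf t},z)}=\psi({\bf t},z)$ holds only where $\sum_k\xi_k z^{-k}$ converges. In general that is only ${\sf U}_\infty$, not a compact support of $\rho$.
\end{itemize}
Step~2 names the right tool, the squared-eigenfunction potential $\Omega$ used in \cite{ANP98}. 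However, the two facts that make it work are not established, one claim is wrong, and the contour deformation you add is where the argument breaks.

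\textbf{Normalization.} The $z$-dependent integration constant in $\Omega$ must be fixed by requiring $e^{\xi({\bf t},z)}\Omega$ to be a series in $z^{-1}$. Equation (\ref{mkp8a}) holds for every wave function, by Theorem~\ref{theorem:mkp1}\,a). Shifting it by $[z^{-1}]$ then gives
\[
\Omega({\bf t},z)=-z^{-1}\psi^*({\bf t},z)\,\phi({\bf t}+[z^{-1}]).
\]
So the leading coefficient is $-\phi$, not $\phi_0$, and your displayed reconstruction formula needs a minus sign.

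\textbf{Independence of ${\bf t}'$.} The same normalization forces $\p_{t_k}\Omega=(\p_x^{-1}\psi^*B_k\,\phi\,\p_x^{-1})_{-1}$ exactly, where the subscript denotes the coefficient of $\p_x^{-1}$. The reason is that an $x$-independent discrepancy which is $e^{-\xi}$ times a Laurent series in $z^{-1}$ must vanish. Hence every ${\bf t}'$-derivative of $\Omega({\bf t}',z)$ is a combination of $\p_{x'}^j\psi^*({\bf t}',z)$ with $z$-independent coefficients. Then (\ref{kp1a}) shows that $\oint_{C_\infty}\psi({\bf t},z)\Omega({\bf t}',z)\,dz$ does not depend on ${\bf t}'$.

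\textbf{Conclusion.} Evaluating at ${\bf t}'={\bf t}$ by the residue at infinity yields
\[
\phi({\bf t})=-\oint_{C_\infty}\frac{dz}{2\pi i}\,\psi({\bf t},z)\,\Omega({\bf t}_0,z)
\]
for any fixed ${\bf t}_0$. This is already (\ref{mkp2a}), with $\rho$ a distribution supported on the circle $C_\infty$, which is one of the admissible cases listed after the proposition.

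Your items 2--3, collapsing the contour onto a compact support via Stokes' theorem, are therefore not needed. In general they also cannot be carried out, because $\psi$ and $\Omega$ need only exist in ${\sf U}_\infty$ (Remark~\ref{remark:kp2}). Moreover, a ``jump of $z\Omega e^{\xi}$'' depends on ${\bf t}$ unless you first freeze ${\bf t}'={\bf t}_0$. Compact support is a stronger claim than the proposition makes, and that is exactly where your sketch stalls.
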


\noindent
Moreover, in \cite{ANP98} some explicit formulas which
allow one to restore density functions from known wave
functions were given. We will not present them here since 
we do not need them in what follows.

There are the following main possibilities
for the density function:
\begin{itemize}
\item[--]
The support of $\rho(z)$ is a compact domain ${\sf D}\in \CC$:
$
\rho (z)=\mu (z)\Theta_{\sf D}(z),
$
where $\mu (z)$ is a bounded function of $z, \bar z$
and $\Theta_{\sf D}(z)$ is the characteristic function
of the domain ${\sf D}$: 
$\Theta_{\sf D}(z)=1$ if $z\in {\sf D}$ and $0$ otherwise.
In this case we encounter the non-local 
$\bar \p$-problem, as explained
in Remark \ref{remark:kp2}. We do not consider this case in this
paper and hope to address it elsewhere.
\item[--]
The support of $\rho(z)$ is a contour $\Gamma \subset \CC$, 
and $\rho (z)$ is a distribution represented 
in terms of the delta-function $\delta_{\Gamma}(z)$
with the support on $\Gamma$
as follows\footnote{The delta-function $\delta_{\Gamma}(z)$
is defined by the rule  
$\displaystyle{
\int_{\CCC}f(z)\delta_{\Gamma}(z)d^2 z=\int_{\Gamma}f(z)|dz|}
$
for any integrable function $f$.}:
$
\rho (z)=\nu (z)\delta_{\Gamma}(z),
$
where $\nu (z)$ is some integrable function.
\item[--]
The support of $\rho(z)$ is a finite number of points $p_i\in \CC$,
and $\rho (z)$ is a distribution represented as a linear combination
of 2D delta-functions:
$\displaystyle{
\rho (z)=\sum_i c_i \delta^{(2)}(z-p_i)}.
$
\end{itemize}

\noindent
Various combinations of these types of density functions are 
also possible.
The same is assumed for $\rho^* (z)$.

\section{Modified KP hierarchy}

\subsection{Modified KP from Toda lattice}
\label{section:mKP-Toda}

To introduce the modified KP (mKP) hierarchy, 
it is instructive to begin with
the more general 2D Toda lattice (2DTL) hierarchy. The independent
variables of the latter are two infinite sets of times
$$
{\bf t}=\{t_1, t_2, t_3, \ldots \}, \quad
\bar {\bf t}=\{\bar t_1, \bar t_2, \bar t_3, \ldots \}
$$
(the bar here does not mean complex conjugation!) and a discrete
variable $n\in \ZZ$. The universal dependent variable is the
2D Toda tau-function $\tau (n, {\bf t}, \bar {\bf t})$ which
satisfies the bilinear equation

\beq\label{mkp1}
\begin{array}{l}
\displaystyle{
\oint_{C_{\infty}} dz \, z^{n-n'-1}e^{\xi ({\bf t}-{\bf t}', z)}
\tau (n, {\bf t}-[z^{-1}], \, \bar {\bf t})
\tau (n'+1, {\bf t}'+[z^{-1}], \, \bar {\bf t'}) 
}
\\ \\
\hspace{-4mm}=\, \displaystyle{
\oint_{C_{\infty}} dz \, z^{n'-n-1}e^{\xi (\bar {\bf t}-\bar {\bf t}', z)}
\tau (n+1, {\bf t}, \, \bar {\bf t}-[z^{-1}])
\tau (n', {\bf t}', \, \bar {\bf t'}+[z^{-1}]). 
}
\end{array}
\eeq
It is valid for all ${\bf t},\, {\bf t'}$, 
$\bar {\bf t},\, \bar {\bf t'}$
and $n, \, n' \in \ZZ$.

For the choice $\bar {\bf t'}=\bar {\bf t}$, $n'=n-1$ the right-hand
side vanishes and equation (\ref{mkp1}) for 
$\tau (n, {\bf t}, \bar {\bf t})$, regarded as a function
of ${\bf t}$ with fixed $\bar {\bf t}$ and $n$, 
reduces to the bilinear
equation (\ref{kp1}) for the KP hierarchy.

For the choice $\bar {\bf t'}=\bar {\bf t}$, $n-n'=k+1$ with
$k\geq 1$ the right-hand
side vanishes again and the equation becomes
\beq\label{mkp2}
\begin{array}{l}
\displaystyle{
\oint_{C_{\infty}} dz \, z^{k}e^{\xi ({\bf t}-{\bf t}', z)}
\tau (n+k, {\bf t}-[z^{-1}])
\tau (n, {\bf t}'+[z^{-1}]) =0, \quad k\geq 1,
}
\end{array}
\eeq
where we omitted dependence on the bar-times which enter 
implicitly as parameters of solutions. The hierarchy defined by
equation (\ref{mkp2}) is sometimes referred to as the $k$-modified KP
hierarchy ($k$-mKP). In what follows, we will deal with
the most important case $k=1$. This is just 
the mKP hierarchy in the Kupershmidt and Kiso sense. 
At $k=0$ the KP hierarchy
is reproduced.

Note that for the choice  
$\bar {\bf t'}=\bar {\bf t}$, $n-n'=0$ in (\ref{mkp1})
(corresponding to $k=-1$)
the right-hand side of (\ref{mkp1}) does not vanish but
can be easily found explicitly as given by the residue at $\infty$:
\beq\label{mkp3}
\begin{array}{l}
\displaystyle{
\frac{1}{2\pi i}
\oint_{C_{\infty}} \frac{dz}{z} \,e^{\xi ({\bf t}-{\bf t}', z)}
\tau (n, {\bf t}-[z^{-1}])
\tau (n+1, {\bf t}'+[z^{-1}]) =
\tau (n+1, {\bf t})\, \tau (n, {\bf t'}).
}
\end{array}
\eeq
This equation (or rather functional relation) as well as
(\ref{mkp2}) for $k=1$, connects two neighboring tau-functions
on the $n$-lattice and is in fact 
equivalent to it defining the same mKP hierarchy\footnote{This
can be proved using the method developed in \cite{Shigyo13}.}. 

Since in what follows we will not be interested in dependence
on $n$, it is convenient to fix them and simplify the notation 
as follows:
\beq\label{mkp4}
\tau (n, {\bf t})=\tau ({\bf t}), \qquad
\tau (n+1, {\bf t})=\hat \tau ({\bf t}).
\eeq
From what was said above it is clear that
both functions $\tau ({\bf t})$ and $\hat \tau ({\bf t})$ are
KP tau-functions and, as such, satisfy equation (\ref{kp1}).
The mKP hierarchy contains the extra equation (\ref{mkp3}) that
connects them:
\beq\label{mkp5}
\begin{array}{l}
\displaystyle{
\frac{1}{2\pi i}
\oint_{C_{\infty}} \frac{dz}{z} \,e^{\xi ({\bf t}-{\bf t}', z)}
\tau ({\bf t}-[z^{-1}])
\hat \tau ({\bf t}'+[z^{-1}]) =
\hat \tau ({\bf t})\, \tau ({\bf t'}).
}
\end{array}
\eeq

Let us mention some important corollaries of the general
equation (\ref{mkp5}). Setting 
$$
{\bf t}-{\bf t'}=[a_1^{-1}]+[a_2^{-1}],
$$
we have
$$
e^{\xi ({\bf t}-{\bf t}', z)}=\frac{a_1 \, a_2}{(a_1-z)(a_2-z)},
$$
and the integral in (\ref{mkp5}) can be calculated by an elementary
residue calculus. The result is:
\beq\label{mkp5b}
\begin{array}{l}
a_1 \tau ({\bf t}-[a_2^{-1}])\hat \tau ({\bf t}-[a_1^{-1}])-
a_2 \tau ({\bf t}-[a_1^{-1}])\hat \tau ({\bf t}-[a_2^{-1}])
\\ \\
\phantom{aaaaaaaaaaaaaaaaaaaaaaaaa}=
(a_1-a_2) \tau ({\bf t}-[a_1^{-1}]-[a_2^{-1}])\hat \tau ({\bf t}).
\end{array}
\eeq
In the short-hand notation introduced in (\ref{not1}) this
equation can be written in a much more compact form:
\beq\label{mkp5c}
a_1\tau^2 \hat \tau^1 -a_2\tau^1 \hat \tau^2=
a_{12} \tau^{12}\hat \tau , 
\eeq
or, after an overall shift of ${\bf t}$,
\beq\label{mkp5d}
a_1\tau_1 \hat \tau_2 -a_2\tau_2 \hat \tau_1=
a_{12} \tau \hat \tau_{12}.  
\eeq
As is proved in
\cite{Shigyo13}, equation (\ref{mkp5c}) (or (\ref{mkp5d}))
is in fact equivalent to the whole mKP hierarchy defined by the
integral equation (\ref{mkp5}).
There is a slightly more generally looking version of equation
(\ref{mkp5c}) including three rather than two points, which
is sometimes useful. It 
is obtained from (\ref{mkp5}) in the same manner after
the substitution ${\bf t}-{\bf t'}=[a_1^{-1}]+[a_2^{-1}]+[a_3^{-1}]$:
\beq\label{mkp5e}
(a_1^{-1}-a_2^{-1})\tau^{12}\hat \tau^3 +
(a_2^{-1}-a_3^{-1})\tau^{23}\hat \tau^1 +
(a_3^{-1}-a_1^{-1})\tau^{13}\hat \tau^2=0.
\eeq
It is reduced to (\ref{mkp5c}) in the limit $a_3\to \infty$.
(However, as it was already said, equations (\ref{mkp5c}) and
(\ref{mkp5e}) are in fact equivalent to each other and to the entire
hierarchy.)

\begin{proposition}
Let $\tau$, $\hat \tau$ be two functions satisfying 
equation (\ref{mkp5c}) (or (\ref{mkp5d})). Then both of them
are KP tau-functions, i.e., they satisfy equation (\ref{kp1c}).
\end{proposition}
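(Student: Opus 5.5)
The plan is to eliminate one of the two tau-functions from three copies of (\ref{mkp5c}) by a telescoping sum. The first copy will be unshifted, which removes $\hat \tau$ and gives the three-term equation (\ref{kp1c}) for $\tau$. The second will be shifted by ${\bf t}\to{\bf t}-[a_k^{-1}]$, which removes $\tau$ and gives (\ref{kp1c}) for $\hat\tau$. Throughout, $a_1,a_2,a_3$ are distinct and $\tau,\hat\tau$ are not identically zero. Divisions by shifted tau-functions are therefore legitimate as identities between meromorphic (or formal) functions of ${\bf t}$, and the resulting polynomial identities extend everywhere. Once (\ref{kp1c}) is established for all triples $a_1,a_2,a_3$, the result of \cite{TT95} (see also \cite{Shigyo13}) says it is equivalent to the full bilinear equation (\ref{kp1}).

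\emph{Step 1 (equation for $\tau$).} Write (\ref{mkp5c}) for each of the pairs $(1,2)$, $(2,3)$, $(3,1)$:
$$
a_i\tau^j\hat\tau^i-a_j\tau^i\hat\tau^j=a_{ij}\tau^{ij}\hat\tau .
$$
Divide the $(i,j)$ equation by $\tau^i\tau^j$ and set $X_i=\hat\tau^i/\tau^i$. The left-hand side becomes $a_iX_i-a_jX_j$, so the cyclic sum of the left-hand sides vanishes identically. Multiplying the sum of the right-hand sides by $\tau^1\tau^2\tau^3$ gives
$$
\hat\tau\,\bigl(a_{12}\tau^{12}\tau^3+a_{23}\tau^{23}\tau^1+a_{31}\tau^{13}\tau^2\bigr)=0 .
$$
Since $\hat\tau\not\equiv 0$, this is (\ref{kp1c}) for $\tau$.

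\emph{Step 2 (equation for $\hat\tau$).} Now use the versions of (\ref{mkp5c}) shifted by the third point. For the pair $(1,2)$ shifted by $-[a_3^{-1}]$, and its cyclic images:
$$
a_1\tau^{23}\hat\tau^{13}-a_2\tau^{13}\hat\tau^{23}=a_{12}\tau^{123}\hat\tau^{3},
$$
$$
a_2\tau^{13}\hat\tau^{12}-a_3\tau^{12}\hat\tau^{13}=a_{23}\tau^{123}\hat\tau^{1},
$$
$$
a_3\tau^{12}\hat\tau^{23}-a_1\tau^{23}\hat\tau^{12}=a_{31}\tau^{123}\hat\tau^{2}.
$$
Divide each equation by the product of the two $\hat\tau$'s it contains, namely $\hat\tau^{13}\hat\tau^{23}$, $\hat\tau^{12}\hat\tau^{13}$ and $\hat\tau^{12}\hat\tau^{23}$ respectively, and set $Z_{ij}=\tau^{ij}/\hat\tau^{ij}$. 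The left-hand sides become
$$
a_1Z_{23}-a_2Z_{13},\qquad a_2Z_{13}-a_3Z_{12},\qquad a_3Z_{12}-a_1Z_{23},
$$
which again sum to zero. Multiplying the sum of the right-hand sides by $\hat\tau^{12}\hat\tau^{13}\hat\tau^{23}$ gives
$$
\tau^{123}\bigl(a_{12}\hat\tau^{3}\hat\tau^{12}+a_{23}\hat\tau^{1}\hat\tau^{23}+a_{31}\hat\tau^{2}\hat\tau^{13}\bigr)=0 .
$$
Since $\tau\not\equiv0$, this is (\ref{kp1c}) for $\hat\tau$. The version (\ref{mkp5d}) follows from (\ref{mkp5c}) by an overall shift of ${\bf t}$, so either form of the hypothesis suffices.

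\emph{Main obstacle.} The algebra itself is short. The real issue is the final step: upgrading the three-term Miwa equation (\ref{kp1c}) to the full integral relation (\ref{kp1}). I would not reprove this but cite \cite{TT95,Shigyo13}, as the paper already does after (\ref{kp1b}). A secondary point is justifying the divisions. I would handle this by working with formal power series in $a_i^{-1}$, or in a generic region, and then arguing by analytic continuation.
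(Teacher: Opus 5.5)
Your proof is correct and follows the paper's argument: Step 1 is exactly the paper's cyclic sum of (\ref{mkp5c}) divided by $\tau^1\tau^2$. Your Step 2 writes out the argument the paper only calls ``similar'' for $\hat\tau$; after the overall shift ${\bf t}\to{\bf t}+[a_1^{-1}]+[a_2^{-1}]+[a_3^{-1}]$ it is the cyclic sum of (\ref{mkp5d}) divided by $\hat\tau_1\hat\tau_2$, and the citation of \cite{TT95,Shigyo13} is not needed, since the statement only asks for (\ref{kp1c}).
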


\noindent
{\it Proof.}
Let us write equation (\ref{mkp5c}) in the form
$$
a_{12}\frac{\tau^{12} \hat \tau}{\tau^1 \tau^2}=
a_1 \frac{\hat \tau^1}{\tau^1} -a_2 \frac{\hat \tau^2}{\tau^2}
$$
and sum the three such equations written for the three cyclic
permutations of $(123)$. Then the right-hand side vanishes,
and we obtain (\ref{kp1c}). The argument for $\hat \tau$ is
similar.
\square

\begin{remark}
Comparing equations (\ref{kp1c}) and (\ref{mkp5c}), one can notice
that the latter is formally obtained from the former in the ``limit'' 
$a_3\to 0$. More precisely, the following formal identification
works:
\beq\label{formal}
\tau  (n, {\bf t}-[z^{-1}] ) \Bigr |_{z=0}\rightarrow
\tau  (n+1, {\bf t} ), \quad 
\mbox{or} \quad
\tau  ({\bf t}-[z^{-1}]) \Bigr |_{z=0}\rightarrow
\hat \tau  ({\bf t} ).
\eeq
It is assumed here that the function $\tau  ({\bf t}-[z^{-1}])$, as a
function of $z$, admits an analytic continuation to the point
$z=0$ from the neighborhood of $\infty$, where it was originally
defined. The rule (\ref{formal}) can be most easily understood
in the language of free fermions (see, e.g., \cite{AZ12})).
\end{remark}

\subsection{From KP to modified KP and back}

The transition $\tau ({\bf t})\rightarrow \hat \tau ({\bf t})$ is
referred to as a B\"acklund-Darboux (BD) transformation of solutions to
the KP hierarchy. So, the mKP hierarchy (\ref{mkp3}) describes 
chains of successive BD transformations. In what follows, the 
objects related to the solution with the tau-function
$\hat \tau$ will be denoted by the same 
letters with hat ($\hat W$, $\hat L$,
$\hat \psi ({\bf t}, z)$, $\hat \phi ({\bf t})$, etc), the latter 
is a wave function for the solution with the tau-function
$\hat \tau$.

Theorem \ref{theorem:mkp1} below provides description of a general
solution to the mKP hierarchy in the form (\ref{mkp5}).
In plain words, all solutions to the mKP hierarchy can be
constructed from a solution to the KP hierarchy and a wave
function corresponding to it.

\begin{theorem} \label{theorem:mkp1}
a)
Let $\tau ({\bf t})$ be a tau-function of the KP hierarchy
(i.e., a solution to (\ref{kp1})), and $\phi ({\bf t})$ be a wave
function corresponding to this solution. Then the function
\beq\label{mkp6}
\hat \tau ({\bf t})=\phi ({\bf t}) \tau ({\bf t})
\eeq
is another KP tau-function (i.e., it solves (\ref{kp1}), too) and
the pair $(\tau , \hat \tau )$ satisfies equation (\ref{mkp5}).

b)
Conversely, let a pair $(\tau , \hat \tau )$ of KP tau-functions
be a solution to the mKP bilinear equation (\ref{mkp5}), then
their ratio
\beq\label{mkp6a}
\phi ({\bf t})=\frac{\hat \tau ({\bf t})}{\tau ({\bf t})}
\eeq
is a KP wave function corresponding to the solution 
$\tau ({\bf t})$.
\end{theorem}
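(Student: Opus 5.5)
Both parts can be handled with the Miwa-type relations and the integral representation (\ref{mkp2b}) of wave functions.

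\emph{Part a).} By the proposition of \cite{ANP98}, $\phi$ has the form (\ref{mkp2b}). Hence
$$
\hat \tau ({\bf t})=\int_{\CCC} d^2\zeta\,\rho(\zeta)\,e^{\xi({\bf t},\zeta)}\tau({\bf t}-[\zeta^{-1}]).
$$
The plan is to check the mKP relation (\ref{mkp5c}) for this $\hat\tau$ directly. Substituting, each term is linear in $\rho$. So it suffices to prove the identity pointwise in $\zeta$, i.e.\ for $\hat\tau_\zeta({\bf t})=e^{\xi({\bf t},\zeta)}\tau({\bf t}-[\zeta^{-1}])$. Under a shift ${\bf t}\to{\bf t}-[a^{-1}]$ the exponential acquires the factor $1-\zeta/a$. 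After cancelling the common factor $e^{\xi({\bf t},\zeta)}$, the claimed relation
$$
a_1\tau^2\hat\tau_\zeta^1-a_2\tau^1\hat\tau_\zeta^2=a_{12}\tau^{12}\hat\tau_\zeta
$$
becomes
$$
(a_1-\zeta)\tau^2\tau^{1\zeta}-(a_2-\zeta)\tau^1\tau^{2\zeta}=a_{12}\tau^{12}\tau^{\zeta}.
$$
This is exactly the three-term relation (\ref{kp1c}) with $a_3=\zeta$. Integrating against $\rho$ then gives (\ref{mkp5c}) for $(\tau,\hat\tau)$.

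By the proposition after (\ref{mkp5e}), $\hat\tau$ is then a KP tau-function. By the equivalence of (\ref{mkp5c}) with the full mKP hierarchy \cite{Shigyo13}, the integral equation (\ref{mkp5}) holds as well. As an alternative that avoids \cite{Shigyo13}, one can plug the representation of $\hat\tau$ into (\ref{mkp5}) and use (\ref{kp1}) with the extra Miwa shift $[\zeta^{-1}]$. This works because the factor $1/z$ together with $e^{\xi}$ turns into $(1-z/\zeta)^{-1}$-type kernels, and their residues reproduce the right-hand side.

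\emph{Part b).} The goal is to show that $\phi=\hat\tau/\tau$ satisfies $\p_{t_k}\phi=B_k\phi$. Divide (\ref{mkp5}) by $\tau({\bf t})\tau({\bf t}')$ and use (\ref{kp4}). This gives
$$
\frac{1}{2\pi i}\oint_{C_\infty}\frac{dz}{z}\,\psi({\bf t},z)\,\psi^*({\bf t}',z)\,\phi({\bf t}'+[z^{-1}])\frac{\tau({\bf t}'+[z^{-1}])}{\tau({\bf t}'+[z^{-1}])}\cdots
$$
More cleanly, write $\hat\tau({\bf t}'+[z^{-1}])/\tau({\bf t}')=\psi^*({\bf t}',z)e^{\xi({\bf t}',z)}\phi({\bf t}'+[z^{-1}])$. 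Then (\ref{mkp5}) reads
$$
\oint_{C_\infty}\psi({\bf t},z)\,\chi({\bf t}',z)\,\frac{dz}{2\pi i z}=\phi({\bf t}),
$$
with $\chi({\bf t}',z)=\psi^*({\bf t}',z)\phi({\bf t}'+[z^{-1}])$, independent of ${\bf t}$ on the right. Apply the operator $\p_{t_k}-B_k$ in the variables ${\bf t}$. It annihilates $\psi({\bf t},z)$ by Corollary \ref{corollary:kp1}, so the left-hand side gives zero, and we get $(\p_{t_k}-B_k)\phi=0$. This is precisely the wave-function property.

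The main obstacle is justifying the step in part a) where the pointwise identity is integrated against $\rho$. This needs the Miwa relation (\ref{kp1c}) for $a_3=\zeta$ on the whole support of $\rho$, not just near $\infty$, which relies on the analytic continuation assumption stated earlier. If the pointwise route is problematic, I would instead show directly that $\hat\tau$ satisfies (\ref{kp1}) and (\ref{mkp5}) via contour deformation.
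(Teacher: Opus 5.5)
Your proposal is correct. In part b) you take a genuinely different route from the paper. In part a) you supply an argument the paper omits, since it only cites \cite{Z25a}.

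\textbf{Part a).} You reduce (\ref{mkp5c}) for $\hat\tau=\int\rho(\zeta)e^{\xi({\bf t},\zeta)}\tau({\bf t}-[\zeta^{-1}])d^2\zeta$ pointwise to (\ref{kp1c}) at $a_3=\zeta$. This is right, given the standing analytic-continuation assumption and the cited results of \cite{ANP98,Shigyo13}. It is the same mechanism the paper uses to prove (\ref{b1d}) and (\ref{bil3}).

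In your alternative direct route, the contours need care. After inserting (\ref{mkp2b}), the kernel is $1/(z-\zeta)$ with $\zeta$ inside $C_\infty$. The Miwa-shifted identity (\ref{kp1}), with ${\bf t}'\to{\bf t}'-[\zeta^{-1}]$, holds only on a contour that leaves $\zeta$ outside. So the right-hand side of (\ref{mkp5}) comes precisely from the residue at $z=\zeta$.

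\textbf{Part b), the paper's route.} The paper substitutes $\hat\tau=\phi\tau$ into (\ref{mkp5d}) and lets $a_2\to\infty$ to get (\ref{St61}). It expands this in $a^{-1}$ into the tower (\ref{St71a}). It then argues that $\psi({\bf t},z)$ satisfies the identical tower, so the $x$-differential operators obtained by eliminating higher-time derivatives recursively must coincide with the $B_k$.

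\textbf{Part b), your route.} You rewrite (\ref{mkp5}) as $\frac{1}{2\pi i}\oint_{C_\infty}\psi({\bf t},z)\,\psi^*({\bf t}',z)\,\phi({\bf t}'+[z^{-1}])\,\frac{dz}{z}=\phi({\bf t})$. You then apply $\partial_{t_k}-B_k$ in ${\bf t}$, exactly as (\ref{kp11}) is used in the proof of Theorem \ref{theorem:kp1}.

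\textbf{What each buys.} Your route is shorter. It needs only Corollary \ref{corollary:kp1} for $\tau$, and it avoids both the recursive elimination and the identification of the resulting operators with $B_k$. The paper's route uses only the three-term consequence (\ref{mkp5d}) rather than the full integral identity. It therefore shows that the discrete mKP relation by itself forces the wave-function property. It is also the template the paper reuses for the mKP wave function $S$ in Theorem \ref{theorem:S}.

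Delete the garbled first display in your part b); the ``more cleanly'' version is the one you need.
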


\noindent
{\it Proof.} The part a) is well known and is traced back to 
early works on BD transformations of integrable hierarchies
(for a detailed proof see, e.g., the recent review \cite{Z25a}
and references therein). 

Here we give a sketch of proof of part b). We need to prove that
the function $\phi ({\bf t})$ defined by (\ref{mkp6a}) satisfies
the linear equations (\ref{wave1}) with the differential
operators $B_k=(L^k)_{\geq 0}$, where the Lax operator 
$L$ corresponds to the solution with the tau-function
$\tau ({\bf t})$. To this end, we employ 
the method
first suggested in \cite{TT95} and then used in
\cite{Teo11} for the multi-component
KP hierarchy.

Introduce the differential operator
\beq\label{D(z)}
D(z)=\sum_{k\geq 1}\frac{z^{-k}}{k}\, \p_{t_k},
\eeq
then for any function $f({\bf t})$ we
can write $f({\bf t}\pm [z^{-1}])=e^{\pm D(z)}f({\bf t})$.
We have to show that
$$
D(a)\phi ({\bf t})=\sum_{k\geq 1}\frac{a^{-k}}{k}\, B_k \phi ({\bf t}),
$$
with $\phi ({\bf t})$ given by 
(\ref{mkp6a}) and $B_k$ being the differential 
operator in $x$ of $k$-th order
given by (\ref{Bk}). 

To this end, we note that the function $\phi ({\bf t})$ 
defined as the ration of KP tau-functions (\ref{mkp6a})
satisfies the linear equation for the $\phi ({\bf t})$:
\beq\label{mkp6k}
a_1 \phi_2-a_2\phi_1 =a_{12} \frac{\tau \tau_{12}}{\tau_1 \tau_2}\,
\phi_{12}.
\eeq
This immediately follows from equation (\ref{mkp5d}) after the
substitution $\hat \tau =\tau \phi$.
Letting $a_2\to \infty$ (and renaming $a_1=a$), 
we convert it into the following linear differential-difference
equation:
\beq\label{mkp8a}
\Bigl (1-e^{-D(a)}\Bigr )\phi ({\bf t})=
a^{-1}\left (\p_{x}-\p_{x}\log 
\frac{\tau ({\bf t}-[a^{-1}])}{\tau ({\bf t})}\right )\phi ({\bf t}),
\eeq
or
\beq\label{St61}
a\Bigl (1-e^{-D(a)}\Bigr )\phi ({\bf t}) -
\Bigl (\p_x - V({\bf t}, a)\Bigr )
\phi ({\bf t}) =0,
\eeq
where
\beq\label{St71}
V({\bf t}, a)=
\p_{x}\log \frac{\tau ({\bf t}-[a^{-1}])}{\tau ({\bf t})}=
\sum_{k\geq 1}V_k ({\bf t}) a^{-k}.
\eeq
The functions $V_k ({\bf t})$ are expressed through logarithmic
derivatives of the tau-function $\tau$.

The next step is to find
action of $D(a)$ on the BA function. This can be done in the
following way. The starting point
is the well known Hirota-Miwa three-term bilinear relation for the
KP tau-function (\ref{kp1d}) (which is in fact equivalent to the whole
hierarchy, see \cite{Shigyo13} for a direct proof).
Putting $a_3=z$ and
using formula (\ref{kp4}) for the BA function, this relation can be represented in the form a linear  
equation for the BA function:
\beq\label{mkp7}
a_1\psi ({\bf t}+[a_2^{-1}], z)-
a_2\psi ({\bf t}+[a_1^{-1}], z)=(a_1-a_2)
\frac{\tau ({\bf t})
\tau ({\bf t}+[a_1^{-1}]+[a_2^{-1}])}{\tau ({\bf t}+[a_1^{-1}])
\tau ({\bf t}+[a_2^{-1}])}\, \psi ({\bf t}+[a_1^{-1}]+[a_2^{-1}])
\eeq
(see \cite{Z25a} for details).
Letting $a_2\to \infty$ (and renaming $a_1=a$), 
we convert it into the following linear differential-difference
equation for the BA function:
\beq\label{St61a}
a\Bigl (1-e^{-D(a)}\Bigr )\psi ({\bf t},z) -
\Bigl (\p_x - V({\bf t}, a)\Bigr )
\psi ({\bf t},z) =0,
\eeq
which is the same as equation (\ref{St61}) for the $\phi ({\bf t})$.

The operator $e^{-D(a)}-1$ can be expanded as
\beq\label{exp}
e^{-D(a)}-1=\sum_{k\geq 1}p_k(-[\p_x^{-1}] )a^{-k},
\eeq
where 
$p_k({\bf t})$ are elementary Schur polynomials defined by
$$
\exp \Bigl (\sum_{k\geq 1}t_k z^k \Bigr ) =1+ \sum_{k\geq 1}
p_k({\bf t})z^k.
$$
Their general structure is
$
p_k({\bf t})= t_k + P_k(t_{k-1}, \ldots , t_1),
$
where $P_k$ is a polynomial of the variables $t_l$ with $l<k$.
For example,
$$
\begin{array}{l}
p_1({\bf t})=t_1, \quad p_2({\bf t})=t_2 +\frac{1}{2}t_1^2,
\quad 
p_3({\bf t})=t_3 +t_1t_2 +\frac{1}{6}t_1^3,
\end{array}
$$
so
$$
\begin{array}{l}
p_1(-[\p_x^{-1}])=-\p_{t_1}, 
\\ \\
p_2(-[\p_x^{-1}])=
\frac{1}{2}(-\p_{t_2} + \p_{t_1}^2 ),
\\ \\
p_3(-[\p_x^{-1}])=\frac{1}{3}(-\p_{t_3} + 
\frac{3}{2}\p_{t_2}\p_{t_1} -\frac{1}{2}\p_{t_1}^3 ).
\end{array}
$$
Then from equation (\ref{St61}) it follows that the function
$\phi$ satisfies linear differential equations of the form
\beq\label{St71a}
p_{k+1}(-\tilde \p ) \phi - V_k \phi =0, \qquad k\geq 1.
\eeq
To identify them with (\ref{wave1}), we note that, as it follows from
(\ref{St61a}),
the KP BA function $\psi ({\bf t}, z)$ given by (\ref{kp4})
satisfies the same equations
\beq\label{St7ab}
p_{k+1}(-\tilde \p ) \psi - V_k \psi =0, \qquad k\geq 1.
\eeq
The structure of these equations allows one to
express the derivative $\p_{t_k}\psi$ through derivatives
$\p_{t_l}\psi$ with $l<k$:
\beq\label{rhs}
\p_{t_k}\psi ={\sf B}_k(\p_{t_{k-1}}, \ldots , \p_{t_1})\psi ,
\eeq
where ${\sf B}_k$ in the right-hand side is some differential
operator in the variables $t_1, \ldots , t_{k-1}$. Applying it
successively for $k=2,3, \ldots $, it is possible to represent
the right-hand side of (\ref{rhs}) 
as a differential operator containing
$\p_{t_1}=\p_x$ only:
$$
\p_{t_k}\psi =B_k(\p_{x})\psi .
$$
However, we already know from Corollary \ref{corollary:kp1} that
for the BA function the operators 
$B_k(\p_{x})$ coincide with the 
operators ${\cal B}_k$ defined in (\ref{Bk}) (see (\ref{kp7a})).
Since the function $\phi$ satisfies the same linear equations as
the BA function $\psi$, we conclude that from (\ref{St61})
it follows that
$\p_{t_k}\phi=B_k \phi $, i.e., $\phi$ defined 
in (\ref{mkp6a}) is indeed a KP wave function.
\square

%%%%%%%%%%%%%%%%%%%%%%%%%%B\"acklund-Darboux in terms of Lax
%%%%%%%%%%%%%%%%%%%%%%%%%%

Equation (\ref{mkp6}) defines what is called 
{\it forward} BD transformation (of the KP hierarchy). 
A {\it backward} one is
defined in a similar way with the help of any adjoint wave
function $\phi^*({\bf t})$:
\beq\label{mkp6d}
\hat \tau^* ({\bf t})=\phi^* ({\bf t}) \tau ({\bf t}),
\eeq
or, using (\ref{mkp2b}),
\beq\label{mkp6e}
\hat \tau^* ({\bf t})=\int_{\CCC}\rho^*(z)e^{-\xi ({\bf t}, z)}
\tau ({\bf t}+[z^{-1}]) \, dz.
\eeq
More details can be found, e.g., in \cite{Z25a}.
Here we just mention that the Lax operator $L$ for the forward
BD transformation $\tau \to \hat \tau =\phi \tau$ is transformed
as follows:
\beq\label{LBD}
L \longrightarrow \, \phi \p_x \phi^{-1} L  \phi \p_x^{-1}\phi^{-1}
\eeq
(see \cite{OR} for details).

We conclude this section by presenting nonlinear equations 
satisfied by the functions $\phi$ and $\phi^*$.

\begin{theorem}\label{theorem:fromKPtomKP}
The KP wave functions $\phi$ and $\phi^*$ satisfy the 
equations
\beq\label{mkp51}
\frac{a_1 \phi_2 -a_2 \phi_1}{\phi_{12}}+
\frac{a_2 \phi_3 -a_3 \phi_2}{\phi_{23}}+
\frac{a_3 \phi_1 -a_1 \phi_3}{\phi_{13}}=0,
\eeq
\beq\label{mkp51a}
\frac{a_2 \phi^{*}_{12} -a_3 \phi^{*}_{13}}{\phi^{*}_{1}}+
\frac{a_3 \phi^{*}_{23} -a_1 \phi^{*}_{12}}{\phi^{*}_{2}}+
\frac{a_1 \phi^{*}_{13} -a_2 \phi^{*}_{23}}{\phi^{*}_{3}}=0,
\eeq
which are the discrete mKP equations. 
\end{theorem}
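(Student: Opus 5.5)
The plan is to reduce both identities to the three-term Hirota--Miwa relation (\ref{kp1d}) for the KP tau-function $\tau$. The bridge is the mKP relation (\ref{mkp5d}), applied to the pair obtained by the forward or backward BD transformation.

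For the wave function $\phi$, Theorem \ref{theorem:mkp1}a) says that $\hat\tau=\phi\tau$ is a KP tau-function and that $(\tau,\hat\tau)$ satisfies (\ref{mkp5}), hence also its three-point corollary (\ref{mkp5d}). Substituting $\hat\tau=\tau\phi$ gives exactly (\ref{mkp6k}), which I rewrite as
$$
\frac{a_1\phi_2-a_2\phi_1}{\phi_{12}}=a_{12}\,\frac{\tau\,\tau_{12}}{\tau_1\tau_2}.
$$
Writing the same identity for the cyclic permutations $(23)$ and $(31)$, I get for the third term $(a_3\phi_1-a_1\phi_3)/\phi_{13}=a_{31}\tau\tau_{13}/(\tau_1\tau_3)$. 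Adding the three gives
$$
\frac{\tau}{\tau_1\tau_2\tau_3}\bigl(a_{12}\tau_{12}\tau_3+a_{23}\tau_{23}\tau_1+a_{31}\tau_{13}\tau_2\bigr),
$$
which vanishes by (\ref{kp1d}). This proves (\ref{mkp51}).

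For the adjoint wave function $\phi^*$, I use the backward BD transformation (\ref{mkp6d}), $\hat\tau^*=\phi^*\tau$. The input I need is that $(\hat\tau^*,\tau)$ is an mKP pair with $\hat\tau^*$ in the \emph{lower} position, i.e.\ $\hat\tau^*$ plays the role of $\tau(n-1)$ and $\tau$ that of $\tau(n)$. This is the adjoint counterpart of Theorem \ref{theorem:mkp1}a), and I would cite \cite{Z25a}. It can also be checked directly: plug (\ref{mkp6e}) into (\ref{mkp5}) and use (\ref{kp1}) for $\tau$ under the integral, exactly as in the forward case. This is the only step that is not a pure computation, and I regard it as the main point to get right. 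In particular the orientation of the pair matters: if $\hat\tau^*$ is put on the wrong side, one lands on a differently shaped relation rather than (\ref{mkp51a}).

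With the orientation fixed, (\ref{mkp5d}) reads $a_1\hat\tau^*_1\tau_2-a_2\hat\tau^*_2\tau_1=a_{12}\hat\tau^*\tau_{12}$. I shift it by $[a_3^{-1}]$ and relabel $(1,2,3)\to(2,3,1)$ to obtain
$$
a_2\hat\tau^*_{12}\tau_{13}-a_3\hat\tau^*_{13}\tau_{12}=a_{23}\hat\tau^*_1\tau_{123}.
$$
Substituting $\hat\tau^*=\phi^*\tau$ gives
$$
\frac{a_2\phi^*_{12}-a_3\phi^*_{13}}{\phi^*_1}=a_{23}\,\frac{\tau_1\tau_{123}}{\tau_{12}\tau_{13}},
$$
and cyclically $(a_3\phi^*_{23}-a_1\phi^*_{12})/\phi^*_2=a_{31}\tau_2\tau_{123}/(\tau_{12}\tau_{23})$ and $(a_1\phi^*_{13}-a_2\phi^*_{23})/\phi^*_3=a_{12}\tau_3\tau_{123}/(\tau_{13}\tau_{23})$. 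Adding the three gives
$$
\frac{\tau_{123}}{\tau_{12}\tau_{13}\tau_{23}}\bigl(a_{23}\tau_1\tau_{23}+a_{31}\tau_2\tau_{13}+a_{12}\tau_3\tau_{12}\bigr),
$$
which again vanishes by (\ref{kp1d}). This proves (\ref{mkp51a}).
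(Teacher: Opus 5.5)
Your proof is correct and follows the paper's argument. For $\phi$, you rewrite (\ref{mkp6k}) as $\frac{a_1\phi_2-a_2\phi_1}{\phi_{12}}=a_{12}\tau\tau_{12}/(\tau_1\tau_2)$, sum over the cyclic permutations, and invoke (\ref{kp1d}), exactly as the paper does.

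For $\phi^*$ the paper only says ``in a similar way'', and your version supplies that step correctly. You put the backward BD pair $(\hat\tau^*,\tau)$ with $\hat\tau^*$ in the lower slot, which is the right orientation. The relation you need, $a_1\hat\tau^*_1\tau_2-a_2\hat\tau^*_2\tau_1=a_{12}\hat\tau^*\tau_{12}$, does follow from inserting (\ref{mkp6e}) and applying (\ref{kp1d}) with the points $a_1,a_2,z$ under the integral.
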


\noindent
{\it Proof.} To prove (\ref{mkp51}), we rewrite equation (\ref{mkp6k})
in the form
$$
\frac{a_1 \phi_2 -a_2 \phi_1}{\phi_{12}}=
a_{12}\, \frac{\tau_3 \tau_{12} \tau}{\tau_1 \tau_2 \tau_3}.
$$
Summing three such equations corresponding to cyclic 
permutations of $(123)$ and using (\ref{kp1d}), we obtain
(\ref{mkp51}). Equation (\ref{mkp51a}) can be proved in a
similar way.
\square

\begin{remark}
Note that $\phi^{-1}$ satisfies the same 
equation (\ref{mkp51a}) as $\phi^*$. Therefore,
if $\phi ({\bf t})$ is any wave function corresponding 
to the Lax operator $L$ of the KP hierarchy, then
$\phi^{-1} ({\bf t})$
is an adjoint wave function for the Lax operator $\hat L$
in the sense of Definition \ref{definition:wavekp}.
\end{remark}

\noindent
Let us present some other forms of equations
(\ref{mkp51}), (\ref{mkp51a}). According to the remark above,
it is enough to write the equations for $\phi$.
First of all, after overall shifts 
of the variables, equation (\ref{mkp51}) can be written as
\beq\label{mkp511a}
\frac{a_1 \phi^{31} -a_2 \phi^{32}}{\phi^{3}}+
\frac{a_2 \phi^{12} -a_3 \phi^{13}}{\phi^{1}}+
\frac{a_3 \phi^{23} -a_1 \phi^{12}}{\phi^{2}}=0.
\eeq
A direct calculation shows that this is the same as
\beq\label{mkp51b}
\left | 
\begin{array}{ccc}
1 & a_1 \phi^{1} & \phi^{23} \phi^{1}
\\ && \\
1 & a_2 \phi^{2} & \phi^{13} \phi^{2}
\\ && \\
1 & a_3 \phi^{3} & \phi^{12} \phi^{3}
\end{array}
\right | =0.
\eeq 
To write down another useful form of this equation, we denote
\beq\label{R}
R(\alpha , \beta)=\frac{a_{\alpha}\phi^{\alpha} -
a_{\beta}\phi^{\beta}}{\phi}, \quad \alpha , \beta =1,2,3,
\eeq
then (\ref{mkp511a}) can be written as
$
R^3(1,2) + R^1(2,3) + R^2(3,1) =0. 
$
Moreover, a direct calculation shows that this is equivalent to
\beq\label{R1}
\frac{R^{\gamma}(\alpha , \beta )}{R(\alpha , \beta )}=
\frac{R^{\alpha}(\beta , \gamma )}{R(\beta , \gamma )}, 
\qquad 
(\alpha \beta \gamma)=(123), (231), (312).
\eeq
In the explicit form, this reads:
\beq\label{R3}
\frac{\phi^1 (a_1 \phi^{13} -a_2 \phi^{23})
(a_2 \phi^2 -a_3\phi^3)}{\phi^3 (a_2 \phi^{12} -a_3 \phi^{13})
(a_1 \phi^1 -a_2\phi^2)}=1,
\eeq
which is equivalent to the similar equations obtained from it
by cyclic permutations of the indices $1,2,3$.

So far we were following the way from KP to mKP. The next theorem
(inverse to Theorem \ref{theorem:fromKPtomKP})
shows that there is also a way back: starting from a
solution to equation (\ref{mkp51}) (or (\ref{mkp511a})) it is
possible to restore the corresponding solutions 
$\tau$, $\hat \tau$ to the KP hierarchy. 

\begin{theorem}\label{theorem:inversetoKPtomKP}
Let $\phi ({\bf t})$ be any solution of the mKP equation
(\ref{mkp511a}) (or (\ref{R3})). 
Then there exist functions $\tau ({\bf t})$ and
$\hat \tau ({\bf t})$ such that both of them are KP tau-functions
connected by equation (\ref{mkp5}).
\end{theorem}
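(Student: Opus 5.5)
Given $\phi$ solving (\ref{mkp511a}) (equivalently (\ref{mkp51}) after overall shifts), the plan is to first build a KP tau-function $\tau$ such that
\beq\label{plan-key}
\frac{a_1 \phi_2 -a_2 \phi_1}{a_{12}\,\phi_{12}}=\frac{\tau\,\tau_{12}}{\tau_1\tau_2}
\eeq
for all $a_1,a_2$, and then to set $\hat\tau=\phi\tau$. With this choice, (\ref{plan-key}) is just equation (\ref{mkp6k}). Multiplying (\ref{plan-key}) through by $\tau_1\tau_2\phi_{12}$ and substituting $\hat\tau=\phi\tau$ recovers (\ref{mkp5d}). Since (\ref{mkp5d}) is equivalent to the whole mKP hierarchy \cite{Shigyo13}, this gives (\ref{mkp5}). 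The proposition of Section 3.1 then shows that both $\tau$ and $\hat\tau$ are KP tau-functions. So everything reduces to constructing $\tau$.

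Set $F(a_1,a_2;{\bf t})$ equal to the left-hand side of (\ref{plan-key}). We need $\tau$ with $\tau\tau_{12}/(\tau_1\tau_2)=F$, which is a discrete "second logarithmic derivative" condition. To get it, introduce $\sigma(a;{\bf t})$, intended to be $\log(\tau_1/\tau)$, that is, $\log\tau({\bf t}+[a^{-1}])-\log\tau({\bf t})$. Taking logs of (\ref{plan-key}) requires
\[
\sigma(a_2;{\bf t}+[a_1^{-1}])-\sigma(a_2;{\bf t})=\log F(a_1,a_2;{\bf t}).
\]
This will be solved in two stages. First, the symmetry $F(a_1,a_2)=F(a_2,a_1)$ is evident from the definition. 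Second, the three-point consistency condition, i.e. the discrete analogue of $\p_i\p_j\p_k$ being symmetric, must hold. Concretely, $\log F$ must be closed as a discrete 2-form in three directions:
\[
\frac{F(a_2,a_3;{\bf t}+[a_1^{-1}])}{F(a_2,a_3;{\bf t})}=\frac{F(a_1,a_3;{\bf t}+[a_2^{-1}])}{F(a_1,a_3;{\bf t})}.
\]
Written out, this is exactly the cyclic identity (\ref{R1}), or (\ref{R3}), shifted appropriately. Recall that (\ref{R1}) was shown above to be equivalent to (\ref{mkp511a}). So closedness is precisely the hypothesis.

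Once closedness holds, the discrete Poincar\'e lemma should allow us to integrate twice and obtain $\log\tau$. In terms of derivatives, we expand in $a_i^{-1}$: the leading parts give $\p_{t_i}\p_{t_j}\log\tau$ as functions of $\phi$, and compatibility of these mixed derivatives follows from the closedness condition. The solution is determined up to a factor $e^{\ell({\bf t})}$ with $\ell$ linear (an equivalent tau-function), together with the gauge freedom in $\phi$.

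The main obstacle is making this integration rigorous in the infinite set of times with Miwa shifts. The issue is showing that one global function $\tau({\bf t})$ reproduces $F$ for all continuous $a_1,a_2$, and not just on a lattice. I would first try working with the generating operator $D(a)$ from (\ref{D(z)}) and defining $\p_x\log\tau$ through the $a_2\to\infty$ limit of (\ref{plan-key}), which gives
\[
\p_x\log\frac{\tau({\bf t}+[a^{-1}])}{\tau({\bf t})}=\frac{a(\phi_1-\phi)-\p_x\phi_1}{\phi_1}\quad\text{(schematically)}.
\]
I would then verify the compatibility of the resulting system with the $t_k$-flows by expanding in $a^{-1}$. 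The price is that only the differential consequences of (\ref{plan-key}) are obtained directly. Returning to the full finite-shift relation would use the uniqueness argument of \cite{TT95,Shigyo13}.
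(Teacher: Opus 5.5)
This is essentially the paper's proof. The paper also identifies the closedness condition with (\ref{R1}), written for $R({\bf t},a,b)=\bigl(a\phi({\bf t}-[a^{-1}])-b\phi({\bf t}-[b^{-1}])\bigr)/\phi({\bf t})$; your $F(a,b;{\bf t})$ is exactly $R({\bf t}+[a^{-1}]+[b^{-1}],a,b)/(a-b)$. It then integrates twice, first with the third point as a parameter and then using the antisymmetry of $R$ in $(a,b)$, and ends with (\ref{mkp5c}) for $\hat\tau=\phi\tau$.

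The step you flag as the main obstacle is settled directly by Lemma \ref{lemma:technical2}, the DJKM lemma, which is precisely the discrete Poincar\'e lemma for continuous Miwa shifts. Apply it (after ${\bf t}\to-{\bf t}$, since your shifts are forward) first to $h({\bf t},a)=F(a,c;{\bf t})$, then to the resulting $G({\bf t},c)$ via $F(a,c)=F(c,a)$. So the detour through differential consequences and \cite{TT95,Shigyo13} is unnecessary. Your normalization by $a_{12}$ even spares you the paper's twist $e^{-\xi({\bf t},z)}G$ and the limit (\ref{R7}).
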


\noindent
In the proof, the following technical lemma is used.

\begin{lemma}\label{lemma:technical2}
Let $h({\bf t}, z)$ be a function of ${\bf t}$, $z$ 
satisfying the functional
relation
$$
\frac{h({\bf t}-[a^{-1}], b)}{h({\bf t}, b)}=
\frac{h({\bf t}-[b^{-1}], a)}{h({\bf t}, a)}
$$
for all $a,b \in \CC$. Then there exists a function $H({\bf t})$
such that
$$
h({\bf t}, a)=\frac{H({\bf t}-[a^{-1}])}{H({\bf t})}.
$$
\end{lemma}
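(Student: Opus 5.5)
We need to prove Lemma \ref{lemma:technical2}: from $h({\bf t}-[a^{-1}],b)/h({\bf t},b)=h({\bf t}-[b^{-1}],a)/h({\bf t},a)$, construct $H$ with $h({\bf t},a)=H({\bf t}-[a^{-1}])/H({\bf t})$.

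The plan is to reduce the multiplicative cocycle condition to an additive one and integrate it in the style of the proof that the BA-function determines the tau-function (Remark \ref{remark:psi}, \cite{DJKM83}). Throughout we use the operator $D(z)$ of (\ref{D(z)}), so that $f({\bf t}-[a^{-1}])=e^{-D(a)}f({\bf t})$. Two normalization facts are implicit and will be assumed:
\begin{itemize}
\item $h({\bf t},a)$ is holomorphic in $a$ near $\infty$;
\item $h({\bf t},a)\to 1$ as $a\to\infty$, as is the case for any such ratio.
\end{itemize}
Under these assumptions $g({\bf t},a)=\log h({\bf t},a)=\sum_{k\geq 1}g_k({\bf t})a^{-k}$ is well defined, and the hypothesis becomes
$$
\bigl(e^{-D(a)}-1\bigr)g({\bf t},b)=\bigl(e^{-D(b)}-1\bigr)g({\bf t},a).
$$
We look for $F=\log H$ with $g({\bf t},a)=\bigl(e^{-D(a)}-1\bigr)F({\bf t})$.

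\textbf{Step 1: extract a closed 1-form.} Expand the additive relation as $b\to\infty$. Since $e^{-D(b)}-1=-b^{-1}\p_x+O(b^{-2})$ and $g({\bf t},b)=g_1b^{-1}+O(b^{-2})$, the coefficient of $b^{-1}$ gives
$$
\bigl(e^{-D(a)}-1\bigr)g_1=-\p_x g({\bf t},a).
$$
Higher orders in $b^{-1}$ give $\bigl(e^{-D(a)}-1\bigr)g_k=p_k(-\tilde\p)\,g({\bf t},a)$, where $p_k(-\tilde\p)$ denotes the Schur polynomial $p_k$ evaluated on $-\tilde\p=\{-\p_{t_1},-\tfrac12\p_{t_2},-\tfrac13\p_{t_3},\ldots\}$, as in (\ref{exp}). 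From these I want to define quantities $\omega_k$ with $\p_{t_j}\omega_k=\p_{t_k}\omega_j$. The natural candidate is to read $\omega_k=\p_{t_k}F$ off from $g=(e^{-D(a)}-1)F$: order by order in $a^{-1}$, $g_k=p_k(-\tilde\p)F=-\tfrac1k\p_{t_k}F+(\text{lower derivatives})$. This is triangular, so $\omega_k$ can be expressed recursively through the $g_j$ and the previously defined $\omega$'s.

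\textbf{Step 2: integrability.} Check $\p_{t_j}\omega_k=\p_{t_k}\omega_j$ using the symmetric relation in $a,b$. A cleaner route is to symmetrize directly: the relation says that the generating 1-form $\Omega(a)=(e^{-D(a)}-1)^{-1}g(\cdot,a)$, defined as a formal series, is independent of $a$ up to a constant. Acting with the commuting operators $e^{-D(a)}-1$ and $e^{-D(b)}-1$ on the candidates $\Omega(a)$ and $\Omega(b)$ gives the same result, namely the hypothesis. Hence $\Omega(a)-\Omega(b)$ lies in the common kernel of all the operators $e^{-D(c)}-1$, i.e.\ it is a constant. This common-kernel statement is what makes the construction consistent.

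\textbf{Step 3: integrate and exponentiate.} Once the compatibility $\p_{t_j}\omega_k=\p_{t_k}\omega_j$ holds, there is $F$ with $\p_{t_k}F=\omega_k$, unique up to an additive constant (Poincar\'e lemma, formally or locally). One then verifies $(e^{-D(a)}-1)F=g+c(a)$. The constant $c(a)$ vanishes by the normalization at $a=\infty$ together with the symmetry, since the hypothesis forces the combination $c(a)$ to be $a$-independent and hence zero. Setting $H=e^F$ completes the proof.

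\textbf{Main obstacle.} The hard part is Step 2: proving rigorously that the recursively defined $\omega_k$ are closed. I would try the method of \cite{TT95}, \cite{Teo11}, already invoked in the proof of Theorem \ref{theorem:mkp1}. The idea is to apply $\p_x$ to the Step 1 identity and compare with the $b$-expansion to show that $\p_x$ of each closedness defect vanishes. It then remains to remove the $x$-independent ambiguity, which requires inducting on the degree $k$.
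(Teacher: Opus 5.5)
Your reduction is the right one (taking logarithms, imposing $h({\bf t},a)\to 1$ as $a\to\infty$ --- genuinely necessary, since $h\equiv c\neq 1$ satisfies the hypothesis but not the conclusion --- and solving the triangular system for $\omega_k$). However, the proof stops exactly at the step that carries the content: you never establish closedness $\p_{t_j}\omega_k=\p_{t_k}\omega_j$. Write $e^{-D(a)}-1=\sum_{k\geq 1}a^{-k}\Delta_k$ with $\Delta_k=p_k(-\tilde\p)$. The ``cleaner route'' in Step 2 does not work. Since $\Delta_1=-\p_x$, inverting $e^{-D(a)}-1$ requires $\p_x^{-1}$, which on functions is only a right inverse, defined modulo $x$-independent terms. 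So $\Omega(a)$ is not well defined. The identities $(e^{-D(c)}-1)(e^{-D(a)}-1)^{-1}=(e^{-D(a)}-1)^{-1}(e^{-D(c)}-1)$ and $(e^{-D(a)}-1)^{-1}(e^{-D(a)}-1)=1$ behind your common-kernel argument hold only modulo such terms, and controlling them is the whole difficulty. Your fallback (differentiating the $b^{-1}$-identity in $x$) cannot succeed on its own either. Take $g_1=0$, $g_2=t_3^2$, $g_k=0$ for $k\geq 3$: this satisfies every relation $\Delta_j g_1=-\p_x g_j$, yet no $F$ exists. Indeed, $\Delta_1F=\Delta_3F=0$ force $\p_{t_3}F=0$, while $\Delta_2F=t_3^2$ forces $\p_{t_2}F=-2t_3^2$, which is incompatible. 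Finally, your Step 3 claim that the hypothesis forces $c(a)$ to be $a$-independent is false. Any $a$-dependent constant is annihilated by every $e^{-D(b)}-1$, so the hypothesis is blind to it; it could only be absorbed by $F\to F+\ell({\bf t})$ with $\ell$ linear.

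The missing idea is a short algebraic identity that uses all the relations, not just the $b^{-1}$ ones. Expanding the hypothesis in both $a^{-1}$ and $b^{-1}$ gives $\Delta_j g_k=\Delta_k g_j$ for all $j,k\geq 1$. Since $\Delta_k=-\frac1k\p_{t_k}+P_k(\p_{t_1},\ldots,\p_{t_{k-1}})$ with $P_k$ having no constant term, you can write $\Delta_k=\sum_{l\leq k}C_{kl}\p_{t_l}$. Here $C$ is lower triangular, with entries in the commutative ring of constant-coefficient differential operators and $C_{kk}=-1/k$. Your recursion is exactly $g_k=\sum_{l}C_{kl}\omega_l$. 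Because everything commutes,
\[
\Delta_j g_k-\Delta_k g_j=\sum_{m,l}C_{jm}C_{kl}\bigl(\p_{t_m}\omega_l-\p_{t_l}\omega_m\bigr).
\]
Since $C$ has a lower-triangular inverse in the same ring, applying $C^{-1}$ in both indices shows that the hypothesis is equivalent to $\p_{t_m}\omega_l=\p_{t_l}\omega_m$ for all $l,m$. The Poincar\'e lemma then gives $F$ with $\p_{t_l}F=\omega_l$. Hence $\Delta_kF=\sum_l C_{kl}\omega_l=g_k$ exactly, no $c(a)$ arises, and $H=e^F$ works. For comparison, the paper offers no argument of its own here and simply refers to \cite{DJKM83}.
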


\noindent
A detailed proof of this lemma can be found in \cite{DJKM83}.

\noindent
{\it Proof of Theorem \ref{theorem:inversetoKPtomKP}.}
Put
\beq\label{R4}
R({\bf t}, a, b)=\frac{a \phi ({\bf t}
-[a^{-1}])-b \phi ({\bf t}-[b^{-1}])}{\phi ({\bf t})},
\eeq
then equation (\ref{R3}) states that
$$
\frac{R({\bf t}-[a_1^{-1}], a_2, a_3)}{R({\bf t}, a_2, a_3)}=
\frac{R({\bf t}-[a_2^{-1}], a_1, a_3)}{R({\bf t}, a_1, a_3)}.
$$
According to Lemma \ref{lemma:technical2}, there exists a function
$G({\bf t}, z)$ such that
\beq\label{R51}
R({\bf t}, a, b)=\frac{a\, G({\bf t}-[a^{-1}], b)}{G({\bf t}, b)}
\eeq
(the multiplier $a$ in the numerator is restored by comparing
(\ref{R4}) and (\ref{R51}) as $a\to \infty$). Moreover, the
definition of $R({\bf t}, a, b)$ implies that
$R({\bf t}, a, b)=-R({\bf t}, b, a)$, i.e., it holds
$$
\frac{a\, G({\bf t}-[a^{-1}], b)}{G({\bf t}, b)}=-
\frac{b\, G({\bf t}-[b^{-1}], a)}{G({\bf t}, a)}.
$$
In particular, we see from this that
\beq\label{R7}
\lim_{b\to \infty} \frac{a\, 
G({\bf t}-[a^{-1}], b)}{ b \, G({\bf t}, b)}=-1.
\eeq
Modifying the function $G$ as $\tilde G({\bf t}, z)=
e^{-\xi ({\bf t}, z)}G({\bf t}, z)$, we have for the $\tilde G$:
$$
\frac{\tilde G({\bf t}-[a^{-1}], b)}{\tilde G({\bf t}, b)}=
\frac{\tilde G({\bf t}-[b^{-1}], a)}{\tilde G({\bf t}, a)},
$$
and
$
\lim\limits_{a\to \infty} \tilde G({\bf t}, a)=1
$
(this follows from (\ref{R7})).
Applying Lemma \ref{lemma:technical2} again, we conclude that 
there exists a function $F({\bf t})$ such that
$$
\tilde G({\bf t}, z)=\frac{F({\bf t} -[z^{-1}])}{F({\bf t})}.
$$
Therefore, for $R({\bf t}, a,b)$ we have:
\beq\label{R8}
R({\bf t}, a,b)=(a-b)\frac{F({\bf t}-
[a^{-1}]-[b^{-1}])F({\bf t})}{F({\bf t}-[a^{-1}])
F({\bf t}-[b^{-1}])},
\eeq
and equation (\ref{mkp511a}) reads:
$$
R^1(a_2,a_3)+R^2(a_3,a_1)+R^3(a_1,a_2)=
a_{23}\frac{F^{123} F^1}{F^{12} F^{13}} +
a_{31}\frac{F^{123} F^2}{F^{23} F^{21}}+
a_{12}\frac{F^{123} F^3}{F^{13} F^{22}}=0.
$$
The latter equality can be rewritten as
$
a_{12}F^{12}F^3 + a_{23}F^{23}F^1 + a_{31}F^{31}F^2 =0.
$
Thus we can identify $F$ with a KP tau-function $\tau$ 
satisfying the bilinear relation (\ref{kp1c}) (and, therefore,
equation (\ref{kp1})).
Introducing the function $\hat F =\phi F$, we rewrite equation
(\ref{R4}) as
$$
a_1 \hat F^1 F^2 -a_2 \hat F^2 F^1 =a_{12}\hat F F^{12},
$$
which shows that $\hat F$ can be identified with the KP tau-function
$\hat \tau$ satisfying equation (\ref{mkp5c}) 
(and, therefore, (\ref{mkp5})).
\square

The following proposition will be used in Section 
\ref{section:mKPSchKP}.

\begin{proposition}
\label{proposition:phiphi}
Let $\phi$ and $\phi'$ be any two different 
solutions to the mKP hierarchy
(i.e., solutions to (\ref{mkp511a})). Then any linear combination
$\phi'' =a\phi +b\phi'$ with $a,b\in \CC$
is a solution to the mKP hierarchy if and
only if $\phi$ and $\phi'$ are wave functions 
corresponding to the same solution of the KP hierarchy. 
\end{proposition}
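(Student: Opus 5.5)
The plan is to work at the level of tau-functions, using Theorem \ref{theorem:inversetoKPtomKP} and Theorem \ref{theorem:mkp1}. By Theorem \ref{theorem:inversetoKPtomKP}, every solution $\phi$ of (\ref{mkp511a}) is a ratio $\hat\tau/\tau$ of two KP tau-functions linked by (\ref{mkp5}). By Theorem \ref{theorem:mkp1} b), $\phi$ is then a wave function for $\tau$. Conversely, by Theorem \ref{theorem:mkp1} a), every wave function $\phi$ for a KP tau-function $\tau$ solves (\ref{mkp511a}); this is Theorem \ref{theorem:fromKPtomKP}. So being a solution of the mKP hierarchy amounts to being a wave function for some KP solution $\tau$. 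The remaining task is to understand when a linear combination stays in this class.

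\emph{The ``if'' direction.} Suppose $\phi$ and $\phi'$ are wave functions for the same Lax operator $L$, coming from the same $\tau$. The equations (\ref{wave1}) are linear, so $\phi''=a\phi+b\phi'$ satisfies $\p_{t_k}\phi''=B_k\phi''$. Hence $\phi''$ is a wave function for $\tau$, and Theorem \ref{theorem:fromKPtomKP} shows that it solves (\ref{mkp51}) (equivalently (\ref{mkp511a})).

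\emph{The ``only if'' direction.} Let $\phi,\phi',\phi''$ all solve (\ref{mkp511a}) for all $a,b$. The first step is to extract the associated tau-function directly from $\phi$. The proof of Theorem \ref{theorem:inversetoKPtomKP} gives the intrinsic formula (\ref{R8}):
$$
\frac{a\phi^{a}-b\phi^{b}}{\phi}=(a-b)\frac{F^{ab}F}{F^{a}F^{b}} .
$$
Its right-hand side depends only on $F=\tau$, up to the equivalence $F\to e^{\ell}F$ with $\ell$ linear. Equivalently, (\ref{mkp6k}) reads
$$
a_1\phi^{2}-a_2\phi^{1}=a_{12}\,c_{12}\,\phi^{12}, \qquad c_{12}=\frac{\tau^{12}\tau}{\tau^1\tau^2}\ \text{(suitably shifted)} ,
$$
which is a linear equation for $\phi$ whose coefficients are determined by $\tau$. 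Thus $\phi$ and $\phi'$ are wave functions for the same KP solution exactly when $c_{12}=c'_{12}$ for all $a_1,a_2$. The equivalence $F\to e^\ell F$ does not affect this, since $c_{12}$ is invariant under it.

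Now suppose the coefficients differ. Write $c,c',c''$ for the coefficients attached to $\phi,\phi',\phi''$. Applying the relation to each function and taking the combination $a\phi+b\phi'$, we obtain
$$
a\,c\,\phi^{12}+b\,c'\,\phi'^{12}=c''\bigl(a\phi^{12}+b\phi'^{12}\bigr).
$$
For generic $a,b$ this forces
$$
(c-c'')\,a\,\phi^{12}+(c'-c'')\,b\,\phi'^{12}=0 .
$$
The relation must hold for all $a,b$ with $ab\neq0$. Taking two independent ratios $a:b$ with $c''$ allowed to vary gives a system. This step needs care, because $c''$ itself depends on $(a,b)$. Taking $c''$ as the unknown, the two conditions for two choices of $(a:b)$ form a relation linear in $\phi^{12}$ and $\phi'^{12}$. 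If $\phi$ and $\phi'$ are not proportional, I expect this to force $c=c''=c'$. If they are proportional, they are trivially wave functions for the same $\tau$.

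The main obstacle is exactly this last step. One must handle the dependence of $c''$ on $(a,b)$, which can be done by solving for $c''$ and checking that it is independent of the ratio $a:b$ only if $c=c'$. One must also handle the degenerate cases where $\phi^{12}$ and $\phi'^{12}$ are linearly dependent as functions of ${\bf t}$. I would first try the explicit elimination
$$
c''=\frac{ac\phi^{12}+bc'\phi'^{12}}{a\phi^{12}+b\phi'^{12}} .
$$
Substituting this into the cyclic sum form of (\ref{mkp51}), the KP condition (\ref{kp1d}) for $\tau''$ produces a quadratic identity in $a,b$. Its mixed $ab$ coefficient should vanish only when $c\equiv c'$. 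Finally, Lemma \ref{lemma:technical2} (as in Theorem \ref{theorem:inversetoKPtomKP}) recovers $\tau=\tau'$ up to equivalence from $c=c'$.
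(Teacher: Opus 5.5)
Your proposal is essentially the paper's proof. You write the three-point linear relation for $\phi,\phi',\phi''$ with coefficients $c,c',c''$ (the paper's $u,u',u''$), eliminate the $(a,b)$-dependent $c''$ as a weighted average, and impose the cyclic constraint $\sum a_{\beta\gamma}c''_{\beta\gamma}=0$ (i.e.\ (\ref{mkp51}) for $\phi''$) for all $a,b$; this gives a homogeneous linear system for $x_\alpha=c'_{\beta\gamma}-c_{\beta\gamma}$ that is nondegenerate unless $\phi'/\phi$ is constant, and then $c=c'$ forces $\tau'$ to be equivalent to $\tau$.

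One correction to your final step: requiring $c''$ to be independent of $a:b$ is not a valid criterion, since nothing forces that, and the mixed $ab$ coefficient alone does not force $c=c'$. After subtracting $\sum a_{\beta\gamma}c_{\beta\gamma}=0$, clearing denominators and dividing by $b$, the $b^2$ coefficient vanishes automatically. It is the $a^2$ and $ab$ coefficients together that give $\sum_\alpha a_{\beta\gamma}r_\alpha x_\alpha=\sum_\alpha a_{\beta\gamma}r_\alpha^2 x_\alpha=0$ with $r_\alpha=\phi'_{\beta\gamma}/\phi_{\beta\gamma}$. This Vandermonde-type system carries the same information the paper extracts by setting $a=0$ and $b=0$.
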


\noindent
{\it Proof.}
The ``if'' part is obvious: any linear combination of
the wave functions $\phi$, $\phi'$ corresponding to one and the same
solution to the KP hierarchy satisfies the linear
equations (\ref{wave1}) and thus is a KP wave function, too.
To prove the ``only if'' part, we note that 
Theorem \ref{theorem:inversetoKPtomKP} implies that $\phi$, $\phi'$ and
$\phi''$ can be
represented as ratios of some KP tau-functions:
\beq\label{phi0}
\phi =\frac{\hat \tau}{\tau}, \qquad \phi' =\frac{\bar \tau}{\tau'},
\qquad \phi'' =a\phi +b\phi'=\frac{\tilde \tau}{\tau''}.
\eeq
We have to prove that $\tau'' =\tau' =\tau$. To this end, we
will use equation (\ref{mkp5d}) written in the form of linear
equations for $\phi$, $\phi'$ and $\phi''$:
\beq\label{phi1}
\begin{array}{l}
a_1\phi_2 -a_2 \phi_1 =a_{12}u_{12} \phi_{12},
\\ \\
a_1\phi'_2 -a_2 \phi'_1 =a_{12}u'_{12} \phi'_{12},
\\ \\
a_1\phi''_2 -a_2 \phi''_1 =a_{12}u''_{12} \phi''_{12},
\end{array}
\eeq
where $\phi'' =a\phi +b\phi'$ and
$$
u_{12}=\frac{\tau \tau_{12}}{\tau_1 \tau_2}, \qquad
u'_{12}=\frac{\tau' \tau'_{12}}{\tau'_1 \tau'_2},\qquad
u''_{12}=\frac{\tau'' \tau''_{12}}{\tau''_1 \tau''_2}.
$$
Note that the 3-term bilinear equation for the KP tau-function
implies that the equation
\beq\label{kpu}
a_{12}u_{12} +a_{23}u_{23}+a_{31}u_{31}=0
\eeq
holds together with the same equations for $u'$ and $u''$.
Subtracting the last equation in (\ref{phi1}) 
from the sum of the first two, we get:
\beq\label{phi2}
a (u_{12} -u_{12}'')\phi_{12} +b (u'_{12} -u_{12}'')\phi_{12}'=0
\eeq
which should hold for all $a,b$. 
Note, however, that $u_{12}''$ 
in principle may depend on $a,b$, so we can not yet conclude 
from (\ref{phi2}) 
that $u_{12} =u_{12}'=u_{12}''$. To overcome this difficulty,
we proceed as follows. Equation (\ref{phi2}) can be rewritten as
\beq\label{phi3}
u_{12}''=\frac{au_{12}\phi_{12}+
bu'_{12}\phi'_{12}}{a \phi_{12} +b\phi'_{12}}=
u_{12} +b \frac{(u'_{12}-u_{12})\phi'_{12}}{a\phi_{12} +
b\phi'_{12}}=
u'_{12} -a \frac{(u'_{12}-u_{12})\phi_{12}}{a\phi_{12} +
b\phi'_{12}}.
\eeq
This relation allows us to obtain a system of homogeneous
linear equations for 
$$
x_{\alpha}\equiv u'_{\beta \gamma} -u_{\beta \gamma}, \qquad
(\alpha \beta \gamma )=(123), (231), (312).
$$
Multiplying both sides of the equalities in (\ref{phi3}) by
$a_{12}$ and summing over cyclic permutations of $(123)$,
we get (using equations for $u$, $u'$ and $u''$ of the form (\ref{kpu})):
$$
\begin{array}{l}
\displaystyle{
\frac{a_{23}\phi'_{23}x_1}{a\phi_{23} +b\phi'_{23}}+
\frac{a_{31}\phi'_{31}x_2}{a\phi_{31} +b\phi'_{31}}+
\frac{a_{12}\phi'_{12}x_3}{a\phi_{12} +b\phi'_{12}}=0},
\\ \\
\displaystyle{
\frac{a_{23}\phi_{23}x_1}{a\phi_{23} +b\phi'_{23}}+
\frac{a_{31}\phi_{31}x_2}{a\phi_{31} +b\phi'_{31}}+
\frac{a_{12}\phi_{12}x_3}{a\phi_{12} +b\phi'_{12}}=0}.
\end{array}
$$
It is important to note that $a$ and $b$ in the two equations
are not necessarily the same: each equation should hold 
for arbitrary $a,b$. Putting successively $a=0$, $b=0$ in
the first equation and $a=0$ in the second one, we get the
following system of linear equations:
$$
\left \{
\begin{array}{l}
a_{23}x_1 +a_{31}x_2 +a_{12}x_3=0,
\\ \\
a_{23}r_{23}x_1 +a_{31}r_{31}x_2 +a_{12}r_{12}x_3=0,
\\ \\
a_{23}r^{-1}_{23}x_1 +a_{31}r_{31}^{-1}x_2 +a_{12}r^{-1}_{12}x_3=0,
\end{array}
\right.
$$
where $r_{ij}=\phi'_{ij}/\phi_{ij}$. It is a homogeneous
linear system of the form ${\sf M}\vec x =0$, with the determinant
of the matrix ${\sf M}$ equal to
$$
\det {\sf M}=\frac{a_{12}a_{23}a_{31}}{r_{12}r_{23}r_{31}}\,
(r_{12}-r_{13})(r_{12}-r_{23})(r_{13}-r_{23}).
$$
This determinant is nonzero unless $r_{ij}=r_{ik}$ for some 
$i,j,k=1,2,3$ (i.e., $\phi ({\bf t}+[a^{-1}])=\phi'({\bf t})$
which implies that $\phi ({\bf t})=\phi'({\bf t})$),
but this is impossible since we have assumed that the wave
functions $\phi$ and $\phi'$ are different. Therefore, the system
has only the trivial solution $x_1=x_2=x_3=0$ which means 
that $u=u'$, (and, moreover, $u=u'=u''$, as this 
follows from (\ref{phi2})), i.e.,
$\displaystyle{
\frac{\tau \tau_{12}}{\tau_1 \tau_2}=
\frac{\tau' \tau'_{12}}{\tau'_1 \tau'_2}.}
$
Denoting $f=\tau'/\tau$, we write this as $f_{12}/f_2=f_1/f$, 
or $g({\bf t}+[a^{-1}])=g({\bf t})$ for the function
$g=f_1/f$. It then follows that $g({\bf t})=1$, 
i.e., we have
$f({\bf t}+[a^{-1}])=f({\bf t})$. In the same way, this
implies that $f({\bf t})=c$, where $c$ is a constant, i.e.,
$\tau' =c \tau$. The constant can be put equal to 1 without
any loss of generality because the tau-function $\tau'$ is defined
by (\ref{phi0}) only up to a constant factor. The same argument
works for $\tau''$. Therefore, we have proved that
$\tau'=\tau'' =\tau$ in (\ref{phi0}), so the wave functions
$\phi$, $\phi'$ and $\phi''$ correspond to one and the same
solution of KP with the tau-function $\tau$.
\square

\subsection{Baker-Akhiezer functions for the mKP hierarchy}

Just as it was done for the KP hierarchy, 
one may introduce BA functions for the mKP hierarchy,
so that they would satisfy a bilinear relation similar to
(\ref{kp1a}). Following \cite{Cheng2017,Cheng2018,TakTeo06}, we define them
by the formulas
\beq\label{mkp9}
\begin{array}{l}
\displaystyle{
\varphi ({\bf t}, z)=e^{\xi ({\bf t}, z)}\, 
\frac{\tau ({\bf t}-[z^{-1}])}{\hat \tau ({\bf t})}},
\\ \\
\displaystyle{
\varphi^* ({\bf t}, z)=e^{-\xi ({\bf t}, z)}\, 
\frac{\hat \tau ({\bf t}+[z^{-1}])}{z\, \tau ({\bf t})}}.
\end{array}
\eeq
They are connected with the BA functions for the KP hierarchy
in the following way:
\beq\label{con}
\begin{array}{l}
\varphi ({\bf t}, z)=\phi^{-1}({\bf t}) \psi ({\bf t}, z),
\\ \\
\varphi^* ({\bf t}, z)=\phi ({\bf t}) \hat \psi^* ({\bf t}, z).
\end{array}
\eeq
The expansions as $z\to \infty$ are of the form
\beq\label{mkp10}
\begin{array}{l}
\displaystyle{
\varphi ({\bf t}, z)=e^{\xi ({\bf t}, z)}\, 
\Bigl (w_0({\bf t})+w_1({\bf t})z^{-1}+ w_2({\bf t})z^{-2}+\ldots \, \Bigr ),}
\\ \\
\displaystyle{
\varphi^* ({\bf t}, z)=z^{-1}
e^{-\xi ({\bf t}, z)}\, 
\Bigl (w_0^{-1}({\bf t})+w^*_1({\bf t})z^{-1}+ w^*_2({\bf t})
z^{-2}+\ldots \, 
\Bigr ),}
\end{array}
\eeq
where
\beq\label{w0}
w_0({\bf t})=\frac{\tau ({\bf t})}{\hat \tau ({\bf t})}=
(\phi ({\bf t}))^{-1}\,
\eeq
(see (\ref{mkp6a})). In terms of the $\varphi$-functions,
equation (\ref{mkp5}) acquires the form
\beq\label{mkp5a}
\frac{1}{2\pi i}\oint_{C_{\infty}}\!
\varphi  ({\bf t}, z)\varphi^* ({\bf t'}, z)\, dz =1.
\eeq
Its role in the mKP theory is the same as that of equation (\ref{kp1a}) 
for the KP theory (see Remark \ref{remark:psi}). 

\subsection{Lax-Sato formalism for the mKP hierarchy}
\label{section:Lax-SatomKP}

One can introduce an mKP-analogue of the dressing operator.
It is a pseudo-differential
operator ${\cal W}$ of order 0 of the form
\beq\label{calW}
{\cal W}=w_0 +w_1 \p_x^{-1} +w_2 \p_x^{-2}+ \ldots \, ,
\eeq
where the coefficient functions $w_i=w_i({\bf t})$ are the same 
as in (\ref{mkp10}).
Clearly, we can write
\beq\label{mkp12}
\varphi ({\bf t}, z)={\cal W} e^{\xi ({\bf t}, z)}.
\eeq
The Lax operator for the mKP hierarchy is defined as
\beq\label{Laxmkp}
{\cal L}={\cal W} \p_x {\cal W}^{-1} = \p_x  +v_0+
v_1\p_x^{-1}+
v_2\p_x^{-2}+\ldots \, ,
\eeq
where
\beq\label{v0}
v_0=-\p_x \log w_0.
\eeq
It is a pseudo-differential operator of order 1, 
but, unlike the Lax operator for KP, the coefficient
in front of $(\p_x)^0$ is nonzero.
Along with the Lax operator, we introduce the purely differential
operators
\beq\label{Bkmkp}
{\cal B}_k=({\cal L}^k)_{\geq 1}, \qquad k\geq 1.
\eeq
For example, ${\cal B}_1=\p_x$, ${\cal B}_2=\p_x^2 +v_0\p_x$.

The next theorem is an mKP-analogue of Theorem \ref{theorem:kp1}.

\begin{theorem} \label{theorem:mkp1a}
Let $\varphi ({\bf t}, z)$, $\varphi^* ({\bf t}, z)$ 
be functions of the form (\ref{mkp10}). 
If the bilinear relation (\ref{mkp5a}) is satisfied, then there exists a dressing operator ${\cal W}$ of the form (\ref{calW})
such that
$\varphi ({\bf t}, z)={\cal W} e^{\xi ({\bf t}, z)}$ 
satisfying the evolution
equations
\beq\label{mkp11}
\p_{t_k}{\cal W}=-({\cal L}^k)_{\leq 0}{\cal W} \;\;
\mbox{for any $k\geq 1$}.
\eeq
\end{theorem}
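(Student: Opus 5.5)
The proof will follow the proof of Theorem \ref{theorem:kp1}, with the bilinear identity (\ref{mkp5a}) in place of (\ref{kp1a}). The right-hand side is now $1$ instead of $0$, so the order-zero parts of the operators must be tracked explicitly. I set $\varphi({\bf t},z)={\cal W}e^{\xi({\bf t},z)}$ and $\varphi^*({\bf t},z)={\cal V}e^{-\xi({\bf t},z)}\,z^{-1}$. Here ${\cal W}$ has the form (\ref{calW}), read off from (\ref{mkp10}), and ${\cal V}=w_0^{-1}+w_1^*\p_x^{-1}+\ldots$; the factor $z^{-1}$ is realized as $(-\p_{x})^{-1}$ acting on $e^{-xz}$. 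I freeze $t_k=t_k'$ for $k\geq 2$. Then (\ref{mkp5a}) becomes
$\frac{1}{2\pi i}\oint (\mathcal W_x e^{xz})(\mathcal V_{x'}(-\p_{x'})^{-1}e^{-x'z})\,dz=1$ for all $x,x'$.

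The first step is a variant of Lemma \ref{lemma:technical} that handles a non-monic $P$, a $Q$ with leading order $-1$, and a nonzero constant on the right-hand side. The same residue computation as in \cite{DJKM83} should give the following. If $P=\sum_{j\ge 0}p_j\p_x^{-j}$ and $Q=\sum_{j\geq 1}q_j(-\p_x)^{-j}$, then $\frac{1}{2\pi i}\oint(P_xe^{xz})(Q_{x'}e^{-x'z})dz$ is the kernel of the operator $(PQ^\dagger)_{<0}$, with the convention that the $\p_x^{-1}$ term corresponds to a kernel constant in $x-x'$. The identity thus says that $\mathcal W\,\p_x^{-1}\mathcal V^\dagger=\p_x^{-1}$ modulo the appropriate normalization. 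Since the left side has only negative powers, this is an exact operator equality, so $\mathcal V^\dagger=\p_x\mathcal W^{-1}\p_x^{-1}$. Equivalently, $\varphi^*=(\p_x^{-1}\mathcal W^{\dagger}{}^{-1}\p_x)\ldots$ is determined by $\mathcal W$, which is the mKP analogue of (\ref{V}). I would check the leading coefficient against $w_0\cdot w_0^{-1}=1$ as a consistency test.

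Next I define ${\cal L}$ by (\ref{Laxmkp}) and differentiate (\ref{mkp5a}) in $t_n$. Because the right-hand side is constant, $\oint(\p_{t_n}-{\cal B}_n)\varphi({\bf t},z)\,\varphi^*({\bf t}',z)\,dz=0$ must hold, but only after I check that ${\cal B}_n\varphi$ contributes nothing, and this is where the choice of the projection $(\cdot)_{\geq 1}$ matters. A differential operator with zero free term can be moved across the pairing, because $\oint ({\cal B}_n\varphi)\varphi^*$ is a $t$-derivative of the constant when ${\cal B}_n$ has no order-zero part. Concretely, the operator ${\cal B}_n\mathcal W\p_x^{-1}\mathcal V^\dagger={\cal B}_n\p_x^{-1}$ has $(\cdot)_{<0}$-part equal to $0$ exactly because ${\cal B}_n$ has no free term. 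Then $(\p_{t_n}-{\cal B}_n)\varphi=R\,e^{\xi}$ with $R=\p_{t_n}\mathcal W+\mathcal L^n\mathcal W-\mathcal B_n\mathcal W=\p_{t_n}\mathcal W+(\mathcal L^n)_{\le 0}\mathcal W$. This $R$ has order $\leq 0$. The generalized lemma then gives $(R\,\p_x^{-1}\mathcal V^\dagger)_{<0}=R\,\mathcal W^{-1}\p_x^{-1}=0$, and $\mathcal W$ is invertible because $w_0\neq 0$, so $R=0$. This is exactly (\ref{mkp11}).

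I expect the main obstacle to be the first step: formulating and proving the version of Lemma \ref{lemma:technical} with the nonzero right-hand side $1$ and the extra $z^{-1}$. In particular, I must make sure that the $\p_x^{-1}$ coefficient is the one forced to equal $1$ while all lower coefficients vanish. I would prove it by expanding in powers of $x-x'$: the residue $\frac{1}{2\pi i}\oint z^{-k}e^{(x-x')z}dz=(x-x')^{k-1}/(k-1)!$ identifies each Taylor coefficient with the coefficient of $\p_x^{-k}$ of the composed operator, as in \cite{DJKM83}.
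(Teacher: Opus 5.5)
Your proposal is correct and follows essentially the same route as the paper. The generalized residue lemma (the paper's Lemma~\ref{lemma:technical1}, with a constant on the right) gives ${\cal W}\p_x^{-1}{\cal V}^{\dag}=\p_x^{-1}$; this is the paper's ${\cal V}=({\cal W}^{-1}\p_x^{-1})^{\dag}$ once your ${\cal V}(-\p_x)^{-1}$ is identified with the paper's order $-1$ operator. The same lemma with zero right-hand side, applied to $R=\p_{t_n}{\cal W}+({\cal L}^n)_{\leq 0}{\cal W}$, then forces $R\,{\cal W}^{-1}\p_x^{-1}=0$ and hence $R=0$.

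Your explicit remark that the ${\cal B}_n$-term drops out is a justification the paper leaves implicit: since ${\cal B}_n=({\cal L}^n)_{\geq 1}$ has no free term, it annihilates the constant right-hand side of (\ref{mkp5a}).
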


\noindent
{\it Proof.} For the proof we need the following mKP-analogue
of Lemma \ref{lemma:technical}:

\begin{lemma} \label{lemma:technical1}
Let ${\cal P}$ and ${\cal Q}$ be pseudo-differential operators
of the form
$$
{\cal P}=w(x)+\sum_{j\geq 1}p_j(x) \p_x^{-j}, \quad
{\cal Q}=w^{-1}(x)(-\p_x^{-1})+\sum_{j\geq 1}q_j(x) (-\p_x)^{-j-1}.
$$
Then the condition
\beq\label{lem2}
\oint_{C_{\infty}}({\cal P}_x e^{xz})\, ({\cal Q}_{x'}e^{-x'z}) \, 
dz =c
\eeq
for all $x,x'$, where $c$ is some constant, 
implies that
\beq\label{lem3}
\Bigl ({\cal P}\, {\cal Q}^{\dag}\Bigr )_{< 0}=c\p_x^{-1}.
\eeq 
\end{lemma}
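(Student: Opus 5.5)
We will mimic the proof of Lemma \ref{lemma:technical} (Lemma 1.1 of \cite{DJKM83}), using the standard fact that for pseudo-differential operators $A=\sum_i a_i(x)\p_x^i$, $B=\sum_j b_j(x)\p_x^j$ one has the residue identity
$$
\frac{1}{2\pi i}\oint_{C_\infty}(A_x e^{xz})(B_{x'}e^{-x'z})\,dz
=\sum_{k\geq 0}\frac{(x-x')^k}{k!}\,\mathrm{res}_{\p}\bigl(A\,\p_x^{k}\, B^{\dag}\bigr)\Big|_{\ldots}
$$
or, more usefully, the classical statement: $\oint (A_xe^{xz})(B_{x'}e^{-x'z})\,dz$ vanishes identically in $x,x'$ iff $(AB^{\dag})_{<0}=0$. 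Concretely, the kernel of the operator $(AB^{\dag})_{<0}$, written as $\sum_{j\geq1} c_j(x)\p_x^{-j}$, equals $\sum_j c_j(x)\frac{(x-x')^{j-1}}{(j-1)!}$ (up to the normalization $2\pi i$), and this kernel is exactly what the contour integral computes.

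The first step is to establish this kernel formula in the generality needed. The product $(A_xe^{xz})(B_{x'}e^{-x'z})$ expands as $e^{(x-x')z}\sum a_i(x)\,z^i\,\tilde b_j(x')(-z)^j$. Writing $B^{\dag}=\sum(-\p_x)^j\circ b_j$ and expanding $e^{(x-x')z}$ in powers of $(x-x')$, the coefficient of $z^{-1}$ is matched term by term with the coefficients of $AB^{\dag}$ at $\p_x^{-1},\p_x^{-2},\ldots$ multiplied by $(x-x')^{k}/k!$. The cleanest route is to test against the formal identity $\frac{1}{2\pi i}\oint e^{(x-x')z}z^{-k-1}dz=(x-x')^k/k!$ and to use the standard lemma that $\mathrm{res}_z\bigl((Ae^{xz})(Be^{-xz})\bigr)=\mathrm{res}_{\p}(AB^{\dag})$, applied with $A$ replaced by $A\p_x^k$ after Taylor expanding $B_{x'}$ around $x$. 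We simply cite this from \cite{DJKM83}, since the proof is insensitive to whether the leading coefficient is $1$ or $w(x)$.

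With the kernel formula in hand, the condition (\ref{lem2}) says that the kernel of $({\cal P}{\cal Q}^{\dag})_{<0}$ is the constant $c/(2\pi i)$ times a fixed normalization, independent of $x,x'$. Comparing Taylor coefficients in $(x-x')$ gives $c_1(x)=c$ (after absorbing the normalization into $c$, consistent with the way (\ref{mkp5a}) normalizes by $\frac{1}{2\pi i}$), and $c_j(x)=0$ for $j\geq2$. This yields exactly $({\cal P}{\cal Q}^{\dag})_{<0}=c\,\p_x^{-1}$. As a sanity check, ${\cal Q}^{\dag}=\p_x^{-1}\circ w^{-1}+\ldots$, so ${\cal P}{\cal Q}^{\dag}=w\p_x^{-1}w^{-1}+\ldots$ has leading term $\p_x^{-1}$, consistent with $c=1$ in (\ref{mkp5a}).

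The main obstacle is keeping the normalization and sign conventions straight, namely the $2\pi i$, and the definition of $(-\p_x)^{-j-1}$ acting on $e^{-x'z}$, which gives $z^{-j-1}$. We will also need the point that the inverse-power series make sense formally. The rest is bookkeeping.
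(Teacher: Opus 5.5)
Your proposal is correct and is essentially the argument the paper relies on. The paper states this lemma without proof, implicitly deferring (as it does explicitly for Lemma~\ref{lemma:technical}) to the kernel identity of \cite{DJKM83}, Lemma~1.1, which does not use monicity. Writing $({\cal P}{\cal Q}^{\dag})_{<0}=\sum_{j\geq 1}r_j(x)\p_x^{-j}$ and comparing Taylor coefficients in $x-x'$ of $\sum_{j\geq 1}r_j(x)(x-x')^{j-1}/(j-1)!=c$ gives $r_1=c$ and $r_j=0$ for $j\geq 2$, exactly as you say.

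One small slip in your sketch: Taylor expanding ${\cal Q}_{x'}e^{-x'z}$ about $x$ replaces ${\cal Q}$ by $\p_x^k{\cal Q}$, not ${\cal P}$ by ${\cal P}\p_x^k$. So the coefficient of $(x-x')^k/k!$ is $\mathrm{res}_{\p}\bigl({\cal P}{\cal Q}^{\dag}\p_x^{k}\bigr)=r_{k+1}$, not $\mathrm{res}_{\p}\bigl({\cal P}\p_x^{k}{\cal Q}^{\dag}\bigr)$ as in your first display. The kernel formula you then state in words is the correct one, so the argument is unaffected.
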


Set
$$
\varphi ({\bf t}, z)={\cal W} e^{\xi ({\bf t}, z)},\quad
\varphi^* ({\bf t}, z)={\cal V} e^{-\xi ({\bf t}, z)},
$$
where both ${\cal W}$ and ${\cal V}$ are of the form (\ref{calW})
(the latter one is of order $-1$). 
Plugging this
into (\ref{mkp5a}) with $t_k=t'_k$ for $k\geq 2$, we have:
$$
\oint_{C_{\infty}} ({\cal W}_x e^{xz})\,
({\cal V}_{x'} e^{-x'z})\, dz =1.
$$
Now we can apply Lemma \ref{lemma:technical1} 
with ${\cal P}={\cal W}$, ${\cal Q}={\cal V}$ and conclude that
$({\cal W}\, {\cal V}^{\dag})_{<0}=\p_x^{-1}$. Since the operator
in the left-hand side is of negative order, we have
${\cal W}\, {\cal V}^{\dag}=\p_x^{-1}$, hence
$
{\cal V}=({\cal W}^{-1}\p_x^{-1})^{\dag}
$
and 
$
\varphi^* ({\bf t}, z)= ({\cal W}^{-1}\p_x^{-1})^{\dag}
e^{-\xi ({\bf t}, z)}.
$

Next, define the operators ${\cal L}$ and ${\cal B}_k$ 
as in (\ref{Laxmkp}) and
(\ref{Bkmkp}). Then (\ref{mkp5a}) implies
\beq\label{kp11mkp}
\oint_{C_{\infty}}  (\p_{t_n}-{\cal B}_n  )\,
\varphi ({\bf t}, z)\varphi^* ({\bf t'}, z)\, dz =0
\eeq
for all ${\bf t}$, ${\bf t'}$.
We have, using (\ref{mkp12}) and (\ref{Laxmkp}):
$$
(\p_{t_n}-{\cal B}_n  )\varphi ({\bf t}, z)=
(\p_{t_n}{\cal W} +{\cal W}\p_x^n -{\cal B}_n {\cal W})
e^{\xi ({\bf t}, z)}
$$
$$
= (\p_{t_n}{\cal W} +{\cal L}^n {\cal W}  -{\cal B}_n {\cal W})
e^{\xi ({\bf t}, z)}
=(\p_{t_n}{\cal W} +({\cal L}^n)_{\leq 0} {\cal W})e^{\xi ({\bf t}, z)}.
$$
Again, putting $t_k=t_k'$ for $k\geq 2$, we rewrite (\ref{kp11mkp}) as
$$
\oint_{C_{\infty}}  
({\cal R} e^{xz })\, (({\cal W}^{-1}\p_x^{-1})^{\dag}e^{-x'z})
\, dz =0,
$$
where ${\cal R}$ is the operator
$
{\cal R}=\p_{t_n}{\cal W} +({\cal L}^n)_{\leq 0} {\cal W}
$.
By Lemma \ref{lemma:technical1} (with $c=0$), we 
conclude that $({\cal R}\, {\cal W}^{-1}\p_x^{-1})_{<0}=0$.
Clearly, ${\cal R}$ contains non-positive powers of $\p_x$ only 
while the operator
${\cal R}\, {\cal W}^{-1}\p_x^{-1}$ is of strictly negative
order i.e., $({\cal R}\, {\cal W}^{-1}\p_x^{-1})_{\leq 0}
={\cal R}\, {\cal W}^{-1}\p_x^{-1}=0$.
Since ${\cal W}$ is an invertible operator of order
zero, it follows from this that 
${\cal R}=0$,
which proves (\ref{mkp11}).
\square

\begin{corollary} \label{corollary:mkp1}
The function
$\varphi ({\bf t}, z)$ is an eigenfunction of the Lax operator ${\cal L}$
(\ref{Laxmkp})
with eigenvalue $z$,
\beq\label{kp7mkp}
{\cal L}\varphi ({\bf t}, z)=z\varphi ({\bf t}, z),
\eeq
and for any $z\in \CC$ it satisfies the differential equations
\beq\label{mkp13}
\p_{t_k}\varphi ({\bf t}, z)={\cal B}_k \varphi ({\bf t}, z) \;\;
\mbox{for any $k\geq 1$},
\eeq
where the $k$-th order differential operators ${\cal B}_k$ of the form
$\displaystyle{
{\cal B}_k =\p_x^k +\sum_{j=1}^{k-1}c_k \p_x^j}
$
are given by equation (\ref{Bkmkp}). 
\end{corollary}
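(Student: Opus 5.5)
The plan is to copy the proof of Corollary \ref{corollary:kp1} almost verbatim, replacing $W$, $L$, $B_k$ by ${\cal W}$, ${\cal L}$, ${\cal B}_k$. The only structural change is that the splitting of ${\cal L}^k$ now occurs at the zeroth order term. That term belongs to $({\cal L}^k)_{\leq 0}$ in the evolution equation (\ref{mkp11}) of Theorem \ref{theorem:mkp1a}, and correspondingly ${\cal B}_k=({\cal L}^k)_{\geq 1}$.

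First, the eigenvalue equation (\ref{kp7mkp}) is immediate from (\ref{mkp12}) and the definition (\ref{Laxmkp}). Indeed,
${\cal L}\varphi={\cal W}\p_x{\cal W}^{-1}{\cal W}e^{\xi({\bf t},z)}={\cal W}\p_x e^{\xi({\bf t},z)}=z\varphi$,
where we use the formal rule $\p_x^{j}e^{xz}=z^{j}e^{xz}$ for all $j\in\ZZ$.

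Second, for the time evolution we differentiate $\varphi={\cal W}e^{\xi({\bf t},z)}$ with respect to $t_k$. This gives
$\p_{t_k}\varphi=(\p_{t_k}{\cal W})e^{\xi}+{\cal W}z^ke^{\xi}=(\p_{t_k}{\cal W})e^{\xi}+{\cal W}\p_x^k e^{\xi}$.
We then substitute (\ref{mkp11}) and ${\cal W}\p_x^k={\cal L}^k{\cal W}$:
$\p_{t_k}\varphi=\bigl(-({\cal L}^k)_{\leq 0}+{\cal L}^k\bigr){\cal W}e^{\xi}=({\cal L}^k)_{\geq 1}\varphi={\cal B}_k\varphi$.
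This holds first as an identity of formal series near $z=\infty$. It then extends to any $z$ for which $\varphi$ is defined, since ${\cal B}_k$ is a purely differential operator with $z$-independent coefficients, so both sides continue analytically in $z$.

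Finally, we confirm the stated form of ${\cal B}_k$. Since ${\cal L}=\p_x+v_0+\ldots$ is monic of order 1, the power ${\cal L}^k$ has leading term $\p_x^k$. Truncating to orders $\geq 1$ removes all terms of order $\leq 0$, including the free term. Hence ${\cal B}_k=\p_x^k+\sum_{j=1}^{k-1}c_j\p_x^j$ with no zeroth order term; this is consistent with ${\cal B}_2=\p_x^2+v_0\p_x$.

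There is no real obstacle in this argument. The one point needing care is the bookkeeping of the order-zero term. It must sit in $({\cal L}^k)_{\leq 0}$ in (\ref{mkp11}) so that it cancels exactly against the complementary part $({\cal L}^k)_{\geq 1}$ of ${\cal L}^k$; this is what makes ${\cal B}_k$ annihilate constants, as required for the mKP hierarchy.
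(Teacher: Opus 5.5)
Your proposal is correct and follows the paper's proof exactly: the eigenvalue equation comes from ${\cal L}{\cal W}={\cal W}\p_x$, and the flow equation from substituting (\ref{mkp11}) and ${\cal W}\p_x^k={\cal L}^k{\cal W}$ into $\p_{t_k}({\cal W}e^{\xi({\bf t},z)})$. One minor aside: your consistency check repeats a typo from the paper, since expanding ${\cal L}^2$ gives ${\cal B}_2=\p_x^2+2v_0\p_x$, as in Lemma \ref{lemma:St2}.
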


\noindent
{\it Proof.}
The eigenvalue equation (\ref{kp7mkp}) is a trivial consequence 
of (\ref{mkp12}), (\ref{Laxmkp}). Equation (\ref{mkp13}) easily 
follows from (\ref{mkp11}):
$$
\begin{array}{lll}
\p_{t_k}\varphi ({\bf t}, z)& =& \p_{t_k}{\cal W} \, 
e^{\xi ({\bf t}, z)}+
{\cal W} z^k e^{\xi ({\bf t}, z)}
\\ && \\
&=&-({\cal L}^k)_{\leq 0}{\cal W}  e^{\xi ({\bf t}, z)}
+{\cal W}\p_x^k e^{\xi ({\bf t}, z)}
\\ && \\
&=&\Bigl (-({\cal L}^k)_{\leq 0}+{\cal L}^k\Bigr )
{\cal W}  e^{\xi ({\bf t}, z)}=({\cal L}^k)_{\geq 1}\varphi ({\bf t}, z).
\end{array}
$$
\square

There are similar statements for the 
adjoint BA function $\varphi^* ({\bf t}, z)$.
From the proof of Theorem \ref{theorem:mkp1a} we already know
that
\beq\label{kp12mkp}
\varphi^*({\bf t}, z)=({\cal W}^{-1}\p_x^{-1})^{\dag} 
e^{-\xi ({\bf t}, z)}.
\eeq
Using (\ref{mkp11}) one can prove the following corollary.
\begin{corollary}
The function $\varphi^* ({\bf t}, z)$ satisfies  
the equations
\beq\label{kp8mkp}
{\cal L}^{*}\varphi^* ({\bf t}, z)=z\varphi^* ({\bf t}, z),
\eeq
\beq\label{kp8amkp}
-\p_{t_k}\varphi^* ({\bf t}, z)=
{\cal B}^* \varphi^* ({\bf t}, z) \;\;\;
\mbox{for any $k\geq 1$},
\eeq
where
\beq\label{mkp14}
{\cal L}^*=\p_x^{-1}{\cal L}^{\dag}_k  \,\p_x, \qquad
{\cal B}^*=\p_x^{-1}{\cal B}^{\dag}_k  \,\p_x.
\eeq
\end{corollary}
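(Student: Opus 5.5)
The plan is to mimic the adjoint computation for the KP case, using the explicit dressing representation from the proof of Theorem \ref{theorem:mkp1a}:
$$
\varphi^*({\bf t},z)=({\cal W}^{-1}\p_x^{-1})^{\dag}e^{-\xi({\bf t},z)}=-\p_x^{-1}({\cal W}^{-1})^{\dag}e^{-\xi({\bf t},z)} .
$$
Set ${\cal V}=({\cal W}^{-1}\p_x^{-1})^{\dag}$. Both claims then become operator identities for ${\cal V}$, which I will derive from ${\cal W}\p_x{\cal W}^{-1}={\cal L}$ and the evolution equations (\ref{mkp11}).

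\emph{Eigenvalue equation.} Since $(-\p_x)e^{-\xi}=z e^{-\xi}$, we have $\varphi^*=z^{-1}{\cal V}(-\p_x)e^{-\xi}$; equivalently $z\varphi^*={\cal V}(-\p_x){\cal V}^{-1}\varphi^*$. Now compute
$$
{\cal V}(-\p_x){\cal V}^{-1}
=\bigl(({\cal W}^{-1}\p_x^{-1})^{\dag}\bigr)\,\p_x^{\dag}\,\bigl(\p_x{\cal W}\bigr)^{\dag}
=\bigl(\p_x{\cal W}\,\p_x\,{\cal W}^{-1}\p_x^{-1}\bigr)^{\dag}
=(\p_x{\cal L}\p_x^{-1})^{\dag}
=\p_x^{-1}{\cal L}^{\dag}\p_x ,
$$
using $\p_x^{\dag}=-\p_x$ (the two signs cancel). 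This gives (\ref{kp8mkp}) with ${\cal L}^*$ as in (\ref{mkp14}).

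\emph{Time flows.} Differentiate ${\cal V}=-\p_x^{-1}({\cal W}^{-1})^{\dag}$. From (\ref{mkp11}),
$$
\p_{t_k}({\cal W}^{-1})=-{\cal W}^{-1}(\p_{t_k}{\cal W}){\cal W}^{-1}={\cal W}^{-1}({\cal L}^k)_{\leq 0}.
$$
Hence
$$
\p_{t_k}{\cal V}=-\p_x^{-1}\bigl(({\cal L}^k)_{\leq0}\bigr)^{\dag}({\cal W}^{-1})^{\dag}
=\p_x^{-1}\bigl(({\cal L}^k)_{\leq0}\bigr)^{\dag}\p_x\,{\cal V}.
$$
Also $\p_{t_k}e^{-\xi}=-z^k e^{-\xi}=-(-\p_x)^k e^{-\xi}$, and ${\cal V}(-\p_x)^k{\cal V}^{-1}=\p_x^{-1}({\cal L}^k)^{\dag}\p_x$ by the same computation as above applied to ${\cal L}^k$. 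Combining the two contributions,
$$
-\p_{t_k}\varphi^*=\p_x^{-1}\Bigl(({\cal L}^k)^{\dag}-\bigl(({\cal L}^k)_{\leq0}\bigr)^{\dag}\Bigr)\p_x\,\varphi^*
=\p_x^{-1}({\cal L}^k)_{\geq1}^{\dag}\p_x\,\varphi^*
=\p_x^{-1}{\cal B}_k^{\dag}\p_x\,\varphi^* ,
$$
which is (\ref{kp8amkp}).

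The point that needs care is the sign and ordering bookkeeping in conjugating products ($(AB)^{\dag}=B^{\dag}A^{\dag}$, $\p_x^{\dag}=-\p_x$, $(\p_x^{-1})^{\dag}=-\p_x^{-1}$) together with the identity $P e^{-\xi}$ for $P=(-\p_x)^k$. One should also note that $\p_x^{-1}{\cal B}_k^{\dag}\p_x$ is a genuine differential operator. Because ${\cal B}_k=\sum_{j\geq1}c_j\p_x^j$, we can write ${\cal B}_k=\tilde{\cal B}\p_x$ for a differential operator $\tilde{\cal B}$. Then ${\cal B}_k^{\dag}=-\p_x\tilde{\cal B}^{\dag}$, so $\p_x^{-1}{\cal B}_k^{\dag}\p_x=-\tilde{\cal B}^{\dag}\p_x$ is differential. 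This is exactly why the mKP truncation $(\cdot)_{\geq1}$ appears here.
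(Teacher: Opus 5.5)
Your proof is correct and follows the route the paper intends: the paper says only ``using (\ref{mkp11})'', and the computation meant is exactly yours, namely conjugating $-\p_x$ and $(-\p_x)^k$ by ${\cal V}=({\cal W}^{-1}\p_x^{-1})^{\dag}$ from (\ref{kp12mkp}) and differentiating ${\cal V}$ via (\ref{mkp11}). Your sign and ordering bookkeeping checks out, and your closing remark that $\p_x^{-1}{\cal B}_k^{\dag}\p_x$ is a genuine differential operator is a useful point the paper leaves implicit.
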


\begin{remark}
\label{remark:mkpWtau}
Taking into account equations (\ref{kp4}), we can represent
the dressing operator ${\cal W}$ as
\beq\label{dres1}
{\cal W}=\normord \frac{\tau ({\bf t}-[\p_x^{-1}])}{\hat \tau ({\bf t})}
\normord , \qquad
({\cal W}^{\dag})^{-1}=\normord 
\frac{\hat \tau ({\bf t}+[\p_x^{-1}])}{\tau ({\bf t})}
\normord 
\eeq
where the normal ordering means that operator $\p_x$ 
in each term of the expansion in negative powers of $\p_x^{-1}$
is moved to the right. 
(cf. Remark \ref{remark:kpWtau}).
\end{remark}

Like in the case of the KP hierarchy,
compatibility of equations (\ref{kp7mkp}),
(\ref{mkp13}) implies the 
Lax equations
\beq\label{kp9mkp}
\p_{t_k}{\cal L}=[{\cal B}_k, {\cal L}], \quad k\geq 1,
\eeq
and equations of the Zakharov-Shabat type
\beq\label{kp10mkp}
\p_{t_n}{\cal B}_m -\p_{t_m}{\cal B}_n +[{\cal B}_m, {\cal B}_n]=0, 
\quad n,m \geq 1.
\eeq
Usually, this infinite system of equations 
serves as a definition of the mKP hierarchy.

Wave functions for the mKP hierarchy are introduced in the same way
as in the KP case (see Definition \ref{definition:wavekp}).

\begin{definition}\label{definition:wavemkp}
Any solution $S ({\bf t})$ to the system of linear 
differential equations (\ref{mkp13}), i.e.,
\beq\label{wave1mkp}
\p_{t_k}S ({\bf t})={\cal B}_k S ({\bf t}) \;\;\;\;
\mbox{for any $k\geq 1$},
\eeq 
is called a wave function for the mKP hierarchy.
Any solution $S^* ({\bf t})$ to the system (\ref{kp8amkp}), i.e.,
\beq\label{wave2mkp}
-\p_{t_k}S^* ({\bf t})={\cal B^*}_k S^* ({\bf t}) \;\;\;\;
\mbox{for any $k\geq 1$},
\eeq
is called an adjoint wave function.
\end{definition}

\noindent
For brevity, we will sometimes call both 
$S$ and $S^*$ wave functions.

Like in the KP case, mKP wave functions of general form
can be obtained by integrating 
the mKP BA functions with respect to the
spectral parameter $z$ with arbitrary density functions 
or distributions
$\sigma (z)$, $\sigma^*(z)$:
\beq\label{mkp15}
\begin{array}{l}
\displaystyle{
S({\bf t})=\int_{\CCC}\varphi ({\bf t}; z)\sigma (z)d^2 z,}
\\ \\
\displaystyle{
S^* ({\bf t})=\int_{\CCC}\varphi^* ({\bf t}; z)\sigma^* (z)d^2 z}.
\end{array}
\eeq

\begin{remark}
Since the operators ${\cal B}_k$, $\hat {\cal B}_k$ are of the
form ${\cal A}\, \p_x$, where ${\cal A}$ is a differential operator,
the simplest mKP wave functions are $S({\bf t})=S^*({\bf t})=1$.
\end{remark}

\section{From mKP to SchKP (and HD) and back}

\subsection{BD transformations of the mKP hierarchy}
\label{section:BDmKP}

In the previous section the mKP hierarchy was described in terms of 
a BD transformation of KP. Such a BD transformation of any solution
$\tau ({\bf t})$ of the KP hierarchy to another solution
$\hat \tau ({\bf t})$ is determined by choosing a wave function
$\phi ({\bf t})$ of the KP hierarchy, then $\hat \tau ({\bf t})=
\phi ({\bf t})\tau ({\bf t})$. The pair $(\tau , \hat \tau )$
obtained in this way is a solution of the mKP hierarchy.

A similar description of the SchKP and HD hierarchies
requires that in addition to the already selected 
KP wave function $\phi$, 
one should fix another one, $\phi '({\bf t})$. 
This wave function defines another BD 
transformation $\tau \to \bar \tau =\phi '\tau$ for KP, 
and thus the pair $(\tau , \bar \tau )$ 
is some other solution to the mKP hierarchy.
The transition from $(\tau , \hat \tau )$ to
$(\tau , \bar \tau )$ is nothing else than a BD transformation
for the mKP hierarchy. Alternatively (and equivalently) it can
be described in the same manner as it was done for the KP hierarchy:
given a solution $(\tau , \hat \tau )$, we construct another 
solution by fixing a wave function $S({\bf t})$ of the mKP hierarchy,
then $\bar \tau =S \tau$, with $S=\phi '/\phi$.

As a preliminary step,
we should obtain a bilinear equation which connects the tau-functions
$\tau$, $\hat \tau$ and $\bar \tau$. To this end, we need to
consider composition of the KP BD transformations $\tau \to \hat \tau$
and $\tau \to \bar \tau$ which sends $\tau$ to some 
other KP tau-function which we denote as $\hat{\bar \tau}$.
It is instructive to define it in a slightly more general setup. 
Consider 
two chains of forward BD transformations
defined by fixing two sequences of density functions
$\{\rho_1^{(1)}(z), \rho_2^{(1)}(z), \ldots \}$ and
$\{\rho_1^{(2)}(z), \rho_2^{(2)}(z), \ldots \}$. Starting from
any initial KP tau-function $\tau ({\bf t})$, one can apply 
the first transformation $n_1$ times and then the second one
$n_2$ times. We denote the resulting tau-function as 
$\tau (n_1, n_2, {\bf t})$. Let us denote the first and
the second transformations by ${\bf B}_1$ and
${\bf B}_2$, then we can write:
\beq\label{B1B2}
\begin{array}{l}
\displaystyle{
{\bf B}_1\tau (n_1, n_2, {\bf t})=
\tau (n_1+1, n_2, {\bf t})=
\int_{\CCC}e^{\xi ({\bf t}, z)}z^{n_1+n_2}\rho^{(1)}_{n_1+1}(z)
\tau (n_1, n_2, {\bf t}-[z^{-1}])d^2 z,}
\\ \\
\displaystyle{
{\bf B}_2\tau (n_1, n_2, {\bf t})=
\tau (n_1, n_2+1, {\bf t})=
\int_{\CCC}e^{\xi ({\bf t}, z)}z^{n_1+n_2}\rho^{(2)}_{n_2+1}(z)
\tau (n_1, n_2, {\bf t}-[z^{-1}])d^2 z.}
\end{array}
\eeq
There are two ways to obtain $\tau (n_1+1, n_2+1, {\bf t})$
from $\tau (n_1, n_2, {\bf t})$ (first apply ${\bf B}_1$, then
${\bf B}_2$ or vice versa):
$$
\tau (n_1, n_2, {\bf t})\stackrel{{\bf B}_1}{\longrightarrow}
\tau (n_1+1, n_2, {\bf t})
\stackrel{{\bf B}_2}{\longrightarrow}
\tau (n_1+1, n_2+1, {\bf t}),
$$
$$
\tau (n_1, n_2, {\bf t})\stackrel{{\bf B}_2}{\longrightarrow}
\tau (n_1, n_2+1, {\bf t})
\stackrel{{\bf B}_1}{\longrightarrow}
\tau (n_1+1, n_2+1, {\bf t}).
$$
Writing the two composite transformations explicitly as
double integrals, one can
easily see that the results differ by a sign, i.e., the
anti-commutation relation
\beq\label{b18}
{\bf B}_1 {\bf B}_2=-{\bf B}_2 {\bf B}_1
\eeq
holds true. Therefore, given $\tau (n_1, n_2, {\bf t})$, 
the function $\tau (n_1+1, n_2+1, {\bf t})$ is defined
only up to a sign. For our needs, it is enough to fix this uncertainty
by the convention that the transformation $\tau \to \hat \tau$ (which is
${\bf B}_1$) is applied first. So, we have four KP tau-functions
\beq\label{four}
\tau ({\bf t})=\tau (0,0, {\bf t}), \quad
\hat \tau ({\bf t})=\tau (1,0, {\bf t}),
\quad
\bar \tau ({\bf t})=\tau (0,1, {\bf t}),
\quad
\hat{\bar \tau} ({\bf t})=\tau (1,1, {\bf t}).
\eeq

\begin{proposition}
The four KP tau-functions (\ref{four}) satisfy the bilinear
equation
\beq\label{b1d}
a\Bigl (\hat \tau \bigl ({\bf t}-[a^{-1}]\bigr ) 
\bar \tau \bigl ({\bf t}\bigr ) -
\bar \tau \bigl ({\bf t}-[a^{-1}]\bigr ) 
\hat \tau \bigl ({\bf t}\bigr )
\Bigr )
=\hat{\bar \tau} ({\bf t}\bigr )\, \tau ({\bf t}-[a^{-1}]\bigr ),
\eeq
or, in the short-hand notation,
\beq\label{b1c}
a_1\Bigl (\hat \tau^1 \bar \tau -\bar \tau^1 \hat \tau \Bigr )
=\hat{\bar \tau} \, \tau^1 .
\eeq
\end{proposition}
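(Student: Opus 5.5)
We will work directly with the explicit integral representations (\ref{B1B2}) at $n_1=n_2=0$. Write $\rho_1=\rho^{(1)}_1$ and $\rho_2=\rho^{(2)}_1$. Then
$$
\hat\tau({\bf t})=\int e^{\xi({\bf t},z)}\rho_1(z)\tau({\bf t}-[z^{-1}])d^2z, \qquad
\bar\tau({\bf t})=\int e^{\xi({\bf t},w)}\rho_2(w)\tau({\bf t}-[w^{-1}])d^2w.
$$
Applying ${\bf B}_2$ to $\hat\tau=\tau(1,0,{\bf t})$ now brings in an extra factor $w^{1}$, since $n_1+n_2=1$. This gives
$$
\hat{\bar\tau}({\bf t})=\iint e^{\xi({\bf t},z)+\xi({\bf t},w)}\Bigl(1-\frac{z}{w}\Bigr)w\,\rho_1(z)\rho_2(w)\tau({\bf t}-[z^{-1}]-[w^{-1}])\,d^2z\,d^2w .
$$
Here we used $e^{-\xi([w^{-1}],z)}=1-z/w$, so the integrand factor is $(w-z)$. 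This is antisymmetric in $(z,w)$ after the roles of $\rho_1,\rho_2$ are exchanged, which is consistent with (\ref{b18}).

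Next we compute the left-hand side of (\ref{b1c}) as a double integral. We shift ${\bf t}\to{\bf t}-[a^{-1}]$ inside the integral for $\hat\tau^1$, which produces the factor $e^{-\xi([a^{-1}],z)}=(1-z/a)$. Similarly $\bar\tau^1$ produces $(1-w/a)$. The result is
$$
a\bigl(\hat\tau^1\bar\tau-\bar\tau^1\hat\tau\bigr)=\iint e^{\xi({\bf t},z)+\xi({\bf t},w)}\rho_1(z)\rho_2(w)\,\Bigl[(a-z)\,\tau({\bf t}-[a^{-1}]-[z^{-1}])\tau({\bf t}-[w^{-1}])-(a-w)\,\tau({\bf t}-[a^{-1}]-[w^{-1}])\tau({\bf t}-[z^{-1}])\Bigr].
$$

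The key step is to compare the bracket pointwise in $(z,w)$ with the integrand of the right-hand side. On the right we have $\hat{\bar\tau}\,\tau^1$, whose integrand is $(w-z)\,\tau({\bf t}-[z^{-1}]-[w^{-1}])\,\tau({\bf t}-[a^{-1}])$. The three-term Hirota–Miwa relation (\ref{kp1c}) with points $(a_1,a_2,a_3)=(a,z,w)$ reads
$$
(a-z)\tau^{az}\tau^w+(z-w)\tau^{zw}\tau^a+(w-a)\tau^{aw}\tau^z=0 .
$$
This says exactly that the bracket equals $(w-z)\tau^{zw}\tau^a$. Integrating this identity against $\rho_1(z)\rho_2(w)$ yields (\ref{b1c}), and (\ref{b1d}) is the same statement written out.

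The main obstacle is bookkeeping rather than substance. We must get the factor $z^{n_1+n_2}$ right in the second transformation, since it is what converts $(1-z/w)$ into $(w-z)$, and we must match the sign convention that ${\bf B}_1$ is applied first. We also need to justify using (\ref{kp1c}) at the points $z,w$ lying in the supports of the densities. This requires the standing assumption that $\tau({\bf t}-[z^{-1}])$ and $\tau({\bf t}-[z^{-1}]-[w^{-1}])$ continue analytically there, so that the Miwa relation, which is an identity between analytic functions of the three points, persists. If a more intrinsic argument is preferred, one could instead reduce to the case of $\delta$-function densities (single BA functions) and conclude by linearity in $\rho_1$ and $\rho_2$; this is the same computation.
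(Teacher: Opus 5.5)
Your proposal is correct and follows essentially the paper's route. You write $\hat\tau$, $\bar\tau$ and $\hat{\bar\tau}$ as integrals against the densities and combine both sides of (\ref{b1c}) into one double integral over $(z,w)$. Its integrand vanishes pointwise by the three-term Hirota--Miwa relation (\ref{kp1b}) at the points $(a,z,w)$. The only difference is that you derive the factor $(w-z)$ in $\hat{\bar\tau}$ from the weight $w^{n_1+n_2}$ in (\ref{B1B2}), whereas the paper simply states that factor.
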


\noindent
{\it Proof.}
Explicitly, the tau-functions $\hat \tau$, $\bar \tau$
and $\hat{\bar \tau}$ are given by 
$$
\begin{array}{l}
\displaystyle{
\hat \tau ({\bf t})=\int_{\CCC}e^{\xi ({\bf t}, z)}\hat \rho (z)
\tau \Bigl ({\bf t}-[z^{-1}]\Bigr ) d^2 z,}
\\ \\
\displaystyle{
\bar \tau ({\bf t})=\int_{\CCC}e^{\xi ({\bf t}, w)}\bar \rho (w)
\tau \Bigl ({\bf t}-[w^{-1}]\Bigr ) d^2 w,}
\end{array}
$$
$$
\hat{\bar \tau}({\bf t})=
\int_{\CCC}\!\int_{\CCC} e^{\xi ({\bf t}, z)+\xi ({\bf t}, w)}
\hat \rho (z)\bar \rho (w)(w-z) 
\tau \Bigl ({\bf t}-[z^{-1}]-[w^{-1}]\Bigr )d^2 z d^2 w.
$$
Using these formulas, we can write:
\beq\label{b19}
\begin{array}{l}
a_1\Bigl (\hat \tau \bigl ({\bf t}-[a_1^{-1}]\bigr ) 
\bar \tau \bigl ({\bf t}\bigr ) -
\bar \tau \bigl ({\bf t}-[a_1^{-1}]\bigr ) 
\hat \tau \bigl ({\bf t}\bigr )
\Bigr )
-\hat{\bar \tau} ({\bf t}\bigr )\, \tau ({\bf t}-[a^{-1}]\bigr )
\\ \\
\phantom{aaaaaaaaaaaaaaaaaaa}\displaystyle{=
\int_{\CCC}\!\int_{\CCC} e^{\xi ({\bf t}, z)+\xi ({\bf t}, w)}
\hat \rho (z)\bar \rho (w) \, T(z,w, {\bf t}) d^2 z d^2 w,}
\end{array}
\eeq
where
$$
\begin{array}{lll}
T(z,w, {\bf t})&=&(a_1-z)\tau \Bigl ({\bf t}-[z^{-1}]-[a_1^{-1}]\Bigr )
\tau \Bigl ({\bf t}-[w^{-1}]\Bigr ) 
\\ && \\
&& -(a_1-w)\tau \Bigl ({\bf t}-[w^{-1}]-[a_1^{-1}]\Bigr )
\tau \Bigl ({\bf t}-[z^{-1}]\Bigr ) 
\\ && \\
&&-(w-z)\tau \Bigl ({\bf t}-[z^{-1}]-[w^{-1}]\Bigr )
\tau \Bigl ({\bf t}-[a_1^{-1}]\Bigr ).
\end{array}
$$
But since $\tau ({\bf t})$ is a KP 
tau-function, $T(z,w, {\bf t})=0$ by virtue of equation (\ref{kp1b}),
where one should put $a_2=z$, $a_3=w$. 
\square

We will need two corollaries of equation (\ref{b1c}) obtained by
expanding the both sides in inverse powers of $a_1$ as
$a_1\to \infty$. In the leading order we obtain the equation
\beq\label{a1}
\p_{x}\log \frac{\bar \tau}{\hat \tau}=
\frac{\hat{\bar \tau}\, \tau}{\hat \tau \, \bar \tau}
\eeq
(recall that $x=t_1$). The expansion to the next order
gives the equation
\beq\label{a2}
\p_{t_2}\log \frac{\bar \tau}{\hat \tau} -
\p^2_x \log \frac{\bar \tau}{\hat \tau} +
(\p_x \log \hat \tau )^2 -(\p_x \log \bar \tau )^2
+2 \frac{\hat{\bar \tau}\, \tau}{\hat \tau \, \bar \tau}\,
\p_x \log \tau =0.
\eeq
Combining it with (\ref{a1}), one can exclude $\hat{\bar \tau}$
and obtain the equation 
\beq\label{a3}
\p_{t_2}\log \frac{\bar \tau}{\hat \tau} -
\p^2_x \log \frac{\bar \tau}{\hat \tau} -\Bigl (
\p_x \log \frac{\bar \tau}{\hat \tau}\Bigr )^2 -2
\Bigl (\p_x \log \frac{\hat \tau}{\tau}\Bigr )
\Bigl (\p_x  \log \frac{\bar \tau}{\hat \tau}\Bigr )
=0,
\eeq
which will be used below.

It is convenient to depict the four KP tau-functions
(\ref{four}) on the following diagram:
\beq\label{diag1}
\begin{array}{ccc}
\bar \tau  & \stackrel{\bar \phi}\longrightarrow & \hat{\bar \tau}
\\
\uparrow\lefteqn{{\scriptstyle \phi '} } 
&& \uparrow\lefteqn{{\scriptstyle \hat \phi}}
\\ 
\tau &\stackrel{\phi}\longrightarrow &\hat \tau
\end{array}
\eeq

Each arrow in (\ref{diag1}) is equipped with the KP wave
function that defines the corresponding mKP solution (i.e., the
KP BD transformation). They are:
\beq\label{phiphi}
\phi =\frac{\hat \tau}{\tau}, \quad \phi ' =\frac{\bar \tau}{\tau},
\quad \bar \phi =\frac{\hat{\bar \tau}}{\bar \tau}, \quad
\hat \phi =\frac{\hat{\bar \tau}}{\hat \tau}.
\eeq
The first two are associated with the KP Lax operator $L$
corresponding to the initial solution with the tau-function $\tau$,
while the other two, $\bar \phi$ and $\hat \phi$, are associated
with the KP Lax operators $\bar L$ and $\hat L$ corresponding to
the KP solutions with the tau-functions $\bar \tau$ and $\hat \tau$
respectively. The mKP Lax operators for the four mKP solutions
$(\tau , \hat \tau )$, $(\tau , \bar \tau )$, 
$(\hat \tau , \hat{\bar \tau})$, $(\bar \tau , \hat{\bar \tau})$
will be denoted as ${\cal L}$, ${\cal L}'$, $\hat {\cal L}$,
$\bar {\cal L}$ respectively. Each of them is obtained by 
``dressing'' the differential operator $\p_x$ with the dressing 
operators ${\cal W}$, ${\cal W}'$, $\hat {\cal W}$,
$\bar {\cal W}$ as in (\ref{Laxmkp}): 
\beq\label{diag2}
\begin{array}{ccc}
\bar \tau  & \stackrel{\bar{\cal W}}\longrightarrow & \hat{\bar \tau}
\\
\uparrow\lefteqn{{\scriptstyle {\cal W}'} } 
&& \uparrow\lefteqn{{\scriptstyle \hat{\cal W}}}
\\ 
\tau &\stackrel{{\cal W}}\longrightarrow &\hat \tau
\end{array}
\eeq
Accordingly, along with $\varphi ({\bf t}, z)$ 
given by (\ref{mkp9}), we can introduce the BA functions for the
other three solutions:
\beq\label{BA3}
\begin{array}{l}
\displaystyle{ \varphi '({\bf t}, z)=
{\cal W}' e^{\xi ({\bf t}, z)}=
e^{\xi ({\bf t}, z)}
\frac{\tau ({\bf t}-[z^{-1}])}{\bar \tau ({\bf t})}},
\\ \\
\displaystyle{ \hat \varphi ({\bf t}, z)=
\hat {\cal W} e^{\xi ({\bf t}, z)}=
e^{\xi ({\bf t}, z)}
\frac{\hat \tau ({\bf t}-[z^{-1}])}{\hat{\bar \tau} ({\bf t})}},
\\ \\
\displaystyle{ \bar \varphi ({\bf t}, z)=
\bar {\cal W} e^{\xi ({\bf t}, z)}=
e^{\xi ({\bf t}, z)}
\frac{\bar \tau ({\bf t}-[z^{-1}])}{\hat{\bar \tau} ({\bf t})}}.
\end{array}
\eeq

The next theorem states that the mKP dressing
operators ${\cal W}'$, $\hat{\cal W}$, $\bar{\cal W}$
can be expressed through ${\cal W}$ and the ratio
\beq\label{S1}
S({\bf t})=\frac{\bar \tau ({\bf t})}{\hat \tau ({\bf t})},
\eeq
which in what follows will be identified with a wave function
for the mKP Lax operator ${\cal L}$ in the sense
of Definition \ref{definition:wavemkp} (see Theorem \ref{theorem:S}
below).

\begin{theorem}
\label{theorem:WWW}
The dressing operators 
${\cal W}'$, $\hat{\cal W}$, $\bar{\cal W}$ in the diagram
(\ref{diag2}) are expressed through ${\cal W}$ and $S$ given by
(\ref{S1}) as follows:
\beq\label{WWW}
\begin{array}{l}
{\cal W}'=S^{-1} {\cal W},
\\ \\
\hat {\cal W}=S_x^{-1}\p_x {\cal W} \p_x^{-1},
\\ \\
\bar {\cal W}=S_x^{-1}S^2 \p_x S^{-1} {\cal W} \p_x^{-1},
\end{array}
\eeq
where the right-hand sides are to be understood
as compositions of the operators and
$S_x =\p_x S$.
\end{theorem}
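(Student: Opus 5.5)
The plan is to prove all three formulas by comparing the actions of the dressing operators on $e^{\xi({\bf t},z)}$. Every operator here is of order $\le 0$ in $\p_x$, so it is determined by its action on $e^{xz}$ as a formal series in $z^{-1}$. The left-hand sides act to give the BA functions (\ref{BA3}). On the right-hand sides, ${\cal W}e^{\xi}=\varphi$ and ${\cal W}\p_x^{-1}e^{\xi}=z^{-1}\varphi$. So each identity reduces to a functional identity between shifted tau-functions, and these should follow from the bilinear relation (\ref{b1c}) and its consequence (\ref{a1}).

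The first formula is immediate. We have $S^{-1}{\cal W}e^{\xi}=\frac{\hat\tau}{\bar\tau}\,e^{\xi}\frac{\tau({\bf t}-[z^{-1}])}{\hat\tau}=\varphi'$, using (\ref{mkp9}) and (\ref{BA3}).

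For the second formula I need $\hat\varphi=S_x^{-1}\p_x\bigl(z^{-1}\varphi\bigr)$. By (\ref{a1}), $S_x=S\,\p_x\log(\bar\tau/\hat\tau)=\hat{\bar\tau}\tau/\hat\tau^2$. Hence the claim reads
$$
z^{-1}\p_x\Bigl(e^{xz}\frac{\tau^{z}}{\hat\tau}\Bigr)=\frac{\hat{\bar\tau}\,\tau}{\hat\tau^2}\cdot\frac{\hat\tau^{z}}{\hat{\bar\tau}}\,e^{xz},
\qquad f^z\equiv f({\bf t}-[z^{-1}]),
$$
which is equivalent to $z\tau\hat\tau^z=z\tau^z\hat\tau+\hat\tau\,\p_x\tau^z-\tau^z\,\p_x\hat\tau$. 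This is exactly the $a_2\to\infty$ limit of the mKP three-term relation (\ref{mkp5c}) for the pair $(\tau,\hat\tau)$, with $a_1=z$. That relation is available since $(\tau,\hat\tau)$ solves mKP. The quantity $\hat{\bar\tau}$ cancels, so (\ref{b1c}) enters only through (\ref{a1}).

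For the third formula, note that $S_x^{-1}S^2\p_xS^{-1}=S_x^{-1}(S\p_x-S_x)$. The claim $\bar\varphi=S_x^{-1}S\,\p_x\bigl(S^{-1}z^{-1}\varphi\bigr)$ becomes $z\,\bar\tau^z=\frac{\hat{\bar\tau}\tau}{\hat\tau^2}\cdots$ after the same substitution. Multiplying through, I expect it to reduce to $z(\tau^z\bar\tau-\ldots)$, i.e., the $a_2\to\infty$ limit of mKP for the pair $(\tau,\bar\tau)$, combined with the $\hat\tau$ relation. A cleaner route is to use $\bar{\cal W}=S^{2}\cdot(S^{-1}$ conjugated$)$. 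Alternatively, observe that $\bar\tau/\hat{\bar\tau}$ and $\hat\tau/\hat{\bar\tau}$ differ by the factor $S$, and $\bar\varphi$ should equal $S\hat\varphi$ up to correction terms. Concretely, I would verify
$$
\bar\varphi - S\hat\varphi = -\frac{S^2}{S_x}\,\p_x(S^{-1})\,z^{-1}\varphi\cdot(\ldots),
$$
which follows from (\ref{b1c}) with $a_1=z$ divided by $\hat{\bar\tau}\tau^z$. Indeed, (\ref{b1c}) gives $z(\hat\tau^z\bar\tau-\bar\tau^z\hat\tau)=\hat{\bar\tau}\tau^z$, hence $\bar\varphi=S\hat\varphi-z^{-1}e^{xz}\tau^z\bar\tau/\hat\tau\cdot\hat\tau/\hat{\bar\tau}\cdots$. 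Matching this with the expanded right-hand side $S_x^{-1}S\p_x(S^{-1}z^{-1}\varphi)=S\,S_x^{-1}\p_x(z^{-1}\varphi)-z^{-1}\varphi$, and using the second formula for the first term, gives $\bar\varphi=S\hat\varphi-z^{-1}\varphi$. This is precisely (\ref{b1c}) divided by $z\hat{\bar\tau}\hat\tau$.

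The main care needed is bookkeeping in this last step: correctly expanding $S_x^{-1}S^2\p_xS^{-1}$ as $S\,S_x^{-1}\p_x-1$, and tracking the normalization of $\hat{\bar\tau}$, including the sign convention (\ref{b18}). Finally, since all three identities hold on $e^{xz}$ for all $z$ (with the higher times as parameters) and both sides are pseudo-differential operators of order $\le0$, the operators coincide.
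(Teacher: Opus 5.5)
Your proposal is correct and follows the paper's own proof. You act on $e^{\xi({\bf t},z)}$, use (\ref{a1}) to get $S_x=\hat{\bar\tau}\tau/\hat\tau^2$, and identify the second formula with the $a_2\to\infty$ limit of (\ref{mkp5c}) at $a_1=z$. For the third formula, which the paper only says is proved ``in a similar way'', your reduction $\bar{\cal W}=S\hat{\cal W}-{\cal W}\p_x^{-1}$ gives $\bar\varphi=S\hat\varphi-z^{-1}\varphi$, which is exactly (\ref{b1c}) at $a_1=z$; this is a clean way to fill that in, though you should delete the incomplete exploratory displays that precede it.
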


\noindent
{\it Proof.}
The first equality in (\ref{WWW}) is evident from the definitions.
To prove the second one, it is enough to check the equality
\beq\label{WWW1}
e^{\xi ({\bf t}, z)}
\frac{\hat \tau ({\bf t}-[z^{-1}])}{\hat{\bar \tau} ({\bf t})}=
z^{-1}S_x^{-1}({\bf t}) \p_x \varphi ({\bf t}, z)
\eeq
which is the result of acting by both sides of
$\hat {\cal W}=S_x^{-1}\p_x {\cal W} \p_x^{-1}$ to the 
exponential function $e^{\xi ({\bf t}, z)}$. Plugging here
$\varphi$ from (\ref{mkp9}), we rewrite (\ref{WWW1}) as
\beq\label{WWW2}
S_x\, \frac{\hat \tau^{[z]}}{\hat{\bar \tau}}=
\frac{\tau^{[z]}}{\hat \tau}\Bigl (
1 +z^{-1}\p_x \log \frac{\tau^{[z]}}{\hat \tau}\Bigr ),
\eeq
where for the sake of brevity we have 
denoted $\tau^{[z]} =\tau ({\bf t}-[z^{-1}])$, 
$\hat \tau^{[z]} =\hat \tau ({\bf t}-[z^{-1}])$.
Equation (\ref{a1}) states that
\beq\label{Sx}
S_x=\frac{\hat{\bar \tau} \tau}{\hat \tau^2},
\eeq
which is to be substituted into the the left-hand side of
(\ref{WWW2}).
To transform the right-hand side, we use equation (\ref{mkp5c}).
Letting $a_2\to \infty$ and $a_1 =z$ in it, we obtain the 
equality
$$
1 +z^{-1}\p_x \log \frac{\tau^{[z]}}{\hat \tau}=
\frac{\hat \tau^{[z]} \tau}{\hat \tau \tau^{[z]}},
$$
which, together with (\ref{Sx}), converts equation (\ref{WWW2})
into identity. The third equation in (\ref{WWW}) can be 
proved in a similar way.
\square

\begin{remark}
The BD transformations of the mKP hierarchy defined by the
diagram (\ref{diag1}) were considered in \cite{OR,Cheng2018} 
(along with transformations of some other types) under
the name of ``gauge transformations''. 
\end{remark}

\subsection{From mKP to SchKP and back} 
\label{section:mKPSchKP}

The pair of
KP tau-functions $(\hat \tau , \bar \tau )$ 
considered in Section \ref{section:BDmKP} can be regarded as 
a solution of the SchKP and, after a change of variables, 
HD hierarchies.
Moreover, any solution of the latter hierarchies can be obtained
in this way.

It turns out that there exist bilinear equations which
connect $\hat \tau$ and $\bar \tau$, without any participation
of $\tau$.

\begin{theorem}\label{theorem:hat-tau-bar-tau}
Let the pairs $(\tau , \hat \tau )$, $(\tau , \bar \tau )$ 
be any two solutions of the mKP hierarchy with the same 
first members of the pairs. Then the KP tau-functions
$\hat \tau$, $\bar \tau$ satisfy the bilinear equation
\beq\label{bil3}
\oint_{C_{\infty}}
e^{\xi ({\bf t}-{\bf t'}, \lambda )}\Bigl [
\hat \tau ({\bf t}-[\lambda^{-1}])\bar \tau ({\bf t'}+[\lambda^{-1}])+
\hat \tau ({\bf t'}+[\lambda^{-1}])\bar \tau ({\bf t}-
[\lambda^{-1}])\Bigr ]\, d\lambda =0
\eeq
valid for all ${\bf t}, \, {\bf t'}$.
\end{theorem}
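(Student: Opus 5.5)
The plan is to deduce (\ref{bil3}) from the KP bilinear identity (\ref{kp1}) applied to a one-parameter family of tau-functions, namely linear combinations of $\hat\tau$ and $\bar\tau$. The key observation is that, since both mKP pairs share the same first member $\tau$, the two BD transformations $\tau\to\hat\tau$ and $\tau\to\bar\tau$ are governed by wave functions of one and the same KP Lax operator $L$, and wave functions of a fixed $L$ form a linear space.

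First step: apply part b) of Theorem \ref{theorem:mkp1} to each of the pairs $(\tau,\hat\tau)$ and $(\tau,\bar\tau)$. This gives that $\phi=\hat\tau/\tau$ and $\phi'=\bar\tau/\tau$ are both KP wave functions for the Lax operator $L$ built from $\tau$. That is, both satisfy the linear system (\ref{wave1}) with the same operators $B_k=(L^k)_{\geq 0}$. Because (\ref{wave1}) is linear, for every constant $c\in\CC$ the function $\phi+c\phi'$ is again a wave function for $L$.

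Second step: apply part a) of Theorem \ref{theorem:mkp1} to $\phi+c\phi'$. It follows that
$$
\tau_c({\bf t})=(\phi+c\phi')\tau=\hat\tau({\bf t})+c\,\bar\tau({\bf t})
$$
is a KP tau-function for every $c$, so it satisfies (\ref{kp1}). Substituting $\tau_c$ into (\ref{kp1}) produces an expression that is a polynomial of degree $2$ in $c$ and vanishes identically in $c$ for all ${\bf t},{\bf t}'$. Hence each coefficient vanishes separately.

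Third step: identify the coefficients. The $c^0$ and $c^2$ coefficients are just the KP bilinear identities (\ref{kp1}) for $\hat\tau$ and $\bar\tau$, which hold anyway. The coefficient of $c^1$ is
$$
\oint_{C_\infty}e^{\xi({\bf t}-{\bf t}',\lambda)}\Bigl[\hat\tau({\bf t}-[\lambda^{-1}])\bar\tau({\bf t}'+[\lambda^{-1}])+\bar\tau({\bf t}-[\lambda^{-1}])\hat\tau({\bf t}'+[\lambda^{-1}])\Bigr]d\lambda ,
$$
and this is exactly the left-hand side of (\ref{bil3}), which therefore vanishes.

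There is essentially no computational obstacle here. The only point requiring care is the first step, namely making sure that part b) of Theorem \ref{theorem:mkp1} yields wave functions for the same operator $L$. This is guaranteed because both pairs have literally the same $\tau$, not merely an equivalent one. If one allows $\tau$ to be fixed only up to a factor $e^{\ell({\bf t})}$, the pairs should first be normalized to share an identical $\tau$, which only rescales $\hat\tau,\bar\tau$ by the same exponential factor and does not affect (\ref{bil3}).
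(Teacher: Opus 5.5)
Your argument is correct, and it takes a genuinely different and much shorter route than the paper.

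\emph{The paper's route.} The paper works by direct computation. It makes the Miwa substitution to get the family (\ref{bil4}) for all $m\geq 2$. It then invokes Proposition~\ref{proposition:equiv}, the equivalence of (\ref{bil3}) with that family, which is justified only by pointing to Shigyo's symmetric-function argument. Next it writes $\hat\tau$ and $\bar\tau$ as BD transforms of $\tau$ with densities $\hat\rho$, $\bar\rho$ via (\ref{mkp2b}). Finally the Pl\"ucker relation (\ref{g2c}) at level $m+1$ collapses the integrand to $(-1)^m\tau[a_1,\dots,a_{m+1}]\,\tau[a_{m+2},b_1,\dots,b_m]$. This is antisymmetric under $z\leftrightarrow\zeta$, so it vanishes after symmetrization.

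\emph{Your route.} You instead polarize the quadratic identity (\ref{kp1}). Linearity of (\ref{wave1}) and both parts of Theorem~\ref{theorem:mkp1} give that $\hat\tau+c\,\bar\tau$ is a KP tau-function for every $c$. The $c^1$ coefficient is then exactly (\ref{bil3}). This is the converse of the paper's first Corollary after the theorem. So you show that, for two KP tau-functions, (\ref{bil3}) is equivalent to the linear-combination property stressed in the abstract, and the argument extends verbatim to $n$ pairs.

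\emph{What each buys.} Your route is transparent and avoids Proposition~\ref{proposition:equiv}, the density representation, and the higher Pl\"ucker relations. The cost is that the computational weight moves onto part a) of Theorem~\ref{theorem:mkp1} for an arbitrary wave function, which the paper only cites. Its standard proof uses the same kind of integral representation and Hirota--Miwa/Pl\"ucker identities, so the work is relocated rather than removed. The paper's computation, for its part, yields the explicit discrete equations (\ref{bil4}), e.g.\ (\ref{bil1}), along the way. Both routes rely on part b) of Theorem~\ref{theorem:mkp1}: the paper needs it to write $\hat\tau=\phi\tau$ with $\phi$ a wave function before using (\ref{mkp2b}).

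\emph{A correction to your closing remark.} Suppose the first members differed by a factor $e^{\ell}$. Renormalizing one pair then rescales only $\bar\tau$, not both, and (\ref{bil3}) is not invariant under rescaling one function alone. For example, $(1,1)$ and $(e^{t_2},e^{t_2})$ are both mKP pairs, yet $\hat\tau=1$, $\bar\tau=e^{t_2}$ violate (\ref{bil1}); correspondingly, $1+c\,e^{t_2}$ is not a KP tau-function. So literal equality of the first members is genuinely needed. Since the theorem assumes exactly that, your main argument is unaffected.
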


\noindent
{\it Proof.}
The first step is to make the Miwa substitution (\ref{miwa1})
in (\ref{bil3}), 
which converts the equation to be proved into the following one:
\beq\label{bil4}
\begin{array}{l}
\displaystyle{
\sum_{k=1}^{m+1}(-1)^{k-1}
\Delta_m (a_1, \ldots , \hat a_k, \ldots , a_{m+1})
\prod_{i=1}^{m-1}(b_i-a_k)}
\\ \\
\phantom{aaaaaaa}
\displaystyle{\times \, \left [
\hat \tau \Bigl ({\bf t}-\sum_{j=1, \neq k}^{m+1}[a_j^{-1}]\Bigr )
\bar \tau \Bigl ({\bf t}-[a^{-1}_k]- 
\sum_{j=1}^{m-1}[b_j^{-1}]\Bigr ) \right. }
\\ \\
\phantom{aaaaaaaaaaaaaaa}\displaystyle{+ \, \left.
\bar \tau \Bigl ({\bf t}-\sum_{j=1, \neq k}^{m+1}[a_j^{-1}]\Bigr )
\hat \tau \Bigl ({\bf t}-[a^{-1}_k]- 
\sum_{j=1}^{m-1}[b_j^{-1}]\Bigr )\right ]
=0, \quad m\geq 2.}
\end{array}
\eeq
(The residue calculus is the same as the one leading to (\ref{g2b}).)
In fact, equation (\ref{bil4}) 
is actually equivalent to (\ref{bil3}).
Namely, the argument from \cite{Shigyo13}
(see the proof of Proposition 1 there), which uses properties of
symmetric functions, allows us to prove 
the following proposition.

\begin{proposition}\label{proposition:equiv}
Equations (\ref{bil3}) and (\ref{bil4}) are equivalent
(provided that the latter one 
is valid for all $m\geq 2$ and all sets of points).
\end{proposition}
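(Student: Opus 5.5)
The plan is to follow the method of \cite{Shigyo13}, Proposition~1, which the paper itself cites. One direction is immediate: substituting the Miwa parametrization (\ref{miwa1}) into (\ref{bil3}) and evaluating the contour integral by residues gives (\ref{bil4}). Indeed, $e^{\xi({\bf t}-{\bf t}',\lambda)}=\prod_i a_i(a_i-\lambda)^{-1}\prod_j (b_j-\lambda)b_j^{-1}$ is rational, the poles outside $C_\infty$ sit only at the $a_k$, and the two summands are treated exactly as the single product in the derivation of (\ref{g2b}). So the content of the statement is the converse: (\ref{bil4}) for all $m\ge2$ and all point sets implies (\ref{bil3}) for all ${\bf t},{\bf t}'$.

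For the converse I would set ${\bf t}'={\bf t}-{\bf y}$ (after a shift) and regard the left-hand side of (\ref{bil3}) as a formal power series in ${\bf y}$, using $e^{\xi({\bf y},\lambda)}=\sum_n p_n({\bf y})\lambda^n$. Extracting the residue then shows that the left-hand side equals
\[
F({\bf y})=\sum_{n\ge 0}p_n({\bf y})\,c_n({\bf t},{\bf y}),
\]
where the $c_n$ are built from the Taylor coefficients of $\hat\tau\bar\tau'+\bar\tau\hat\tau'$. The claim to establish is that $F$ vanishes identically as a function of ${\bf y}$. It is enough to show that $F$ vanishes on the Miwa locus ${\bf y}=\sum_{i=1}^{m+1}[a_i^{-1}]-\sum_{j=1}^{m-1}[b_j^{-1}]$ for all $m$ and all points.

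Any polynomial in $y_1,\dots,y_N$ is a symmetric polynomial in the power sums $y_k=\frac1k\bigl(\sum_i a_i^{-k}-\sum_j b_j^{-k}\bigr)$, i.e. it is supersymmetric in the variables $(a_i^{-1}\,|\,b_j^{-1})$. This is the key step. I would truncate $F$ at a fixed total degree $N$, choose $m>N$, and use the fact that for $m$ large the power sums $y_1,\dots,y_N$ of $m+1$ "plus" and $m-1$ "minus" variables are algebraically independent. A polynomial of degree $\le N$ in the $y_k$ that vanishes on an open set of such configurations must then vanish identically.

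The main obstacle is that $F$ is an infinite series, so the degree argument has to be applied order by order. This needs care, because the Miwa variables have different numbers of $a$'s and $b$'s: the constraint is $m+1$ versus $m-1$, not free. I would handle it by scaling all $a_i^{-1},b_j^{-1}$ by $\varepsilon$ and extracting the homogeneous components of degree $\le N$ in $\varepsilon$. Each component is then a polynomial identity in the power sums, and the independence of the power sums for the supersymmetric $(m+1|m-1)$ system with $m\gg N$ finishes the argument, exactly as in \cite{Shigyo13}. The symmetrization $\hat\tau\leftrightarrow\bar\tau$ plays no role here, since the argument is linear in the bilinear expression.
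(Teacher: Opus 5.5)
Your proposal is correct and is essentially the route the paper takes. The forward implication is the residue computation after the Miwa substitution (\ref{miwa1}). The converse is the symmetric-function argument of \cite{Shigyo13} (Proposition~1), which the paper cites without writing out: a formal series in ${\bf t}-{\bf t}'$ that vanishes on all $(m+1\,|\,m-1)$ Miwa loci vanishes identically, by algebraic independence of the power sums applied to each weighted-homogeneous component, and your $\varepsilon$-scaling is a clean way to carry out that degree-by-degree extraction.
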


\noindent
Using the short-hand notation (\ref{not1}), we write:
$$
\hat \tau \Bigl ({\bf t}-\!\! 
\sum_{j=1, \neq k}^{m+1}[a_j^{-1}]\Bigr )=
\hat \tau^{1\, \ldots \, \hat k \ldots \, m\! + \! 1}, \qquad
\bar \tau \Bigl ({\bf t}-\! [a_k^{-1}]-\!\! 
\sum_{l=1}^{m-1}[b_l^{-1}]\Bigr )=
\bar \tau^{k \tilde 1 \, \ldots \, \widetilde{m\! - \! 1}},
$$
where $\tilde l$ is related to $b_l$. This allows us to
represent 
(\ref{bil4}) in a more compact form:
\beq\label{bil6}
\begin{array}{l}
\displaystyle{
\sum_{k=1}^{m+1}(-1)^{k-1}
\Delta_m (a_1, \ldots , \hat a_k, \ldots , a_{m+1})
\Delta_m (a_k, b_1 , \, \ldots , \, , b_{m-1})}
\\ \\
\phantom{aaaaaaaaaaaaaaa}\displaystyle{
\times \, \Bigl (\hat \tau^{1\, \ldots \, \hat k \ldots \, m\! + \! 1}
\, 
\bar \tau^{k \tilde 1 \, \ldots \, \widetilde{m\! - \! 1}} \, +\,
\bar \tau^{1\, \ldots \, \hat k \ldots \, m\! + \! 1} \,
\hat \tau^{k \tilde 1 \, \ldots \, \widetilde{m\! - \! 1}}\Bigr )
=0.
}
\end{array}
\eeq

The next step is to use the first equation in (\ref{mkp2b}). 
Adopted to our case, it yields:
\beq\label{bil5}
\begin{array}{l}
\displaystyle{
\hat \tau \Bigl ({\bf t}-\!\! 
\sum_{j=1, \neq k}^{m+1}[a_j^{-1}]\Bigr )=
\int_{\CCC}d^2 z \hat \rho (z)\, e^{\xi ({\bf t}, z)}
\Bigl (\prod_{j=1, \neq k}^{m+1}\!\! \frac{a_j-z}{a_j}\Bigr )
\tau \Bigl ({\bf t} \! - \! [z^{-1}] -\!\! 
\sum_{j=1, \neq k}^{m+1}[a_j^{-1}]\Bigr )}
\end{array}
\eeq
\beq\label{bil5a}
\begin{array}{l}
\displaystyle{
\bar \tau \Bigl ({\bf t}-\! [a_k^{-1}]-\!\! 
\sum_{l=1}^{m-1}[b_l^{-1}]\Bigr )=
\int_{\CCC}d^2 \zeta \bar \rho (\zeta )\, e^{\xi ({\bf t}, \zeta )}
\frac{a_k-\zeta}{a_k}
\Bigl (\prod_{j=1, \neq k}^{m+1}\!\! \frac{b_l-\zeta }{b_l}\Bigr )}
\phantom{bbbbbbbbbbbbbbb}
\\ \\
\displaystyle{\phantom{aaaaaaaaaaaaaaaaaaaaaaaaaaaaaaaaaa}
\times \, \tau \Bigl ({\bf t} \! - \! [\zeta^{-1}] \! - \! [a_k^{-1}]\!
-\!\! 
\sum_{l=1}^{m-1}[b_l^{-1}]\Bigr )}.
\end{array}
\eeq
Then the left-hand side of (\ref{bil6}), call it $F$, 
becomes, up to an irrelevant
common factor:
$$
F=\int_{\CCC}\! \int_{\CCC} d^2 z d^2 \zeta \hat \rho (z)
\bar \rho (\zeta ) \, e^{\xi ({\bf t}, z) +\xi ({\bf t}, \zeta )}
\left (\sum_{k=1}^{m+1}(-1)^{k-1}\Bigl (R_k (z, \zeta )+R_k (\zeta , z )
\Bigr )
\right ),
$$
where
\beq\label{Rk}
\begin{array}{lll}
R_k(z, \zeta )&=&\Delta_{m+1} (a_1, \ldots , \hat a_k, \ldots , a_{m+1}, z)
\Delta_{m+1} (a_k, b_1 , \, \ldots , \, , b_{m-1}, \zeta )
\\ && \\
&& \times \, \,
\displaystyle{
\tau \Bigl ({\bf t} \! - \! [z^{-1}] -\!\! 
\sum_{j=1, \neq k}^{m+1}[a_j^{-1}]\Bigr )\,
\tau \Bigl ({\bf t} \! - \! [\zeta^{-1}] \! - \! [a_k^{-1}]\!-\!\! 
\sum_{l=1}^{m-1}[b_l^{-1}]\Bigr ).
}
\end{array}
\eeq
To unify the notation for the 
variables, denote $z=a_{m+2}$, $\zeta =b_m$. Then, using the 
notation as in (\ref{g2c}), we have:
$$
R_k(a_{m+2}, b_m)=\tau \bigl [a_1, \, \ldots , a_{m+2}\bigr ]\,
\tau \bigl [a_k, b_1, \, \ldots , b_m\bigr ].
$$
Then equation (\ref{g2c}) with the change $m\to m+1$ yields:
\beq\label{bil7}
\sum_{k=1}^{m+1}(-1)^{k-1}R_k(a_{m+2}, b_m)=(-1)^m\,
\tau \bigl [a_1, \ldots , a_{m+1}\bigr ]\, 
\tau \bigl [a_{m+2}, b_1, \ldots , b_{m}\bigr ].
\eeq
Since the second multiplier is antisymmetric with respect to
the interchange $a_{m+2}\leftrightarrow b_m$, we have
$$
\sum_{k=1}^{m+1}(-1)^{k-1}\Bigl (R_k(z, \zeta )+ 
R_k (\zeta , z)\Bigr )=0,
$$
so $F=0$ and equation (\ref{bil4}) is proved.
By virtue of Proposition \ref{proposition:equiv}, this 
proves equation (\ref{bil3}). 
\square

\begin{corollary}
For all $a,b\in \CC$, the function $\tau ({\bf t}|a,b)=
a \hat \tau ({\bf t}) +
b\bar \tau ({\bf t})$ is a tau-function of the KP hierarchy, i.e., it
satisfies the integral bilinear relation (\ref{kp1}).
More generally, let $n$ pairs $(\tau , \tau^{(\alpha )})$ 
($\alpha =1, \ldots , n$) be 
solutions to the mKP hierarchy, then for all 
$c_1, \ldots , c_n \in \CC$ the function
$$
\tau ({\bf t}|\{c_1, \ldots , c_n\})=
\sum_{\alpha =1}^n c_{\alpha}\tau^{(\alpha )}({\bf t})
$$
is a KP tau-function.
\end{corollary}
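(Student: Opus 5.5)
The plan is to expand the bilinear product for a linear combination and check that every term vanishes, either by the KP equation of one of the summands or by the cross equation (\ref{bil3}).

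Take the case of two functions first. Put $\tau({\bf t}|a,b)=a\hat\tau+b\bar\tau$ and substitute it into the left-hand side of (\ref{kp1}). By bilinearity the integrand splits into three pieces. The piece with $a^2$ is the KP bilinear form of $\hat\tau$, and the piece with $b^2$ is the KP bilinear form of $\bar\tau$. The remaining piece carries $ab$ and equals
$$
\hat\tau({\bf t}-[z^{-1}])\,\bar\tau({\bf t}'+[z^{-1}])+\bar\tau({\bf t}-[z^{-1}])\,\hat\tau({\bf t}'+[z^{-1}]).
$$
We need $\hat\tau$ and $\bar\tau$ to be KP tau-functions. For that we use part a) of Theorem \ref{theorem:mkp1}, since $\hat\tau=\phi\tau$ and $\bar\tau=\phi'\tau$ with KP wave functions $\phi,\phi'$. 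Alternatively, the Proposition of Section 3.1 gives this directly from (\ref{mkp5c}). So the $a^2$ and $b^2$ integrals vanish. The $ab$ integral is exactly the left-hand side of (\ref{bil3}), and Theorem \ref{theorem:hat-tau-bar-tau} applied to the pairs $(\tau,\hat\tau)$ and $(\tau,\bar\tau)$ makes it vanish for all ${\bf t},{\bf t}'$. Hence $\tau({\bf t}|a,b)$ satisfies (\ref{kp1}).

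For the general statement with $n$ pairs $(\tau,\tau^{(\alpha)})$, expand $\sum_{\alpha,\beta}c_\alpha c_\beta\,\tau^{(\alpha)}({\bf t}-[z^{-1}])\,\tau^{(\beta)}({\bf t}'+[z^{-1}])$ and group it into diagonal terms $\alpha=\beta$ and symmetric off-diagonal combinations over pairs $\alpha<\beta$. Each diagonal term integrates to zero, because every $\tau^{(\alpha)}$ is a KP tau-function by the same argument as above. Each symmetric off-diagonal combination integrates to zero by Theorem \ref{theorem:hat-tau-bar-tau}, applied to the two mKP solutions $(\tau,\tau^{(\alpha)})$ and $(\tau,\tau^{(\beta)})$, which share the first member $\tau$.

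An alternative route to the same result is to note that $\sum_\alpha c_\alpha\tau^{(\alpha)}=\bigl(\sum_\alpha c_\alpha\phi^{(\alpha)}\bigr)\tau$. Each $\phi^{(\alpha)}$ is a wave function for the same $\tau$ by part b) of Theorem \ref{theorem:mkp1}, and wave functions form a linear space, so part a) of that theorem applies to the combination directly.

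There is no real obstacle here. The only point to check is that the symmetrized cross term has exactly the form of (\ref{bil3}), with the same contour and the same exponential factor, which is clear from the expansion. This is the reason the symmetric pairing of the two tau-functions in (\ref{bil3}) is the natural one.
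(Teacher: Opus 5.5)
Your proposal is correct and matches the paper's argument. The paper also expands the bilinear form of the linear combination, uses that each $\tau^{(\alpha)}$ is a KP tau-function to kill the diagonal terms, and uses (\ref{bil3}) from Theorem \ref{theorem:hat-tau-bar-tau} to kill each symmetrized cross term. Your alternative route, via the linear space of KP wave functions for $\tau$ and part a) of Theorem \ref{theorem:mkp1}, is also valid and additionally shows that $(\tau,\sum_\alpha c_\alpha\tau^{(\alpha)})$ is again an mKP solution.
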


\noindent
This statement immediately follows from (\ref{bil3})
and from the fact that all the $\tau^{(\alpha )}$'s are 
KP tau-functions.

\begin{corollary}
Let $(\tau , \hat \tau )$, $(\tau , \bar \tau )$ be any two
solutions of the mKP hierarchy. Then the KP tau-functions
$\hat \tau$, $\bar \tau$ satisfy the bilinear equation
\beq\label{bil1}
a_{12}\Bigl (\hat \tau^{12}\bar \tau^3 + 
\hat \tau^{3}\bar \tau^{12} \Bigr )+
a_{23}\Bigl (\hat \tau^{23}\bar \tau^1 + 
\hat \tau^{1}\bar \tau^{23} \Bigr )+
a_{31}\Bigl (\hat \tau^{13}\bar \tau^2 + 
\hat \tau^{2}\bar \tau^{13} \Bigr )=0,
\eeq
where the short-hand notations (\ref{not1}) are used.
\end{corollary}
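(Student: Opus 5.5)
The plan is to obtain (\ref{bil1}) from the integral relation (\ref{bil3}) of Theorem \ref{theorem:hat-tau-bar-tau}, using the same residue calculus that produced the Hirota--Miwa equation (\ref{kp1b}) from (\ref{kp1}). Equivalently, I could take the case $m=2$ of (\ref{bil4}) with $b_1\to\infty$. I will do the direct computation, because it keeps track of the signs.

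In (\ref{bil3}) I substitute
$$
{\bf t}-{\bf t}'=[a_1^{-1}]+[a_2^{-1}]+[a_3^{-1}],
$$
so that $e^{\xi({\bf t}-{\bf t}',\lambda)}=a_1a_2a_3/\bigl((a_1-\lambda)(a_2-\lambda)(a_3-\lambda)\bigr)$. I then shift ${\bf t}\to{\bf t}-[a_1^{-1}]-[a_2^{-1}]-[a_3^{-1}]$, which gives ${\bf t}'\to{\bf t}-2([a_1^{-1}]+[a_2^{-1}]+[a_3^{-1}])$. The factor $\lambda$-dependent shifts are handled as follows:
\begin{itemize}
\item For the arguments shifted by ${\bf t}'+[\lambda^{-1}]$, the factors $(1-\lambda/a_j)$ coming from the tau-function expansion cancel the poles at $\lambda=a_j$.
\item The poles are then assessed via the standard identity $e^{\xi({\bf t}, z)}$ for Miwa shifts.
\end{itemize}
It is simplest to follow exactly the computation that gives (\ref{kp1b}) from (\ref{kp1}). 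The integrand of (\ref{bil3}) is the symmetrization in $(\hat\tau,\bar\tau)$ of the integrand of (\ref{kp1}) with the two different tau-functions inserted. Because the residue computation leading to (\ref{kp1b}) never uses that the two factors coincide, it applies verbatim to each of the two terms.

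The first term gives
$$
a_{12}\hat\tau^{12}\bar\tau^3+a_{23}\hat\tau^{23}\bar\tau^1+a_{31}\hat\tau^{13}\bar\tau^2 ,
$$
where in each summand the $\hat\tau$ factor carries the double shift and the $\bar\tau$ factor the single shift, as in (\ref{kp1c}). The second term gives the same expression with $\hat\tau$ and $\bar\tau$ interchanged. Adding the two and grouping by $a_{ij}$ yields exactly (\ref{bil1}).

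The only place that needs care is checking the bookkeeping, namely that the residue at $\infty$ contributes nothing and that the residues at $\lambda=a_k$ appear with the correct Vandermonde coefficients. This is identical to the derivation of (\ref{kp1b}), so I expect no real obstacle. The single point to verify is that each of the two terms separately has a vanishing residue at infinity, just as in the derivation of (\ref{kp1b}). This holds because the integrand decays like $\lambda^{-3}$ times a bounded function.
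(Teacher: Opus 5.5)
Your argument is correct and is essentially the paper's. The paper obtains (\ref{bil1}) as the case $m=2$, $b_1=\infty$ of (\ref{bil4}), which is exactly your residue computation applied to (\ref{bil3}) with ${\bf t}-{\bf t}'=[a_1^{-1}]+[a_2^{-1}]+[a_3^{-1}]$, using linearity in the two symmetrized terms. Two side remarks should be fixed, though neither affects the conclusion: no factors $(1-\lambda/a_j)$ cancel the poles at $\lambda=a_j$ (their residues are precisely the terms of (\ref{bil1})), and in the first term of (\ref{bil3}) it is $\hat\tau$ that carries the single shift ($\hat\tau^k\bar\tau^{ij}$ at $\lambda=a_k$), which is immaterial after symmetrization.
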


\noindent
{\it Proof.} This is the simplest particular case of (\ref{bil4})
obtained at $m=2$, $b_1=\infty$.
\square

\noindent
Let us give another proof of equation (\ref{bil1}) which is not
based on explicit realization of the BD transformations.
The starting point is the following system of equations 
of the type (\ref{mkp5c}) written for the mKP solutions
$(\tau , \hat \tau)$ and $(\tau , \bar \tau)$:
$$
\left \{
\begin{array}{l}
a_1 \hat \tau^{13} \tau^{23} -a_2 \hat \tau^{23} \tau^{13} -
a_{12}\hat \tau^3 \tau^{123}=0,
\\ \\
a_1 \bar \tau^{13} \tau^{23} -a_2 \bar \tau^{23} \tau^{13} -
a_{12}\bar \tau^3 \tau^{123}=0,
\\ \\
a_2 \hat \tau^{21} \tau^{31} -a_3 \hat \tau^{31} \tau^{21} -
a_{23}\hat \tau^1 \tau^{123}=0,
\\ \\
a_2 \bar \tau^{21} \tau^{31} -a_3 \bar \tau^{31} \tau^{21} -
a_{23}\bar \tau^1 \tau^{123}=0,
\\ \\
a_3 \hat \tau^{32} \tau^{12} -a_1 \hat \tau^{12} \tau^{32} -
a_{31}\hat \tau^2 \tau^{123}=0,
\\ \\
a_3 \bar \tau^{21} \tau^{31} -a_1 \bar \tau^{12} \tau^{32} -
a_{31}\bar \tau^2 \tau^{123}=0.
\end{array}
\right.
$$
Let us consider the first four equation. They form a system
of homogeneous linear equations for 
$\tau^{12}, \tau^{23}, \tau^{31}, \tau^{123}$.
Existence of nontrivial solutions implies that the matrix
$$
M_{13} =
\left (
\begin{array}{cccc}
0 & a_1\hat \tau^{13} & -a_2\hat \tau^{23}  &-a_{12}\hat \tau^{3}  
\\ &&& \\
0 & a_1\bar \tau^{13} & -a_2\bar \tau^{23}  &-a_{12}\bar \tau^{3}  
\\ &&& \\
-a_3\hat \tau^{31} & 0 & a_2\hat \tau^{21} & -a_{23}\hat \tau^{1}  
\\ \\
-a_3\bar \tau^{31} & 0 & a_2\bar \tau^{21} & -a_{23}\bar \tau^{1}  
\end{array}
\right ).
$$
is degenerate, i.e., $\det M_{13}=0$. Calculation of the
determinant gives:
$$
\frac{\det M_{13}}{a_1a_2a_3}= \hat \tau^{13}\bar \tau^{13}R -
(\hat \tau^{13})^2 \bar H -(\bar \tau^{13})^2 \hat H,
$$
where
$$
\begin{array}{l}
R= a_{12}\Bigl (\hat \tau^{12}\bar \tau^3 + 
\hat \tau^{3}\bar \tau^{12} \Bigr )+
a_{23}\Bigl (\hat \tau^{23}\bar \tau^1 + 
\hat \tau^{1}\bar \tau^{23} \Bigr )+
a_{31}\Bigl (\hat \tau^{13}\bar \tau^2 + 
\hat \tau^{2}\bar \tau^{13} \Bigr ),
\\ \\
\hat H =a_{12}\hat \tau^{12}\hat \tau^3 +
a_{23}\hat \tau^{23}\hat \tau^1 +
a_{31}\hat \tau^{31}\hat \tau^2,
\\ \\
\bar H =a_{12}\bar \tau^{12}\bar \tau^3 +
a_{23}\bar \tau^{23}\bar \tau^1 +
a_{31}\bar \tau^{31}\bar \tau^2 .
\end{array}
$$
Since both $\hat \tau$ and $\bar \tau$ are KP tau-functions,
$\bar H=\hat H =0$ by virtue of equation
(\ref{kp1c}). Therefore, $\det M_{13}=0$ if and only if
$R=0$.

\begin{remark}
Equation (\ref{bil1}) alone does not imply that 
$\hat \tau$ and $\bar \tau$ are KP tau-functions. 
The simplest counterexample is $\hat \tau ({\bf t})=1$ (satisfies KP),
$\bar \tau ({\bf t})=t_n$ for $n\geq 2$ (does not satisfy KP), but 
this pair solves equation (\ref{bil1}). 
\end{remark}

\begin{corollary}
The function
\beq\label{SS}
S({\bf t})=\frac{\bar \tau ({\bf t})}{\hat \tau ({\bf t})}
\eeq
satisfies the linear equation
\beq\label{SS1}
a_{12}\hat \tau^3 \hat \tau^{12}(S^3 + S^{12})+
a_{23}\hat \tau^1 \hat \tau^{23}(S^1 + S^{23})+
a_{31}\hat \tau^2 \hat \tau^{31}(S^2 + S^{31})=0,
\eeq
or, in a more compact form,
\beq\label{SS2}
\sum_{(\alpha \beta \gamma )=(123)
\atop 
\rm and \; cycles}
a_{\alpha \beta}
\hat \tau^{\alpha \beta}\hat \tau^{\gamma}
(S^{\alpha \beta}+S^{\gamma})=0.
\eeq
\end{corollary}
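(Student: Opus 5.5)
The plan is to obtain (\ref{SS1}) from the three-point bilinear relation (\ref{bil1}) by the substitution $\bar\tau = S\hat\tau$, and then use that $\hat\tau$ is itself a KP tau-function. Writing $\bar\tau^{\alpha\beta}=S^{\alpha\beta}\hat\tau^{\alpha\beta}$ and $\bar\tau^{\gamma}=S^{\gamma}\hat\tau^{\gamma}$ in each of the three terms of (\ref{bil1}) gives
$$
\hat \tau^{12}\bar \tau^3 + \hat \tau^{3}\bar \tau^{12}
=\hat\tau^{12}\hat\tau^{3}\bigl(S^3+S^{12}\bigr),
$$
and similarly for the cyclic permutations of $(123)$. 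The left-hand side of (\ref{bil1}) then becomes literally the left-hand side of (\ref{SS1}). Hence (\ref{SS1}) is just (\ref{bil1}) rewritten in terms of $S$ and $\hat\tau$, and the compact form (\ref{SS2}) is the same sum written with cyclic indices, using $a_{31}\hat\tau^{31}\hat\tau^2$ for the third term.

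The word ``linear'' refers to the coefficients $a_{\alpha\beta}\hat\tau^{\alpha\beta}\hat\tau^\gamma$, which depend only on $\hat\tau$, so the relation is linear in the shifted values of $S$. As a consistency check, the constant function $S=1$ must satisfy it. In that case the left-hand side reduces to $2\bigl(a_{12}\hat\tau^{12}\hat\tau^3+a_{23}\hat\tau^{23}\hat\tau^1+a_{31}\hat\tau^{31}\hat\tau^2\bigr)$, which vanishes by (\ref{kp1c}) applied to $\hat\tau$. This confirms that the normalization is right. More generally it shows that $S\mapsto \alpha S+\beta$ preserves the equation, reflecting that $a\hat\tau+b\bar\tau$ is again a tau-function.

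There is no real obstacle here. The only point needing care is bookkeeping of the shift superscripts, in particular that $\hat\tau^{13}=\hat\tau^{31}$, which holds because shifts commute. No division by vanishing quantities occurs, since we only multiply through.
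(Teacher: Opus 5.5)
Your proof is correct and is exactly the paper's argument: the corollary follows by substituting $\bar\tau=S\hat\tau$ into the three-term relation (\ref{bil1}). Term by term this gives (\ref{SS1}). Your side check with $S=1$, which reduces to (\ref{kp1c}) for $\hat\tau$, is valid but not needed for the proof.
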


\noindent
{\it Proof.} This directly follows from equation
(\ref{bil1}) after the substitution (\ref{SS}).
\square

There are more equations for the $S$-function which follow
from (\ref{bil1}) just in an equally simple way as (\ref{SS2}). 
An analog of
(\ref{SS2}) with the $\bar \tau$-function instead of $\hat \tau$ is
\beq\label{SS2a}
\sum_{(\alpha \beta \gamma )=(123)
\atop 
\rm and \; cycles}
a_{\alpha \beta}
\bar \tau^{\alpha \beta}\bar \tau^{\gamma}
\Bigl (\frac{1}{S^{\alpha \beta}} +\frac{1}{S^{\gamma}} \Bigr )=0.
\eeq
From the fact that $\bar \tau$ is a KP tau-function, it follows that
along with (\ref{SS2}) the function $S$ satisfies the bilinear
equation
\beq\label{SS2b}
\sum_{(\alpha \beta \gamma )=(123)
\atop 
\rm and \; cycles}
a_{\alpha \beta}
\hat \tau^{\alpha \beta}\hat \tau^{\gamma}
S^{\alpha \beta}S^{\gamma}=0.
\eeq

\begin{theorem}
The function $S({\bf t})$ defined in (\ref{SS}) satisfies the
equation
\beq\label{SSc}
\frac{S^1 -S^{12}}{S^3 - S^{23}} \cdot
\frac{S^2 -S^{23}}{S^1 - S^{13}} \cdot
\frac{S^3 -S^{13}}{S^2 - S^{12}} =1,
\eeq
which is the discrete SchKP equation.
\end{theorem}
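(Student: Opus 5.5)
Our approach is to eliminate $\hat\tau$ from the two linear relations (\ref{SS2}) and (\ref{SS2b}) satisfied by $S$, together with the KP relation (\ref{kp1c}) for $\hat\tau$. Introduce the three quantities
$$
X_\gamma = a_{\alpha\beta}\hat\tau^{\alpha\beta}\hat\tau^{\gamma}, \qquad (\alpha\beta\gamma)=(123),(231),(312).
$$
Then the three relations available for $\hat\tau$ and $\bar\tau=S\hat\tau$ read
$$
\sum_\gamma X_\gamma =0, \qquad \sum_\gamma X_\gamma (S^{\alpha\beta}+S^\gamma)=0, \qquad \sum_\gamma X_\gamma S^{\alpha\beta}S^\gamma=0 .
$$
The first is (\ref{kp1c}) for $\hat\tau$, the second is (\ref{SS2}), and the third is (\ref{SS2b}). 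This is a homogeneous linear system in $(X_3,X_1,X_2)$. For a nontrivial solution (generically $X_\gamma\neq 0$) its determinant must vanish. Observe that row 3 minus suitable combinations gives the polynomial $(S^{\alpha\beta}-\lambda)(S^\gamma-\lambda)$ evaluated at any $\lambda$. Thus the three rows span the coefficient vectors of $1,\ \sigma_1,\ \sigma_2$, where $\sigma_1,\sigma_2$ are the elementary symmetric functions of the pair $(S^{\alpha\beta},S^\gamma)$.

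Next we compute
$$
\det\begin{pmatrix}1&1&1\\ p_3+q_3&p_1+q_1&p_2+q_2\\ p_3q_3&p_1q_1&p_2q_2\end{pmatrix}, \qquad p_\gamma=S^{\alpha\beta},\ q_\gamma=S^\gamma .
$$
Rather than expanding blindly, use the identity that for quadratics $f_\gamma(\lambda)=(\lambda-p_\gamma)(\lambda-q_\gamma)$, this determinant equals (up to sign) the resultant-type expression obtained by evaluating at the roots. Concretely, subtract column 3 from columns 1 and 2 and factor. Alternatively, evaluate $\sum_\gamma X_\gamma f_\gamma(\lambda)=0$ at $\lambda=p_1=S^{23}$ and at $\lambda=q_1=S^1$. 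Each evaluation gives a two-term relation:
$$
X_3(S^{23}-S^{12})(S^{23}-S^3)+X_2(S^{23}-S^{13})(S^{23}-S^2)=0,
$$
and similarly at $\lambda=S^1$. Dividing these two relations eliminates the ratio $X_3/X_2$. The result is
$$
\frac{(S^{23}-S^{12})(S^{23}-S^3)}{(S^{23}-S^{13})(S^{23}-S^2)}=\frac{(S^1-S^{12})(S^1-S^3)}{(S^1-S^{13})(S^1-S^2)} .
$$

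The main obstacle is that this two-evaluation relation is not yet the cyclic form (\ref{SSc}). The factors $S^{23}-S^3$ and $S^1-S^3$ must cancel against others, which is a step where I am not confident a purely algebraic elimination from these three linear relations closes. If it does not simplify directly, I would instead use the richer information available. Namely, $S^\gamma$ and $S^{\alpha\beta}$ are related through the mKP-type relation obtained from (\ref{b1c}) with shifted arguments, which gives
$$
a_\gamma(S^{\gamma}-S)=\frac{\hat{\bar\tau}\,\tau^\gamma}{\hat\tau\,\hat\tau^\gamma}.
$$
Applying this with a further shift by $a_\beta$ yields $S^{\gamma}-S^{\beta\gamma}$ as an explicit ratio of tau-functions, for example
$$
a_\beta(S^{\gamma}-S^{\beta\gamma})\propto \frac{\hat{\bar\tau}^{\gamma}\tau^{\beta\gamma}}{\hat\tau^\gamma\hat\tau^{\beta\gamma}} .
$$
Substituting these six differences into (\ref{SSc}), all the factors $a_\beta$, $\hat{\bar\tau}^\gamma$, $\tau^{\beta\gamma}$ and $\hat\tau^{\beta\gamma}$ should pair off between numerator and denominator. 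Each double-shifted function appears once upward and once downward in the cyclic product, and the remaining single-shifted factors $\hat\tau^\gamma$ appear symmetrically. The product is then identically $1$.

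I would carry out this second route as the actual proof. The first route serves only as a consistency check. The remaining work is bookkeeping to verify that every numerator factor matches a denominator factor, including the signs coming from the ordering of $a_\beta,a_\gamma$.
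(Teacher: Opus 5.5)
Your second route is correct and gives a genuinely different, shorter proof than the paper's. Shifting (\ref{b1c}) gives $a_\beta(S^\gamma-S^{\gamma\beta})=\hat{\bar\tau}^{\gamma}\tau^{\gamma\beta}/(\hat\tau^{\gamma}\hat\tau^{\gamma\beta})$, and this is an equality, not just a proportionality. Your unshifted version has the sign reversed (the correct left-hand side is $a_\gamma(S-S^\gamma)$), but a uniform sign drops out of the ratio. All six differences in (\ref{SSc}) have the form $S^\gamma-S^{\gamma\beta}$. After substitution, numerator and denominator both equal $\hat{\bar\tau}^1\hat{\bar\tau}^2\hat{\bar\tau}^3\tau^{12}\tau^{13}\tau^{23}/(a_1a_2a_3\,\hat\tau^1\hat\tau^2\hat\tau^3\hat\tau^{12}\hat\tau^{13}\hat\tau^{23})$, so the bookkeeping closes with no sign issues.

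You do not even need the density-function construction of $\hat{\bar\tau}$ from Section 4.1. Multiply (\ref{mkp5c}) for $(\tau,\hat\tau)$ by $\bar\tau$, multiply the same relation for $(\tau,\bar\tau)$ by $\hat\tau$, and subtract. This shows that $a_1(\hat\tau^1\bar\tau-\bar\tau^1\hat\tau)/\tau^1$ is independent of $a_1$, which is exactly (\ref{b1c}).

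The paper instead takes what was your first route, and your doubt there is unfounded. The determinant of the system (\ref{m1}) expands to exactly
\[
-\bigl[(S^1-S^{12})(S^2-S^{23})(S^3-S^{13})-(S^3-S^{23})(S^1-S^{13})(S^2-S^{12})\bigr].
\]
Your two-evaluation relation, after clearing denominators, is up to sign $(S^1-S^{23})$ times this determinant, so the extra factors do cancel.

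As for what each approach buys: your telescoping argument shows transparently why (\ref{SSc}) holds. The increments $a_\gamma(S-S^\gamma)=\hat\phi\,(\tau/\hat\tau)^{\gamma}$ factor into a wave function $\hat\phi=\hat{\bar\tau}/\hat\tau$ times a shifted adjoint one. However, it requires the whole B\"acklund--Darboux square, in particular the common mKP partner $\tau$. The paper's elimination uses only relations intrinsic to the pair $(\hat\tau,\bar\tau)$, namely the KP equation for each and (\ref{bil1}). It therefore shows that (\ref{SSc}) follows from the SchKP bilinear equations alone, which is the form needed for the theorem stated in the Introduction, where only a pair satisfying (\ref{MEq}) is assumed.
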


\noindent
{\it Proof.}
Let us write the 
three equations (\ref{SS2}), (\ref{SS2b}) and the 3-term
bilinear KP equation for $\hat \tau$ in the
matrix form:
\beq\label{m1}
\left ( \begin{array}{ccc}
1 & 1 & 1
\\
S^1 +S^{23} & S^2 +S^{13} & S^3 +S^{12}
\\
S^1S^{23} & S^2S^{13} & S^3S^{12}
\end{array} \right )
\left ( \begin{array}{c}
a_{23} \hat \tau^1 \hat \tau^{23}\\
a_{31} \hat \tau^2\hat \tau^{13} \\
a_{12} \hat \tau^3\hat \tau^{12}
\end{array} \right )=
0.
\eeq
In order for this system to have nonzero solutions, 
determinant of the $3\times 3$ matrix must vanish:
$$
\left |
\begin{array}{ccc}
1 & 1 & 1
\\ \\
S^1 +S^{23} & S^2 +S^{13} & S^3 +S^{12}
\\ \\
S^1S^{23} & S^2S^{13} & S^3S^{12}
\end{array} \right |=0.
$$
Calculation of the determinant gives equation (\ref{SSc}).
\square

\noindent
At the same time, equation (\ref{SSc}) generates the whole 
infinite hierarchy of SchKP partial differential
equations for $S({\bf t})$. To see this, recall that
$S^j =S({\bf t} -[a_j^{-1}])$, $S^{jk} =S({\bf t} -[a_j^{-1}]
-[a_k^{-1}])$. Letting successively $a_3\to \infty$, $a_2\to \infty$
and, eventually, $a_1\to \infty$, in (\ref{SSc}), and 
expanding the $S$-function in inverse powers of $a_1, a_2, a_3$,
one arrives at the equation
\begin{equation}\label{contSchKP}
    \left(4\frac{S_{t}}{S_x}-\{S,x\}\right)_x
    =3\left(\frac{S_{y}}{S_x}\right)_y+\frac{3}{2}
    \left(\frac{S_{y}^2}{S_x^2}\right)_x,
\end{equation}
where we put $t_1=x$, $t_2=y$, $t_3=t$ and
\beq\label{Sder}
\{S,x\}=\frac{S_{xxx}}{S_x}-\frac{3}{2}\left(\frac{S_{xx}}{S_x}\right)^2
\eeq
is the Schwarzian derivative.
This is the (2+1)-dimensional 
Schwarzian KP equation. Its solutions that do not depend on $y$
satisfy the simpler (1+1)-dimensional Schwarzian KdV equation
\begin{equation}\label{SchKdV}
    4S_{t}=S_x\{S,x\} .
\end{equation}

Let us present some other forms and corollaries of the linear
equation (\ref{SS1}) for the $S({\bf t})$. 
First of all we note that it can be rewritten
in the form
\beq\label{SS3}
a_1^{-1}(a_2^{-1}-a_3^{-1}) \hat U^{23}\Delta^{23}S +
a_2^{-1}(a_3^{-1}-a_1^{-1}) \hat U^{31}\Delta^{31}S +
a_3^{-1}(a_1^{-1}-a_2^{-1}) \hat U^{12}\Delta^{12}S=0,
\eeq
or, after an overall shift of the variables,
\beq\label{SS3a}
a_1^{-1}(a_2^{-1}-a_3^{-1}) \hat U_{23}\Delta_{23}S +
a_2^{-1}(a_3^{-1}-a_1^{-1}) \hat U_{31}\Delta_{31}S +
a_3^{-1}(a_1^{-1}-a_2^{-1}) \hat U_{12}\Delta_{12}S=0,
\eeq
where
$$
\hat U^{\alpha \beta}=\frac{\hat \tau \, 
\hat \tau^{\alpha \beta}}{\hat \tau^{\alpha}
\hat \tau^{\beta}}, 
\qquad
\hat U_{\alpha \beta}=\frac{\hat \tau \, 
\hat \tau_{\alpha \beta}}{\hat \tau_{\alpha}
\hat \tau_{\beta}}
$$
and
$$
\Delta^{\alpha \beta}S=S^{\alpha \beta}+S -S^{\alpha} -S^{\beta},
\qquad
\Delta_{\alpha \beta}S=S_{\alpha \beta}+S -S_{\alpha} -S_{\beta}.
$$
So, $\Delta_{\alpha \beta}$ is the difference operator
\beq\label{Delta}
\Delta_{\alpha \beta}=1+e^{D(a_{\alpha})+D(a_{\beta})}-
e^{D(a_{\alpha})}-e^{D(a_{\beta})},
\eeq
where the operator $D(a)$ is defined in (\ref{D(z)}), and 
similarly for the $\Delta^{\alpha \beta}$.

It is not difficult to see that in the limit $a_3\to \infty$ 
(\ref{SS3a}) converts into the equation
\beq\label{SS3c}
(a_1-a_2) \hat U_{12} \Delta_{12} S +\p_{t_1} (S_1- S_2)=0.
\eeq
The further limit $a_2\to \infty$ yields the equation
$$
\Bigl (\p_{t_1}\log (\hat \tau_1/\hat \tau)-a_1\Bigr ) 
\p_{t_1}(S_1 -S) +\p_{t_1}^2 S +
\frac{1}{2}(\p_{t_2}+\p_{t_1}^2)(S_1 -S)=0,
$$
or
\beq\label{SS3d}
\p_{t_1}\log (\hat \tau_1/\hat \tau)\Bigl (e^{D(a_1)}\! -\! 1\Bigr )
\p_{t_1}S +\frac{1}{2}\Bigl (e^{D(a_1)}\! -\! 1\Bigr )\Bigl (
\p_{t_2}S +\p_{t_1}^2S -2a_1 \p_{t_1}S\Bigr )+\p_{t_1}^2 S=0.
\eeq

The notation $S({\bf t})$ for the ratio (\ref{SS}) is deliberately
chosen to be the same as the one introduced in Theorem \ref{definition:wavemkp}. 
The next theorem provides identification of the $S({\bf t})$ with
a wave function of the mKP hierarchy.

\begin{theorem}\label{theorem:S}
Let $\phi ({\bf t})$, $\phi '({\bf t})$ be two  
wave functions of the KP hierarchy associated 
with the KP tau-functions $\hat \tau ({\bf t})$ and
$\bar \tau ({\bf t})$ respectively. Then the following holds:

\noindent
a) The mKP Lax operators
${\cal L}$ and $\bar {\cal L}$ corresponding to the solutions
$(\tau , \hat \tau )$, $(\tau , \bar \tau )$
determined by $\phi ({\bf t})$ and $\phi '({\bf t})$ respectively,
are connected by the formula
\beq\label{LbarL}
\bar {\cal L}=S^{-1}{\cal L} S.
\eeq

\noindent
b) The function
\beq\label{S}
S({\bf t})=\frac{\phi '({\bf t})}{\phi ({\bf t})}=
\frac{\bar \tau ({\bf t})}{\hat \tau ({\bf t})}
\eeq
is an mKP wave function in the sense 
of Definition \ref{definition:wavemkp}.
Conversely, any mKP wave function has the form (\ref{S}) 
with some KP wave functions $\phi$, $\phi '$;

\noindent
c) $S^*({\bf t})=S^{-1}({\bf t})$ is an adjoint mKP wave function
in the sense 
of Definition \ref{definition:wavemkp};
\end{theorem}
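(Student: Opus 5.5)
Part a) will follow directly from Theorem \ref{theorem:WWW}. Its first line gives ${\cal W}'=S^{-1}{\cal W}$ for the dressing operator of the pair $(\tau,\bar\tau)$, which is the operator the statement denotes by $\bar{\cal L}$. Dressing $\p_x$ as in (\ref{Laxmkp}) then gives ${\cal W}'\p_x{\cal W}'^{-1}=S^{-1}{\cal W}\p_x{\cal W}^{-1}S=S^{-1}{\cal L}S$, which is (\ref{LbarL}).

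For part b) I will compare the two Sato equations (\ref{mkp11}) for ${\cal W}$ and for ${\cal W}'=S^{-1}{\cal W}$. Both pairs $(\tau,\hat\tau)$ and $(\tau,\bar\tau)$ are mKP solutions by Theorem \ref{theorem:mkp1}, so Theorem \ref{theorem:mkp1a} applies to each of them. Write ${\cal L}^k=A+B$ with $A=({\cal L}^k)_{\geq 1}={\cal B}_k$ and $B=({\cal L}^k)_{\leq 0}$. Differentiating ${\cal W}'$ gives
$$
\p_{t_k}{\cal W}'=-S^{-2}(\p_{t_k}S)\,{\cal W}-S^{-1}B{\cal W},
$$
while (\ref{mkp11}) for ${\cal W}'$ requires $\p_{t_k}{\cal W}'=-(S^{-1}{\cal L}^kS)_{\leq 0}S^{-1}{\cal W}$. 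Since $S^{-1}BS$ contains only non-positive powers of $\p_x$, we have $(S^{-1}{\cal L}^kS)_{\leq 0}=S^{-1}BS+(S^{-1}AS)_{\leq 0}$. Because $A$ is a purely differential operator with no zeroth-order term, $(S^{-1}AS)_{\leq 0}$ is just the multiplication operator by the function $S^{-1}A(S)$. Cancelling the invertible factor ${\cal W}$ on the right then leaves $\p_{t_k}S=A(S)={\cal B}_kS$, which is (\ref{wave1mkp}).

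For the converse in b), let $S$ be an mKP wave function for $(\tau,\hat\tau)$. I will use the integral representation (\ref{mkp15}), assuming it is general in the same sense as the KP statement of \cite{ANP98}: $S=\int\varphi({\bf t},z)\sigma(z)d^2z$. By (\ref{con}), $\varphi=\phi^{-1}\psi$, so $\phi'=\phi S=\int\psi({\bf t},z)\sigma(z)d^2z$ is a KP wave function by (\ref{mkp2a}). Hence $S=\phi'/\phi$, and $\bar\tau=\phi'\tau$ gives the form (\ref{S}).

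The main obstacle is this converse, which rests on completeness of the integral representation for mKP wave functions. If that is not accepted, I would instead define $\phi'=\phi S$ and check (\ref{wave1}) directly. I would do this by running the computation above backwards: the operator $S^{-1}{\cal W}$ then satisfies (\ref{mkp11}). Lemma \ref{lemma:technical1}, used in reverse, would produce the bilinear relation (\ref{mkp5a}). Finally Lemma \ref{lemma:technical2} would reconstruct a tau-function.

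Part c) should be the adjoint analogue of b). Since ${\cal B}_k=\tilde A\,\p_x$ with $\tilde A$ differential, (\ref{mkp14}) gives ${\cal B}^*_k=\p_x^{-1}\p_x^{\dag}\tilde A^{\dag}\p_x=-\tilde A^{\dag}\p_x$. So the claim to verify is $\p_{t_k}(S^{-1})=\tilde A^{\dag}\p_x(S^{-1})$. I would obtain it by comparing the adjoint dressing operators $({\cal W}^{-1}\p_x^{-1})^{\dag}$ and $({\cal W}'^{-1}\p_x^{-1})^{\dag}=(\p_x^{-1})^{\dag}S\,({\cal W}^{-1})^{\dag}$ from (\ref{kp12mkp}). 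I would differentiate both using (\ref{mkp11}) and isolate the term coming from $\p_{t_k}S$, exactly as in b). Alternatively, a direct check works: substitute $\p_{t_k}S=\tilde A(S_x)$ and use the identity $f\,\tilde A(g)-g\,\tilde A^{\dag}(f)\in{\rm Im}\,\p_x$ with $f=S_x/S^2$.
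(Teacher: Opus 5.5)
\textbf{Parts a) and b).} Part a) is the paper's argument. Your proof of the forward direction of b) is correct and takes a cleaner route than the paper's. The paper derives the difference equation (\ref{St6}) for $S$ from (\ref{SS3d}), (\ref{St2}) and (\ref{St4a}), then matches it with (\ref{St5}) for $\varphi$ via the triangular structure of (\ref{St7a}). You only need ${\cal W}'=S^{-1}{\cal W}$ and the two Sato equations (\ref{mkp11}).

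For the converse you rely on completeness of (\ref{mkp15}), which the paper also tacitly uses. This can be avoided. By (\ref{con}), ${\cal W}=\phi^{-1}W$, so ${\cal L}=\phi^{-1}L\phi$ and ${\cal B}_k=\phi^{-1}B_k\phi-\phi^{-1}B_k(\phi)$. Hence $\p_{t_k}(\phi S)-B_k(\phi S)=\phi\,(\p_{t_k}S-{\cal B}_kS)$ for every KP wave function $\phi$ of $L$. This gives both directions of b) at once, with $\phi'=\phi S$ and $\bar\tau=\phi'\tau$.

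\textbf{The gap is in c).} The identity you set out to verify, $\p_{t_k}(S^{-1})=\tilde A^{\dagger}\p_x(S^{-1})$ with $\tilde A$ built from ${\cal B}_k=({\cal L}^k)_{\geq 1}$, is false. For $k=2$ one has $\tilde A=\p_x+2v_0$, and a two-line computation gives $\p_{t_2}(S^{-1})-\tilde A^{\dagger}\p_x(S^{-1})=-2S^{-1}(\log S)_{xx}$.

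A concrete counterexample: take $\tau=\hat\tau=1$, $\phi=1$, and $\phi'=t_1$, which is a KP wave function of the trivial solution. Then $\bar\tau=x$, ${\cal L}=\p_x$ and $S=x$. We get $-\p_{t_2}(1/x)=0$, whereas ${\cal B}_2^*(1/x)=\p_x^2(1/x)=2/x^3$. The claim also fails for ${\cal L}'=S^{-1}{\cal L}S$: there ${\cal B}_2^*=\p_x^2-2x^{-1}\p_x$, which gives $4/x^3$.

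So neither of your routes can close it. Comparing $({\cal W}^{-1}\p_x^{-1})^{\dagger}$ with $({\cal W}'^{-1}\p_x^{-1})^{\dagger}=\p_x^{-1}S\p_x({\cal W}^{-1}\p_x^{-1})^{\dagger}$ only re-encodes b). The Lagrange identity $f\tilde A(g)-g\tilde A^{\dagger}(f)\in{\rm Im}\,\p_x$ gives equality only modulo total derivatives, not the required pointwise equation. The paper gives no proof of c) either.

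\textbf{What is true.} In analogy with the KP remark that $\phi^{-1}$ is adjoint for $\hat L$ rather than $L$, $S^{-1}$ is an adjoint mKP wave function for the operator of the pair $(\bar\tau,\hat{\bar\tau})$. This is $\bar{\cal W}\p_x\bar{\cal W}^{-1}$ from Theorem \ref{theorem:WWW}, the $\bar{\cal L}$ of (\ref{diag2}), not the one in part a). In the example $(\bar\tau,\hat{\bar\tau})=(x,1)$ and ${\cal B}_2=\p_x^2-2x^{-1}\p_x$, so ${\cal B}_2^*=\p_x^2+2x^{-1}\p_x$ and indeed ${\cal B}_2^*(1/x)=0$.

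Proof sketch:
\begin{enumerate}
\item Take any mKP pair $(\tau_0,\tau_1)$ with $\phi_0=\tau_1/\tau_0$. Formula (\ref{mkp9}) gives $\varphi^*=z^{-1}\phi_0\psi_1^*$. Hence ${\cal L}^*=\phi_0L_1^{\dagger}\phi_0^{-1}$ and ${\cal B}_k^*=(({\cal L}^*)^k)_{\geq 1}$.
\item Using that $\phi_0^{-1}$ is an adjoint KP wave function of $L_1$, one finds that $\phi_0\chi$ is an adjoint mKP wave function if and only if $\chi$ is an adjoint KP wave function of $L_1$.
\item Apply this with $(\tau_0,\tau_1)=(\bar\tau,\hat{\bar\tau})$ and $\chi=\hat\tau/\hat{\bar\tau}$. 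By the same KP remark applied to $(\hat\tau,\hat{\bar\tau})$, $\chi$ is an adjoint KP wave function of $\hat{\bar\tau}$. Then $\phi_0\chi=\hat\tau/\bar\tau=S^{-1}$.
\end{enumerate}
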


\noindent
Part a) easily follows from the definitions.
An outline of the proof of part b) is given below.
For the proof we need the following lemma.

\begin{lemma}\label{lemma:St2}
The function $S=S({\bf t})$ defined in (\ref{S}) satisfies the
linear equation
\beq\label{St2}
\p_{t_2}S  =\p_{x}^2 S + 2v_0 \p_x S, 
\qquad v_0=\p_x\log \frac{\hat \tau}{\tau},
\eeq
which is equation (\ref{wave1mkp}) for $k=2$.
\end{lemma}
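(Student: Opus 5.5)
The plan is to derive (\ref{St2}) directly from equation (\ref{a3}), obtained earlier from the $a_1\to\infty$ expansion of the four-tau relation (\ref{b1c}). Equation (\ref{a3}) is already a closed relation in terms of $\log(\bar\tau/\hat\tau)=\log S$ and $\log(\hat\tau/\tau)$, so it should only need rewriting.

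Set $\ell=\log S$. Then
$$
\p_{t_2}\ell=\frac{S_{t_2}}{S},\qquad \p_x\ell=\frac{S_x}{S},\qquad \p_x^2\ell+(\p_x\ell)^2=\frac{S_{xx}}{S}.
$$
Substituting these into (\ref{a3}) gives
$$
\frac{S_{t_2}}{S}-\frac{S_{xx}}{S}-2\Bigl(\p_x\log\frac{\hat\tau}{\tau}\Bigr)\frac{S_x}{S}=0 .
$$
Multiplying through by $S$ yields $\p_{t_2}S=\p_x^2S+2v_0\p_xS$ with $v_0=\p_x\log(\hat\tau/\tau)$, which is exactly (\ref{St2}).

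The remaining step is to confirm that this is the $k=2$ member of (\ref{wave1mkp}), i.e.\ that the right-hand side equals ${\cal B}_2S$.

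\begin{itemize}
\item From (\ref{Laxmkp}), ${\cal L}^2=\p_x^2+(v_0\p_x+\p_x v_0)+\ldots$, so $({\cal L}^2)_{\geq1}=\p_x^2+2v_0\p_x$, where $v_0$ is the $\p_x^0$ coefficient of ${\cal L}$. (The paper's displayed ${\cal B}_2=\p_x^2+v_0\p_x$ is just this up to normalisation.)
\item By (\ref{v0}) and (\ref{w0}), $v_0=-\p_x\log w_0=\p_x\log(\hat\tau/\tau)$, which agrees with the $v_0$ in the lemma.
\end{itemize}

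The main obstacle is the derivation of (\ref{a3}) itself, if one does not take it as given. It requires expanding (\ref{b1c}) to second order in $a_1^{-1}$, using $f({\bf t}-[a^{-1}])=f-a^{-1}f_x+\tfrac12a^{-2}(f_{xx}-f_{t_2})+O(a^{-3})$, and then eliminating $\hat{\bar\tau}$ via (\ref{a1}). The bookkeeping in that elimination is where sign errors are likely, so I would double-check it on the trivial solution $\tau=1$, $\hat\tau=e^{\xi({\bf t},p)}$, $\bar\tau=e^{\xi({\bf t},q)}$.
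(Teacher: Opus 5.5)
Your proposal is correct and is exactly the paper's argument: substitute $\log S=\log(\bar\tau/\hat\tau)$ into (\ref{a3}), use $\p_x^2\log S+(\p_x\log S)^2=S_{xx}/S$, and multiply by $S$. Your identification with the $k=2$ flow is also right, since $({\cal L}^2)_{\geq 1}=\p_x^2+2v_0\p_x$ with $v_0=-\p_x\log w_0=\p_x\log(\hat\tau/\tau)$; the paper's displayed ${\cal B}_2=\p_x^2+v_0\p_x$ is just a typo (the factor $2$ is missing), not a difference of normalisation.
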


\noindent
{\it Proof.}
In terms of the function $S$, equation (\ref{a3}) acquires
the form
$$
\p_{t_2}\log S -\p_x^2 \log S -(\p_x \log S)^2 -2v_0\p_x \log S=0,
$$
which is the same as (\ref{St2}).
\square

\noindent
{\it Proof of Theorem \ref{theorem:S}.}
The idea and the main steps of the proof are similar to
those of Theorem \ref{theorem:mkp1}, so we omit some details.

The starting point is equation (\ref{SS3d}). Substituting
$\p_{t_2}S$ from (\ref{St2}), we rewrite it as
\beq\label{St3}
\Bigl (\p_x \log \frac{\hat \tau_1 \tau}{\tau_1 \hat \tau}\Bigr )
e^{D(a_1)}\p_x S +\p_x \log \frac{\hat \tau_1}{\tau}
\Bigl (e^{D(a_1)}\! -\! 1\Bigr ) \p_x S +
e^{D(a_1)}\p_x^2 S -a_1\Bigl (e^{D(a_1)}\! -\! 1\Bigr )\p_x S=0.
\eeq
To transform the left-hand side, we use equation
(\ref{mkp5d}). Letting $a_2\to \infty$ in it, we obtain the
relation 
\beq\label{St4a}
\p_x \log \frac{\hat \tau_1}{\tau} =a_1 - 
a_1 \frac{\hat \tau \tau_1}{\hat \tau_1 \tau}
\eeq
to be substituted into (\ref{St3}). After some simple transformations,
equation (\ref{St3}) can be rewritten in the form
\beq\label{St4}
\p_x \left ( a\Bigl (1-e^{-D(a)}\Bigr )S -
\Bigl (1-a^{-1}\p_x \log \frac{\hat \tau}{\tau^{[a]}}\Bigr )^{-1}
\p_x S \right )=0,
\eeq
where $\tau^{[a]}=\tau ({\bf t}-[a^{-1}])$.
Therefore, we conclude that
$$
a\Bigl (1-e^{-D(a)}\Bigr )S -
\Bigl (1-a^{-1}\p_x \log \frac{\hat \tau}{\tau^{[a]}}\Bigr )^{-1}
\p_x S =f,
$$
where the function $f$ does not depend on $x=t_1$. This function
is actually equal to zero. This follows from the fact that
the same equation (but with zero right-hand side) is satisfied
by the function $S$ defined in (\ref{S}).
Indeed, it is not difficult to check that 
\beq\label{St4e}
\Bigl (a_1 +\p_x \log \frac{\tau^1}{\hat \tau} \Bigr )
\Bigl (\frac{\bar \tau}{\hat \tau}-\frac{\bar \tau^1}{\hat \tau^1}
\Bigr )-\p_x \Bigl (\frac{\bar \tau}{\hat \tau}\Bigr )=0
\eeq
holds. To do this, one should use equations (\ref{b1c}) and
(\ref{St4a}).

Therefore, we have proved that the 
function $S$ satisfies the linear equation
\beq\label{St6}
a\Bigl (1-e^{-D(a)}\Bigr )S({\bf t}) -
W({\bf t}, a)
\p_x S({\bf t}) =0,
\eeq
where
\beq\label{St7}
W({\bf t}, a)=
\Bigl (1-a^{-1}\p_x \log \frac{\hat \tau}{\tau^{[a]}}\Bigr )^{-1}=
1+ \sum_{k\geq 1}W_k ({\bf t}) a^{-k}, 
\qquad W_1({\bf t})=v_0({\bf t}).
\eeq
The functions $W_k ({\bf t})$ are expressed through logarithmic
derivatives of the tau-functions $\tau$ and $\hat \tau$.
The operator $e^{-D(a)}-1$ can be expanded using the 
elementary Schur polynomials as in (\ref{exp}):
$\displaystyle{
e^{-D(a)}-1=\sum_{k\geq 1}p_k(-[\p_x^{-1}] )a^{-k}.}
$
Then from equation (\ref{St6}) it follows that the function
$S$ satisfies linear equations of the form
\beq\label{St7a}
p_{k+1}(-[\p_x^{-1}]) S + W_k \p_x S=0, \qquad k\geq 1.
\eeq
To identify them with (\ref{wave1mkp}), we note that 
the mKP BA function $\varphi ({\bf t}, z)$ given by (\ref{mkp9})
satisfies the same equation (\ref{St6}), i.e.,
\beq\label{St5}
\varphi ({\bf t}, z)-\varphi ({\bf t}-[a^{-1}], z) -
\Bigl (\p_x \log \varphi ({\bf t}, a)\Bigr )^{-1} 
\p_x \varphi ({\bf t}, z)=0.
\eeq
This can be easily checked  by
substituting the expression (\ref{mkp9}) for $\varphi ({\bf t}, z)$ and
using equation (\ref{St4a}), then one can see that (\ref{St5}) is
equivalent to the bilinear equation for the KP tau-function
$\tau ({\bf t})$. Therefore, the BA function 
$\varphi$ satisfies (\ref{St7a}):
\beq\label{St7b}
p_{k+1}(-[\p_x^{-1}]) \varphi + W_k \p_x \varphi =0, \qquad k\geq 1.
\eeq
The structure of these equations allows one to
express the derivative $\p_{t_k}\varphi$ through derivatives
$\p_{t_l}\varphi$ with $l<k$:
$$
\p_{t_k}\varphi ={\sf A}_k(\p_{t_{k-1}}, \ldots , \p_{t_1})\varphi ,
$$
where ${\sf A}_k$ in the right-hand side is some differential
operator in the variables $t_1, \ldots , t_{k-1}$. Applying it
successively for $k=2,3, \ldots $, it is possible to represent
the right-hand side as a differential operator containing
$\p_{t_1}=\p_x$ only:
$$
\p_{t_k}\varphi ={\sf B}_k(\p_{x})\varphi .
$$
However, we already know from Corollary \ref{corollary:mkp1} that
for the BA function the operators 
${\sf B}_k(\p_{x})$ coincide with the 
operators ${\cal B}_k$ defined in (\ref{Bkmkp}) (see (\ref{mkp13})).
Since the function $S$ satisfies the same linear equations as
the BA function $\varphi$, we conclude that from (\ref{St7a})
it follows that
$\p_{t_k}S={\cal B}_k S$, i.e., $S$ defined 
in (\ref{S}) is indeed an mKP wave function.

It remains to prove the inverse statement: that 
any mKP wave function $S$ can be
represented as a ratio of two KP tau-functions $\hat \tau$ and
$\bar \tau$ obtained from some KP tau-function $\tau$ as results
of two forward BD transformations $\tau \to \hat \tau$ and
$\tau \to \bar \tau$. This is already almost evident from the
calculations done above. Indeed, we have checked that the BA
function satisfies equation (\ref{St5}). Integrating it 
over $z$ with some density, we see that the same equation
is satisfied by any wave function $S$ (i.e., equation 
(\ref{St6})). But equation (\ref{St4e}) means that
$S$ can be represented as ratio of $\hat \tau$ and $\bar \tau$.
\square

\begin{remark}
The transformation $\hat \tau \rightarrow \bar \tau$ is 
a BD transformation for KP hierarchy meaning that it sends any its solution
to another solution. However, such BD transformations
are in a sense of a different type
than the forward and backward ones
defined by equations (\ref{mkp6}) and (\ref{mkp6d}) respectively.
To be more precise, they can be realized as a composition of
a forward and backward transformations, and for this reason they
are called {\it binary BD transformations}. 
In this paper, we consider them from a slightly 
different angle than is usually accepted in the literature.
Namely, following a more traditional approach one constructs
a binary BD transformation starting from the pair $(\phi , \phi'^{*})$
of a KP wave function and an adjoint one. In this approach, the role
of the function $S$ (the function which satisfies the SchKP equation)
is played by the Baker-Akhiezer kernel doubly integrated with 
two density functions (for details see, e.g., the recent
review \cite{Z25a}). We, however, prefer to avoid dealing with 
adjoint wave functions and develop the theory
using the couple of wave functions $(\phi , \phi ')$.
\end{remark}

Now, let us prove an SchKP analogue of
Theorem \ref{theorem:inversetoKPtomKP}, i.e., the statement
that is in a sense inverse to the one of Theorem \ref{theorem:S}.

\begin{theorem}
Let $S({\bf t})$ be any solution to the SchKP equation
(\ref{SSc}). Then there exist two functions 
$\hat \tau ({\bf t})$, $\bar \tau ({\bf t})$ such that:
\begin{itemize}
\item[a)] The function $S$ is given by $S({\bf t})=
\bar \tau ({\bf t})/\hat \tau ({\bf t})$;
\item[b)] Both $\hat \tau ({\bf t})$ and 
$\bar \tau ({\bf t})$ are KP tau-functions satisfying the
bilinear equation (\ref{kp1}) and equation (\ref{bil1}) 
that connects them.
\end{itemize}
\end{theorem}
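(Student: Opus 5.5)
The plan is to imitate the proof of Theorem \ref{theorem:inversetoKPtomKP}, now applied twice. First, turn the Schwarzian KP equation into a statement about a function that satisfies a discrete mKP-type equation. Then use Lemma \ref{lemma:technical2} to produce the tau-functions.

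\textbf{Step 1: a cocycle from $S$.} Introduce
\[
Q({\bf t},a,b)=S({\bf t}-[a^{-1}])-S({\bf t}-[b^{-1}]),
\]
which is antisymmetric in $(a,b)$. In this notation, equation (\ref{SSc}) reads
\[
\frac{Q^1(1,2)\,Q^2(2,3)\,Q^3(3,1)}{Q^3(2,3)\,Q^1(1,3)\,Q^2(1,2)}=1 .
\]
Here $Q^\gamma(\alpha,\beta)$ means $Q$ shifted by $-[a_\gamma^{-1}]$, with $S^{\gamma\alpha}-S^{\gamma\beta}$ as entries. I will regroup this as
\[
\frac{Q^3(1,2)}{Q(1,2)}\Big/\frac{Q^1(2,3)}{Q(2,3)}=\frac{Q^3(1,2)\,Q(2,3)}{Q(1,2)\,Q^1(2,3)},
\]
and check, after a cyclic relabelling, that (\ref{SSc}) is exactly the symmetry condition
\[
\frac{Q^\gamma(\alpha,\beta)}{Q(\alpha,\beta)}=\frac{Q^\alpha(\beta,\gamma)}{Q(\beta,\gamma)},
\]
the analogue of (\ref{R1}). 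For fixed $b$, Lemma \ref{lemma:technical2} then gives a function $G({\bf t},b)$ with
\[
Q({\bf t},a,b)=c(a,b)\,\frac{G({\bf t}-[a^{-1}],b)}{G({\bf t},b)} .
\]
The normalisation is fixed by letting $a\to\infty$, where $Q\to S-S^{[b]}$.

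\textbf{Step 2: the first tau-function.} Use the antisymmetry $Q(a,b)=-Q(b,a)$ exactly as in (\ref{R7}). After multiplying $G$ by a suitable exponential factor and by the $b$-dependent normalisation $S-S^{[b]}$, apply Lemma \ref{lemma:technical2} a second time. This should yield a function $F({\bf t})$ with
\[
S^{[a]}-S^{[b]}=(a^{-1}-b^{-1})\,\frac{F\,F^{[ab]}}{F^{[a]}F^{[b]}}\cdot(\text{gauge}).
\]
This is the Schwarzian analogue of (\ref{R8}). Letting $b\to\infty$ gives $S_x\propto F^{[a]}\ldots$, and comparison with (\ref{Sx}) suggests setting $\hat\tau=F$. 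Concretely, I will try to obtain
\[
a_1a_2\bigl(S^1-S^2\bigr)=a_{21}\,\frac{\hat\tau\,\hat\tau^{12}}{\hat\tau^1\hat\tau^2}\times(\ldots),
\]
which mimics the combination of (\ref{b1c}) with (\ref{mkp5c}). Substituting into the identity
\[
(S^1-S^2)+(S^2-S^3)+(S^3-S^1)=0
\]
then gives the three-term Hirota–Miwa relation (\ref{kp1c}) for $\hat\tau$, exactly as in the last step of the proof of Theorem \ref{theorem:inversetoKPtomKP}.

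\textbf{Step 3: the second tau-function.} Put $\bar\tau=S\hat\tau$. Then (\ref{SS2}) and (\ref{SS2b}) hold automatically, because the vector $(a_{23}\hat\tau^1\hat\tau^{23},\dots)$ is the kernel of the matrix in (\ref{m1}). Indeed, the first row of that matrix is the KP equation for $\hat\tau$. For the other rows, the determinant condition (\ref{SSc}) says the kernel is one-dimensional, and Step 2 identifies it. Equation (\ref{SS2b}) is exactly the three-term KP equation for $\bar\tau$, and (\ref{SS2}) is (\ref{bil1}). Since the three-term relations are equivalent to the full hierarchy (\ref{kp1}) \cite{TT95,Shigyo13}, part b) follows.

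\textbf{Main obstacle.} The hard step is Step 2. Lemma \ref{lemma:technical2} must be applied with the right normalisation, and the integration "constants", which may depend on $a$, $b$, or an overall linear-exponential gauge, must be controlled so that the second application gives a genuine $F$. If the direct route fails, I would fall back on the fact that (\ref{SSc}) is Möbius invariant and that its limit $S_x$ behaves like a ratio $\hat{\bar\tau}\tau/\hat\tau^2$. I would first try to show that $\phi=\sqrt{S_x}$-type combinations solve the mKP equation (\ref{R3}), and then invoke Theorem \ref{theorem:inversetoKPtomKP} instead of redoing the lemma twice.
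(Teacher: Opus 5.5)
\textbf{Step 1 fails, and Step 2 is built on it.} Your rewriting of (\ref{SSc}) through $Q$ is already off. The factors of (\ref{SSc}) are single-minus-double shifts such as $S^1-S^{12}$, whereas your $Q^1(1,2)=S^{11}-S^{12}$. More seriously, (\ref{SSc}) does not imply your ``symmetry condition'' $Q^\gamma(\alpha,\beta)/Q(\alpha,\beta)=Q^\alpha(\beta,\gamma)/Q(\beta,\gamma)$. Test it on the elementary solution $S=e^{\xi({\bf t},q)-\xi({\bf t},p)}$. There $S^\alpha=r_\alpha S$ and $S^{\alpha\beta}=r_\alpha r_\beta S$ with $r_\alpha=(a_\alpha-q)/(a_\alpha-p)$. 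So (\ref{SSc}) holds identically, while your condition for $(\alpha\beta\gamma)=(123)$ reads $r_3=r_1$.

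The structural reason is visible in the expected answer. For $S=\bar\tau/\hat\tau$ one has
\[
S^1-S^2=\frac{a_{12}}{a_1a_2}\,\frac{\hat{\bar\tau}\,\tau^{12}}{\hat\tau^1\hat\tau^2}.
\]
This mixes three different tau-functions and is not of the single-function form $F\,F^{12}/(F^1F^2)$ that your Steps 1--2 presuppose. Its cyclic sum gives the mKP relation (\ref{mkp5e}) for $(\tau,\hat\tau)$, not (\ref{kp1c}) for $\hat\tau$.

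The object that works is $R({\bf t},a)=(S-S^{[a]})/S_x$. Dividing by $S_x=\hat{\bar\tau}\tau/\hat\tau^2$ gives $R=g^{[a]}/(ag)$ with $g=\tau/\hat\tau$. The cocycle condition for $R$ is exactly the $a_\gamma\to\infty$ limit (\ref{RR1}) of (\ref{SSc}). So Lemma \ref{lemma:technical2} is applied once, $\phi=1/g$ solves (\ref{R3}), and Theorem \ref{theorem:inversetoKPtomKP} produces the mKP pair $(\tau,\hat\tau)$.

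\textbf{Step 3 is a non sequitur.} Let $M$ be the matrix in (\ref{m1}). The vanishing of $\det M$ only says that $M$ has a nontrivial kernel, generically a line inside the plane $v_1+v_2+v_3=0$. The KP equation for $\hat\tau$ only puts $v=(a_{23}\hat\tau^1\hat\tau^{23},\ldots)$ in that plane, not on that line. For a generic KP tau-function $\hat\tau$, the product $S\hat\tau$ is not a tau-function at all. Equation (\ref{SS2}), i.e.\ (\ref{bil1}), is precisely what has to be proved, and you are assuming it.

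The missing idea is that $\hat\tau$ and $\bar\tau=S\hat\tau$ must be forward BD transforms of one common $\tau$. The paper obtains this from the M\"obius invariance of (\ref{SSc}). Running the construction for $S^{-1}$ and for $(aS+b)/(cS+d)$ shows that $\phi'=S\phi$ and every combination $c\phi'+d\phi$ solve mKP. Proposition \ref{proposition:phiphi} then forces $\phi=\hat\tau/\tau$ and $\phi'=\bar\tau/\tau$ to be wave functions of the same $\tau$. Theorem \ref{theorem:hat-tau-bar-tau} then gives (\ref{bil3}), hence (\ref{bil1}).

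Your fallback does point at Theorem \ref{theorem:inversetoKPtomKP}. However, it lacks the candidate that actually works ($1/g=\hat\tau/\tau$ rather than a $\sqrt{S_x}$-type expression), and it lacks this second half of the argument.
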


\noindent
{\it Proof.} The proof starts from rewriting equation (\ref{SSc})
in the form
$$
(S^1-S^{12})(S^2-S^{23})(S^3-S^{13})=
(S^3-S^{23})(S^1-S^{13})(S^2-S^{12}).
$$
Note that it is invariant under cyclic permutation of
indices $(123)$. Letting $a_{\gamma}\to \infty$ for $\gamma =1,2,3$,
we have:
\beq\label{RR1}
S_x^{\alpha} (S^{\beta} -S^{\alpha \beta})(S-S^{\beta})=
S_x^{\beta} (S^{\alpha} -S^{\alpha \beta})(S-S^{\alpha})
\eeq
for $(\alpha \beta )=(12), (23), (31)$, where 
$S_x^{\alpha}=\p_x S^{\alpha}$. In terms of the function
$$
R({\bf t}, z)=\frac{S({\bf t})-S({\bf t}-[z^{-1}])}{S_x({\bf t})}
$$
equation (\ref{RR1}) acquires the form
$$
\frac{R({\bf t}-[a_{\alpha}^{-1}], 
a_{\beta})}{R({\bf t}, a_{\beta})}=
\frac{R({\bf t}-[a_{\beta}^{-1}], 
a_{\alpha})}{R({\bf t}, a_{\alpha})},
$$
which allows us to apply Lemma \ref{lemma:technical2}. It then
follows that there exists a function $g({\bf t})$ such that
\beq\label{RR2}
R({\bf t}, a)=\frac{g({\bf t}-[a^{-1}])}{a \,g({\bf t})}.
\eeq
A direct calculation (which uses equation (\ref{SSc})) 
shows that the function
\beq\label{RR3}
\phi ({\bf t})=1/g({\bf t})
\eeq
satisfies equation (\ref{R3}), i.e., is a solution to the mKP hierarchy. 
Theorem \ref{theorem:inversetoKPtomKP} implies 
that there exists a pair of KP tau-functions $\tau , \hat \tau$
connected by the bilinear equation (\ref{mkp5}) such that
$\phi =\hat \tau / \tau$.

Note that if $S$ is a solution to (\ref{SSc}), then 
$S'=S^{-1}$ is another solution. Repeating the arguments above
for this solution, we conclude that the function $\phi ' = S\phi =
S/g$, as well as $\phi$, solves the mKP hierarchy. More generally,
$S'' =(aS+b)/(cS+d)$ for all $a,b,c,d \in \CC$ is a solution, too.
In the same way, this implies that any linear combination
$\phi''=c\phi' +d\phi$ solves the mKP hierarchy.
According to Proposition
\ref{proposition:phiphi}, $\phi$ and $\phi'$ (and $\phi''$) are
wave functions corresponding to one and the same solution to KP
with a tau-function $\tau$. Then $\phi =\hat \tau/\tau$, 
$\phi'=\bar \tau/\tau$, and $(\tau , \hat \tau )$, 
$(\tau , \bar \tau )$ are two solutions to the mKP hierarchy.
Equations (\ref{bil3}) and (\ref{SS1}) that connect the tau-functions
$\hat \tau , \, \bar \tau$ then follow from Theorem
\ref{theorem:hat-tau-bar-tau}.
\square

\begin{remark}
It can be shown (see, e.g. \cite{Z25a}) that the pair of tau-functions
$(\tau , \hat{\bar \tau})$ on the second diagonal in 
the square diagram (\ref{diag1}) satisfy the integral
bilinear relation
\beq\label{mkp2d}
\begin{array}{l}
\displaystyle{
\oint_{C_{\infty}} z^{2}e^{\xi ({\bf t}-{\bf t}', z)}
\hat{\bar \tau }({\bf t}-[z^{-1}])
\tau ({\bf t}'+[z^{-1}]) \, dz =0,
}
\end{array}
\eeq
and, therefore, can be regarded as a solution to the
2-mKP hierarchy in the sense of 
Section \ref{section:mKP-Toda} (see (\ref{mkp2})).
\end{remark}

\subsection{From mKP and SchKP to HD}
\label{section:mKPHD}

We have shown that the bilinear formulation of the SchKP hierarchy
requires two KP tau-functions $\hat \tau$ and $\bar \tau$ such that
they are connected by the bilinear equation (\ref{bil3}).
The dependent variable of the hierarchy is the $S$-function,
which is their ratio. The infinite 
hierarchy itself is obtained by expanding
equation (\ref{SSc}) in a Taylor series in negative powers
of $a_1, a_2, a_3$. 

A natural question arises whether the SchKP hierarchy admits
a formulation of the Lax-Sato type, as it is the case for
KP and mKP.
In this section we show that the answer is yes, and the Lax-Sato
form of the SchKP hierarchy is just what is called 
the (2+1) HD hierarchy. 

First of all, we need some technical preparations. 
Let $S(x)$ and $X(s)$ be two mutually inverse functions:
$S \circ X =X\circ S =\mbox{Id}$, or
\beq\label{tech0}
S(X(s))=s, \qquad X(S(x))=x.
\eeq
Differentiating these formulas, we get
$$
S_x \, X_s =1, \;\; \mbox{where} \;\; S_x\equiv \p_x S(x), \;
X_s\equiv \p_s X(s),
$$
where it is implied that $x$ and $s$ depend on each other
as $s=S(x)$ (or $x =X(s)$).
Given any function $f(x)$ of $x$, the corresponding function of
$s$ is $f(X(s))$. The notation $f^{(x)}(x)$ for
the $f(x)$ and $f^{(s)}(s)$ for $f(X(s))$ are sometimes useful,
then we have the identity $f^{(x)}(x)=f^{(s)}(s)$ which states that
the function $f$ is an invariant object defined regardless 
of the choice of coordinate ($x$ or $s$).
The vector fields $\p_x$ and $\p_s$ are connected by the formula
\beq\label{tech1}
\p_x = S_x \p_s, \quad S_x = S_x (X(s)). 
\eeq

Consider pseudo-differential operators in the variables $x$ and $s$:
\beq\label{AA}
A^{(x)}=\sum_{k} a_k^{(x)}(x) \p_x^{k}, \qquad
A^{(s)}=\sum_{k} a_k^{(s)}(s) (S_x \p_s )^{k},
\eeq
with $A^{(x)}=A^{(s)}$ by the definition.

\begin{lemma}
\label{lemma:A}
For a pseudo-differential operaror $A^{(x)}=A^{(s)}=A$ it holds:
\begin{itemize}
\item[a)] $A^{(s)}_{\geq 1}=A^{(x)}_{\geq 1}$,
\item[b)] $A_{\geq 2}^{(s)}=A^{(x)}_{\geq 1}-(A^{(x)}_{\geq 1}S)
S_{x}^{-1}\p_x$,\\
\\
where $(A^{(x)}_{\geq 1}S)$ means action of the differential
operator $A^{(x)}_{\geq 1}$ to the function $S(x)$.
\end{itemize}
\end{lemma}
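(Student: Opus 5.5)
The plan is to expand every power of $\p_x$ in the $s$-picture and track which powers of $\p_s$ can appear. Throughout, the truncations $(\ldots)_{\geq k}$ applied to $A^{(s)}$ are understood with respect to powers of $\p_s$. To make this meaningful, we first rewrite $A^{(s)}=\sum_k a^{(s)}_k(s)(S_x\p_s)^k$ as a formal series $\sum_j c_j(s)\p_s^j$.

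The first step handles the positive powers. By (\ref{tech1}), $\p_x^k=(S_x\p_s)^k$ for $k\geq 1$. Moving all $\p_s$ to the right by the Leibniz rule gives a differential operator in $s$ of order $k$. Every term of it ends with at least one $\p_s$, so it contains only $\p_s^j$ with $1\leq j\leq k$ and no $\p_s^0$ term. Equivalently, it annihilates constants.

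The second step handles the remaining powers. Since $\p_x=S_x\p_s$, its inverse in the algebra of pseudo-differential operators in $s$ is $\p_x^{-1}=\p_s^{-1}S_x^{-1}$. Hence $\p_x^{-k}=(\p_s^{-1}S_x^{-1})^k$, and expanding with the standard rule for $\p_s^{-1}\circ f$ shows this is a series of order $-k$ in $\p_s$. It therefore contains only strictly negative powers of $\p_s$. The term $a_0\p_x^0$ is just multiplication by a function, i.e.\ order $0$ in $\p_s$.

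Putting these together, splitting $A^{(x)}=A^{(x)}_{\geq 1}+A^{(x)}_{\leq 0}$ gives:
\begin{itemize}
\item the first summand is purely of $\p_s$-order $\geq 1$;
\item the second summand is purely of $\p_s$-order $\leq 0$.
\end{itemize}
This proves a) directly, because the $\p_s$-expansion is unique.

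For b), write $A^{(x)}_{\geq 1}=\sum_{j\geq 1}c_j(s)\p_s^j$. Then $A^{(s)}_{\geq 2}=A^{(x)}_{\geq 1}-c_1\p_s$, so it remains to identify $c_1$. Apply the operator to the function $s$ itself, which in the $x$-picture is $S(x)$. All $\p_s^j$ with $j\geq 2$ kill $s$, and $\p_s s=1$, so
$$
c_1=\bigl(A^{(x)}_{\geq 1}S\bigr).
$$
Finally, substituting $\p_s=S_x^{-1}\p_x$ from (\ref{tech1}) gives the claimed formula $A^{(s)}_{\geq 2}=A^{(x)}_{\geq 1}-(A^{(x)}_{\geq 1}S)S_x^{-1}\p_x$.

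The only point requiring some care is the second step: checking that $\p_x^{-1}=\p_s^{-1}S_x^{-1}$ is consistent within the formal pseudo-differential calculus. This follows because $S_x\p_s$ is an invertible order-one operator with invertible leading coefficient, whose two-sided inverse is unique. After that, both parts reduce to bookkeeping of orders.
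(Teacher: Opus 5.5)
Your proof is correct. It is the direct verification the paper has in mind when it says both equalities ``easily follow from the definitions'' and defers details to Oevel--Rogers. Positive powers $(S_x\p_s)^k$ contain only $\p_s^{j}$ with $1\leq j\leq k$, and negative powers $(\p_s^{-1}S_x^{-1})^k$ have $\p_s$-order $-k$; this gives a). Reading off the $\p_s$-coefficient of $A^{(x)}_{\geq 1}$ by acting on $s=S(x)$ gives b), and your reading of the $s$-truncations as truncations in powers of $\p_s$ is the one the paper uses for $({\sf L}^k)_{\geq 2}$.
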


\noindent
The both equalities easily follow from the definitions. 
Some details of the proof can be found in \cite{OR}.

The last technical remark is related to the case when 
the functions under consideration depend on a parameter $t$
(actually, we are interested in the case when the set
of parameters ${\sf t}=\{t_2, t_3, t_4, \ldots \}$ is infinite).
Then the rule of change of variables from $(x, t)$ to $(s, t)$
implies that
\beq\label{tech2}
\p_t \Bigr |_{x} =\p_t \Bigr |_{s} +S_t \p_s, \quad 
S_t \equiv \p_t S(x,t),
\eeq
where $\p_t \Bigr |_{x}$ ($\p_t \Bigr |_{s}$) is the partial
derivative at constant $x$ (respectively, $s$).

Now we are ready to reformulate the results of the previous
section using the framework of pseudo-differential operators.
Let $S({\bf t})=S(x, {\sf t})$ (${\sf t}=
\{t_2, t_3, \ldots \}$) be a wave function of the mKP
hierarchy given by (\ref{S}). In what follows it plays the
role of the $S$-function from (\ref{tech0}) depending on 
the $t_k$'s as on parameters. So, we consider the change
of variables $x\to s$, where $s=S(x, {\sf t})$ (with fixed 
${\sf t}$).
Consider the mKP Lax operator ${\cal L}={\cal L}^{(x)}$ 
(\ref{Laxmkp}) acting to functions of $x$ and its $s$-version,
as in (\ref{AA}): 
\beq\label{LL}
{\cal L}^{(x)}=\p_x +\sum_{k\geq 0}v_{k}^{(x)}(x)\p_x^{-k},
\qquad
{\sf L}^{(s)}=S_x \p_s +\sum_{k\geq 0}v_{k}^{(s)}(s)(S_x \p_s )^{-k},
\eeq

\begin{proposition}
Let $S(x, {\sf t})$ be a wave function for the mKP Lax operator
${\cal L}$ and the new independent space variable be
$s=S(x, {\sf t})$. Then the Lax equations (\ref{kp9mkp}) 
for ${\cal L}={\cal L}^{(x)}$ imply the following Lax equations for 
${\sf L}={\sf L}^{(s)}$ given by (\ref{LL}):
\beq\label{sfL}
\p_{t_k}{\sf L}=[({\sf L}^k)_{\geq 2}, \, {\sf L}], \qquad
k\geq 2
\eeq
(the $t_k$-derivative in the left-hand side is taken at constant $s$).
\end{proposition}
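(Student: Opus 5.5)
The plan is to compute $\p_{t_k}{\sf L}|_s$ from the known $x$-evolution, using the change-of-variables rule (\ref{tech2}), and to rewrite the result with Lemma \ref{lemma:A}. Since ${\sf L}^{(s)}$ and ${\cal L}^{(x)}$ are the same operator written in two coordinate systems, its $t_k$-derivative at constant $x$ is given by the mKP Lax equation (\ref{kp9mkp}):
$$\p_{t_k}{\cal L}\bigr|_x=[({\cal L}^k)_{\geq 1},{\cal L}].$$
The difference between the derivatives at constant $x$ and at constant $s$ comes from the vector field $S_{t_k}\p_s$. Acting on coefficients, (\ref{tech2}) gives
$$\p_{t_k}a|_x=\p_{t_k}a|_s+S_{t_k}\p_s a .$$
Applying the derivative to the operator ${\sf L}$ and viewing it as the derivative of the operator conjugated by the change of variables, we expect
$$\p_{t_k}{\sf L}\bigr|_s=\p_{t_k}{\cal L}\bigr|_x-[S_{t_k}S_x^{-1}\p_x,\,{\cal L}].$$
We will justify this by writing ${\cal L}=\Phi^{-1}{\sf L}\,\Phi$, where $\Phi$ is the pullback operator $f\mapsto f\circ S$. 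Differentiating $\Phi$ in $t_k$ produces the generator $S_{t_k}\p_s=S_{t_k}S_x^{-1}\p_x$.

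The second step uses that $S$ is an mKP wave function (Theorem \ref{theorem:S}, part b)), so $S_{t_k}=({\cal L}^k)_{\geq 1}S$. The correction term then becomes $-[((\mathcal{L}^k)_{\geq 1}S)S_x^{-1}\p_x,\,{\cal L}]$. Combining the two commutators gives
$$\p_{t_k}{\sf L}|_s=\bigl[({\cal L}^k)^{(x)}_{\geq 1}-(({\cal L}^k)^{(x)}_{\geq 1}S)S_x^{-1}\p_x,\;{\cal L}\bigr].$$
By Lemma \ref{lemma:A} b), applied to $A={\cal L}^k={\sf L}^k$, the operator inside the commutator is exactly $({\sf L}^k)_{\geq 2}^{(s)}$. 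Since ${\cal L}={\sf L}$ as operators, this yields (\ref{sfL}).

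The main obstacle is the first step: getting the sign and the ordering of the correction term right. It also requires checking that the derivative of a pseudo-differential operator at constant $s$ really produces the commutator with $S_{t_k}\p_s$. The subtlety is that the basis operator $S_x\p_s$ depends on $t_k$, and this dependence has to cancel against the coefficient changes.

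To handle it, I will first verify the identity on the generator. I will check that $\p_{t_k}(S_x\p_s)|_s=[\p_x,\,S_{t_k}S_x^{-1}\p_x]$, using $\p_x(S_{t_k}/S_x)$ and $\p_{t_k}S_x=\p_x S_{t_k}$. For multiplication operators by functions, the formula is immediate from (\ref{tech2}). The general case then follows by the Leibniz rule for the derivation $\p_{t_k}|_s$, together with the fact that $-[S_{t_k}S_x^{-1}\p_x,\cdot\,]$ is also a derivation. Negative powers follow from $\p_x^{-1}\p_x=1$.
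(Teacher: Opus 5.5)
Your proposal is correct and takes essentially the same route as the paper. Both arguments use (\ref{tech2}) to write $\p_{t_k}{\sf L}|_s=\p_{t_k}{\cal L}|_x-[S_{t_k}\p_s,{\cal L}]$, then use the wave-function property $S_{t_k}=({\cal L}^k)_{\geq 1}S$ and Lemma~\ref{lemma:A}~b) to identify the combined operator with $({\sf L}^k)_{\geq 2}$. The only difference is that you justify the constant-$s$ derivative identity explicitly, checking it on functions, $\p_x$ and $\p_x^{-1}$ and extending it as a derivation, whereas the paper simply asserts it as an operator identity.
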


\noindent
{\it Proof.}
Using (\ref{tech2}), we can write:
$$
\p_{t_k}{\sf L}=\Bigl [\p_{t_k} \Bigr |_{s}, \, {\sf L}\Bigr ]=
\Bigl [\p_{t_k} \Bigr |_{x}, \, {\cal L}\Bigr ]-
\Bigl [ S_{t_k}\p_s , \, {\sf L}\Bigr ]=
\p_{t_k}{\cal L} -\Bigl [ S_{t_k}\p_s , \, {\sf L}\Bigr ].
$$
Next, using Lemma \ref{lemma:A}, we have:
$$ 
\p_{t_k}{\sf L}-[({\sf L}^k)_{\geq 2}, \, {\sf L}]=
\p_{t_k}{\cal L} -\Bigl [ S_{t_k}\p_s , \, {\sf L}\Bigr ]
-\Bigl [({\cal L}^k )_{\geq 1} - (({\cal L}^k )_{\geq 1}S)\p_s ,\,
{\sf L}\Bigr ]
$$
$$
=\p_{t_k}{\cal L}-\Bigl [({\cal L}^k)_{\geq 1}  , \, {\sf L}\Bigr ]
+\Bigl [(({\cal L}^k)_{\geq 1}S -S_{t_k})\p_s , \, {\sf L}\Bigr ].
$$
If $S$ is a wave function for ${\cal L}$, i.e., 
$S_{t_k}=({\cal L}^k)_{\geq 1}S$, then the last term in the right-hand side
vanishes. The rest terms also vanish by virtue of the Lax equation
for ${\cal L}$, and we obtain zero in the right-hand side. 
Therefore, the Lax equation (\ref{sfL}) is proved.
\square

Comparing ${\sf L}$ with $L^{\rm HD}$ from (\ref{LLL}), 
we see that to identify them one should put
\beq\label{US}
U(s, {\sf t})=S_x (X(s), {\sf t}).
\eeq
In terms of the function $U$ we have:
\beq\label{US1}
\p_x = U \p_s.
\eeq
So, ${\sf L}$ is the Lax operator for the HD hierarchy. Its
dependent variable $U$ is directly connected with $S$, the one subject
to equations of the SchKP hierarchy. Namely, 
take the $x$-derivative $S_x(x, {\sf t}) =
\p_x S(x, {\sf t})$ and consider it 
as a function of $s$ (and of all the times) via the substitution
$x=X(s, {\sf t})$, where $X(s, {\sf t})$ is the function inverse
to $S(x, {\sf t})$, as equation (\ref{US}) suggests.

The dressing operator ${\sf W}$ for the HD hierarchy is 
a pseudo-differential operator in the variable $s$ 
of the form
\beq\label{WW1}
{\sf W}=w_0 +w_1 (U\p_s)^{-1} + w_2 (U\p_s)^{-2} + \ldots 
\eeq
such that
\beq\label{WW}
{\sf L}={\sf W}\, U \p_s {\sf W}^{-1}=
U\p_s + v_0 + v_1 (U\p_s )^{-1} + \ldots \, .
\eeq
In fact it is the mKP dressing operator ${\cal W}$ in which
the change of variables $x\to s$ is done.

By analogy with Definition \ref{definition:wavemkp} it is natural
to introduce wave functions for the HD hierarchy.

\begin{definition}
Any solution $\Phi (s, {\sf t})$ to the system of linear
differential equations
\beq\label{waveHD}
\p_{t_k}\Phi (s, {\sf t}) ={\sf B}_k \Phi (s, {\sf t}), 
\qquad {\sf B}_k=({\sf L}^k)_{\geq 2},
\quad k\geq 2
\eeq
is called wave function for the HD Lax operator ${\sf L}$ (\ref{WW}).
\end{definition}

\noindent
Wave functions for the HD hierarchy admit a simple
description in terms of those for the mKP hierarchy.

\begin{proposition}
\label{proposition:waveHD}
Let $S(x, {\sf t})$ and $\tilde S(x, {\sf t})$ 
be two wave functions of the mKP hierarchy.
Then $\Phi (s, {\sf t})=\tilde S(X(s, {\sf t}), {\sf t})$
is a wave function of the HD hierarchy associated with $S$
(as before, $X(s, {\sf t})$ is the function inverse 
to $S(x, {\sf t})$).
Moreover the opposite is also true: 
if $\Phi(s,{\sf t})$ is a wave function of the HD hierarchy 
associated with $S$, then there exists 
a wave function $\tilde{S}(s,{\sf t})$ of the mKP hierarchy
such that  $\Phi (s, {\sf t})=\tilde S(X(s, {\sf t}), {\sf t})$.
\end{proposition}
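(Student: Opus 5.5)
The key identity is the change-of-variables rule: for any function $F(x,{\sf t})$, setting $\Phi(s,{\sf t})=F(X(s,{\sf t}),{\sf t})$, equation (\ref{tech2}) gives
\[
\p_{t_k}\Phi\bigr|_{s}=\p_{t_k}F\bigr|_{x}-S_{t_k}\p_s\Phi .
\]
Combined with Lemma \ref{lemma:A} b), this reduces both directions to the same computation. The operators are
\[
({\sf L}^k)_{\geq 2}=({\cal L}^k)_{\geq 1}-\bigl(({\cal L}^k)_{\geq 1}S\bigr)S_x^{-1}\p_x=({\cal L}^k)_{\geq 1}-\bigl(({\cal L}^k)_{\geq 1}S\bigr)\p_s ,
\]
using $S_x^{-1}\p_x=\p_s$ from (\ref{tech1}). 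Since $S$ is an mKP wave function, $({\cal L}^k)_{\geq 1}S=S_{t_k}$. Hence
\[
{\sf B}_k=({\cal L}^k)_{\geq 1}-S_{t_k}\p_s={\cal B}_k-S_{t_k}\p_s ,
\]
where ${\cal B}_k$ is understood as acting on functions of $x$, rewritten in terms of $s$.

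\emph{Direct statement.} Take $\Phi=\tilde S\circ X$. Then
\[
\p_{t_k}\Phi\bigr|_s=\p_{t_k}\tilde S\bigr|_x-S_{t_k}\p_s\Phi={\cal B}_k\tilde S-S_{t_k}\p_s\Phi={\sf B}_k\Phi ,
\]
because $\tilde S$ satisfies (\ref{wave1mkp}) and $\p_s\Phi$ is just $\p_s$ applied to $\tilde S$ in the new coordinate. This proves the first claim for all $k\geq 2$. For $k=1$ there is no condition in (\ref{waveHD}), so nothing needs to be checked there.

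\emph{Converse.} Given an HD wave function $\Phi(s,{\sf t})$, define $\tilde S(x,{\sf t})=\Phi(S(x,{\sf t}),{\sf t})$. The identity above, read backwards, gives
\[
\p_{t_k}\tilde S\bigr|_x=\p_{t_k}\Phi\bigr|_s+S_{t_k}\p_s\Phi={\sf B}_k\Phi+S_{t_k}\p_s\Phi={\cal B}_k\tilde S ,
\]
so $\tilde S$ satisfies (\ref{wave1mkp}) for every $k\geq 2$. The $k=1$ equation, $\p_{t_1}\tilde S=\p_x\tilde S$, holds trivially since $t_1=x$. Thus $\tilde S$ is an mKP wave function, and by construction $\Phi=\tilde S\circ X$.

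\emph{Main obstacle.} The substantive point is justifying Lemma \ref{lemma:A} b) in the form used, specifically that the projection $(\cdot)_{\geq 2}$ in the $(U\p_s)$-expansion subtracts exactly the zeroth-order-in-$\p_s$ part $\bigl(({\cal L}^k)_{\geq 1}S\bigr)\p_s$. I would verify this by noting the following. Writing $\p_x^{j}=(S_x\p_s)^{j}$ and expanding, the terms with $j\geq 1$ involve only nonnegative powers of $\p_s$ with no $\p_s^0$ part. The coefficient of $\p_s^1$ is obtained by applying the operator to $s=S$. Negative powers of $\p_x$ contribute only negative powers of $\p_s$. I also need to keep track of which variable is held fixed in each $t_k$-derivative, which is the same bookkeeping already carried out in the proof of the preceding proposition.
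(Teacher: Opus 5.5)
Your proof is correct and is essentially the paper's argument: both combine the chain rule for $\Phi=\tilde S\circ X$ (the paper writes it through $\p_{t_k}X=-S_{t_k}S_x^{-1}$, obtained by differentiating $X(S(x,{\sf t}),{\sf t})=x$, while you use (\ref{tech2}) directly), Lemma \ref{lemma:A} b), and the wave-function property of $S$; your explicit converse via $\tilde S=\Phi\circ S$, together with the trivial $k=1$ check, simply spells out the paper's ``if and only if''. One wording fix: in Lemma \ref{lemma:A} b) the projection $(\cdot)_{\geq 2}$ is taken with respect to powers of $\p_s$ (not of $U\p_s$), and the subtracted term $\bigl(({\cal L}^k)_{\geq 1}S\bigr)\p_s$ is the first-order part rather than the zeroth-order one — this is in fact what your verification sketch checks.
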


\noindent
{\it Proof.}
According to the definitions and Lemma \ref{lemma:A} we can write
the following chain of equalities:
\beq\label{chain}
\begin{array}{lll}
\p_{t_k}\Phi (s, {\sf t})-({\sf L}^k)_{\geq 2}\Phi (s, {\sf t})
&= &\p_{t_k}\tilde S(x, {\sf t}) +\p_{t_k}X(s, {\sf t})
\tilde S_x (x, {\sf t}) -
({\sf L}^k)_{\geq 2}\tilde S(X(s, {\sf t}), {\sf t})
\\ && \\
&=& ({\cal L}^k)_{\geq 1}\tilde S -({\cal L}^k)_{\geq 1}\tilde S
+({\cal L}^k)_{\geq 1}S) S_x^{-1} \tilde S_x +
\p_{t_k} X(s, {\sf t}) \tilde S_x
\\ && \\
&=& \Bigl (\p_{t_k} X(s, {\sf t}) + \p_{t_k}S(x, {\sf t})S_x^{-1}
(x, {\sf t})
\Bigr )\tilde S_x
\end{array}
\eeq
Taking the $t_k$-derivative of the identity
$X(S(x, {\sf t}), {\sf t})=x$, we get:
$$
\p_{t_k} X(s, {\sf t}) + \p_{t_k}S(x, {\sf t})\p_s X(s, {\sf t})=0
$$
But 
$\p_s X(s, {\sf t})=S_x^{-1}(x, {\sf t})$ as they are derivatives
of the mutually inverse functions. Therefore, the expression 
in the last line in
(\ref{chain}) vanishes and we have
$$
\p_{t_k}\Phi (s, {\sf t})-({\sf L}^k)_{\geq 2}\Phi (s, {\sf t})=0,
$$
which just means that 
$\Phi (s, {\sf t})=\tilde S(X(s, {\sf t}), {\sf t})$
is a wave function of the HD hierarchy
if and only if $\tilde{S}$ is a wave function of the mKP hierarchy.
\square

\begin{remark}
Since the differential operators ${\sf B}_k$ are divisible by
$\p_s^2$ from the right, 
it is evident that the simplest HD wave functions are 
$\Phi =1$ and $\Phi =s$. The corresponding second
mKP wave functions $\tilde S$ from Proposition 
\ref{proposition:waveHD} in these two cases are
$\tilde S =1$ and $\tilde S =S$ respectively.
\end{remark}

\subsection{Wave functions and BD transformations}

\begin{figure}[t]
\centering{\includegraphics[scale=1.2]{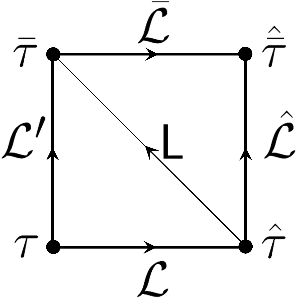}}
\vspace{1cm}
\caption{The square}
\label{figure:square}
\end{figure}

For both KP and mKP hierarchies, their wave functions allow 
one to construct other solutions by means of BD transformations. 
In the KP case, this is part a) of Theorem \ref{theorem:mkp1},
and there is only one such solution corresponding to 
a given wave function.
In the mKP case, given an mKP wave function $S$,
there are three different solutions described by 
Theorem \ref{theorem:WWW}.
All this is summarized by the square 
in Fig.\ref{figure:square}, which is
a pictorial version of diagrams (\ref{diag1}) or (\ref{diag2}).
In the figure, the vertices correspond to KP tau-functions, 
the directed edges correspond to mKP Lax operators 
and the face corresponds to the Lax operator ${\sf L}$ 
of the HD hierarchy.

A similar pattern holds for HD wave functions.
\begin{proposition}
\label{proposition:mKPBack}
Let $\mathcal{L}$ be an mKP Lax operator and $S, F$ be its 
two wave functions, $\mathcal{L}'=S^{-1}\mathcal{L}S$, $\hat{\mathcal{L}}=S^{-1}_x\partial_x\mathcal{L}\partial^{-1}S_{x}$, 
$\bar{\mathcal{L}}=S^{-1}_xS^2\partial_xS^{-1}\mathcal{L}S\partial_x^{-1}S^{-2}S_{x}$, then 
\begin{equation}\label{mKPBack}
\begin{array}{c}
    F'=FS^{-1}, \qquad
    \hat{F}=F_xS_{x}^{-1}, \qquad
    \bar{F}=S^2(F/S)_xS_x^{-1}
\end{array}
\end{equation}
are wave functions of ${\cal L}'$, $\hat {\cal L}$ and
$\bar {\cal L}$ respectively.
\end{proposition}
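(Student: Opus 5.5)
The approach is to reduce everything to the tau-function picture of diagram (\ref{diag1}) and Theorem \ref{theorem:S}, rather than to compute with the conjugated operators directly. By Theorem \ref{theorem:S} b), the wave functions of ${\cal L}$ have the form (\ref{S}), and ${\cal L}$ corresponds to the mKP pair $(\tau,\hat\tau)$. So I will write $S=\bar\tau/\hat\tau$ and $F=\tilde\tau/\hat\tau$, where $\bar\tau=\phi'\tau$ and $\tilde\tau=\phi''\tau$ are obtained from $\tau$ by forward BD transformations with KP wave functions $\phi',\phi''$. By Theorem \ref{theorem:WWW}, the three operators ${\cal L}'$, $\hat{\cal L}$, $\bar{\cal L}$ of the statement are the mKP Lax operators of the pairs $(\tau,\bar\tau)$, $(\hat\tau,\hat{\bar\tau})$ and $(\bar\tau,\hat{\bar\tau})$. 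Indeed, conjugating $\p_x$ by the dressing operators (\ref{WWW}) gives exactly these formulas, since ${\cal W}\p_x{\cal W}^{-1}={\cal L}$. For each of the three pairs I will then exhibit the candidate function as a ratio of the form (\ref{S}) for the appropriate base KP tau-function, and apply the direct part of Theorem \ref{theorem:S} b).

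\emph{Case $F'$.} The pair $(\tau,\bar\tau)$ has base $\tau$ and second member $\bar\tau$. We have $F'=F/S=\tilde\tau/\bar\tau=\phi''/\phi'$, a ratio of two KP wave functions for $\tau$. It is therefore a wave function of ${\cal L}'$. This case is immediate.

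\emph{Case $\hat F$.} Here the base KP tau-function is $\hat\tau$, and its BD image defining $\hat{\cal L}$ is $\hat{\bar\tau}$. Apply the construction (\ref{four}) twice: once with $\bar\tau$, giving $\hat{\bar\tau}$, and once with $\tilde\tau$, giving $\hat{\tilde\tau}$, using the same first transformation $\tau\to\hat\tau$. By the same theorem, $\hat{\tilde\tau}/\hat{\bar\tau}$ is a wave function of $\hat{\cal L}$. By (\ref{a1}) in the form (\ref{Sx}), $S_x=\hat{\bar\tau}\tau/\hat\tau^2$ and $F_x=\hat{\tilde\tau}\tau/\hat\tau^2$. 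Hence $F_x/S_x=\hat{\tilde\tau}/\hat{\bar\tau}$, which gives the claim.

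\emph{Case $\bar F$.} Here the base is $\bar\tau$ and the image is $\hat{\bar\tau}$. Now $F/S=\tilde\tau/\bar\tau$, which by the first case is a wave function for the pair $(\tau,\bar\tau)$. Apply (\ref{a1}) to the square with bottom edge $\tau\to\bar\tau$ and the two transformations $\bar\tau$, $\tilde\tau$. This gives $(F/S)_x=\bar{\tilde\tau}\tau/\bar\tau^2$, where $\bar{\tilde\tau}$ is the composite BD image. Then
$$S^2(F/S)_xS_x^{-1}=\frac{\bar\tau^2}{\hat\tau^2}\cdot\frac{\bar{\tilde\tau}\tau}{\bar\tau^2}\cdot\frac{\hat\tau^2}{\hat{\bar\tau}\tau}=\frac{\bar{\tilde\tau}}{\hat{\bar\tau}}.$$
This is a ratio of two BD images of $\bar\tau$, hence a wave function of $\bar{\cal L}$.

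The main obstacle is making the last two cases rigorous, namely checking that the composite tau-functions are consistent. In the $\hat F$ case, the second transformation from $\hat\tau$ must be realized with one fixed first step, so that both $\hat{\bar\tau}$ and $\hat{\tilde\tau}$ are BD images of the \emph{same} $\hat\tau$ via $\hat{\cal L}$'s KP Lax operator. In the $\bar F$ case, I must make sure that $\hat{\bar\tau}$ obtained by applying ${\bf B}_1$ first agrees with the square based at $\bar\tau$; otherwise there is the sign ambiguity (\ref{b18}), which is harmless since wave functions form a linear space. If this bookkeeping becomes awkward, the fallback is a direct operator check. One verifies $\p_{t_k}\hat F=(\hat{\cal L}^k)_{\geq 1}\hat F$ using $\hat{\cal L}^k=S_x^{-1}\p_x{\cal L}^k\p_x^{-1}S_x$, the equations for $F$ and $S$, and the identity $(S_x^{-1}\p_x A\p_x^{-1}S_x)_{\geq 1}=S_x^{-1}\p_x A_{\geq 1}\p_x^{-1}S_x-(\ldots)$, with the correction fixed by Lemma \ref{lemma:A}-type projections. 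This is in the spirit of the standard gauge-transformation computations of \cite{OR}.
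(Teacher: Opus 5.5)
Your argument is correct, but it takes a different route from the paper.

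\textbf{Your route.} You work vertex by vertex in the tau-function picture. After writing $S=\bar\tau/\hat\tau$ and $F=\tilde\tau/\hat\tau$ (using the converse half of Theorem~\ref{theorem:S}~b)), you do three things:
\begin{itemize}
\item build the composite tau-functions $\hat{\tilde\tau}$ and $\bar{\tilde\tau}$ via (\ref{B1B2});
\item apply (\ref{a1}) on two different squares to obtain $\hat F=\hat{\tilde\tau}/\hat{\bar\tau}$ and $\bar F=\bar{\tilde\tau}/\hat{\bar\tau}$;
\item invoke the direct half of Theorem~\ref{theorem:S}~b) at the base vertices $\hat\tau$ and $\bar\tau$, with Theorem~\ref{theorem:WWW} identifying ${\cal L}'$, $\hat{\cal L}$, $\bar{\cal L}$ as the relevant mKP Lax operators.
\end{itemize}
Your treatment of the sign from (\ref{b18}) is right. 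Replacing $\hat{\bar\tau}$ by $-\hat{\bar\tau}$ changes neither the mKP Lax operator of $(\bar\tau,\hat{\bar\tau})$ nor the wave-function property.

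\textbf{The paper's route.} It handles all three cases at once. Each new operator is $V{\cal L}V^{-1}$ with $V=S^{-1}$, $S_x^{-1}\p_x$, or $S_x^{-1}S^2\p_xS^{-1}$, and each candidate is simply $V(F)$. The intertwining relation
$\p_{t_k}V=((V{\cal L}V^{-1})^k)_{\geq 1}V-V({\cal L}^k)_{\geq 1}$
then gives $\p_{t_k}V(F)=((V{\cal L}V^{-1})^k)_{\geq 1}V(F)$ in one line. This is close in spirit to your fallback, except that no projections are computed by hand. The paper presents the relation as the condition for $V{\cal L}V^{-1}$ to be an mKP Lax operator. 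Concretely, it follows from the dressing formulas of Theorem~\ref{theorem:WWW} together with the Sato equations (\ref{mkp11}) for both dressing operators.

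\textbf{Trade-offs.} The paper's argument is shorter and uniform. It needs none of the following, all of which your route uses:
\begin{itemize}
\item integral (density) representations of the KP wave functions, required to run (\ref{B1B2}) and (\ref{b1c});
\item the converse half of Theorem~\ref{theorem:S}~b);
\item auxiliary tau-functions and the associated sign bookkeeping.
\end{itemize}
Your route uses more machinery, but it gives a by-product the operator proof does not. It shows concretely that $\hat F$ and $\bar F$ are again ratios of BD images, and it produces explicit tau-functions for the new cube vertices, namely $\hat{\tilde\tau}=F_x\hat\tau^2/\tau$ and $\bar{\tilde\tau}=(F/S)_x\,S^2\hat\tau^2/\tau$, consistent with (\ref{Sx}).
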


\noindent
{\it Proof.}
First of all, we note that all the three cases can 
be considered at once since $F'=V(F)$, where 
$V$ is a pseudo-differential operator such that 
$\mathcal{L}'=V\mathcal{L}V^{-1}$. The same holds 
for $\hat{F}$ and $\bar{F}$. 
Moreover, for $V\mathcal{L}V^{-1}$ to be an mKP Lax operator, the 
relation 
$\partial_{t_k}V=((V\mathcal{L}V^{-1})^k )_{>1}V-
V(\mathcal{L}^k)_{>1}$ must hold. 
Calculating $\partial_{t_k}(V(F))$, one gets
\begin{equation}
\partial_{t_k}(V(F))=(((V\mathcal{L}V^{-1})^k )_{>1}V-V(\mathcal{L}^k)_{>1}-V(\mathcal{L}^k)_{>1})(F)=((V\mathcal{L}V^{-1})^{k})_{>1}(V(F)),
\end{equation}
which completes the proof.
\square

\begin{figure}[t]
\centering{\includegraphics[scale=1.2]{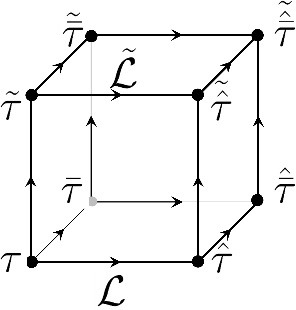}}
\vspace{1cm}
\caption{The cube}
\label{figure:cube}
\end{figure}

To depict the HD Lax operator $\sf L$ with its wave function $\Phi$, 
we need to add an extra dimension to Fig.\ref{figure:square}, see 
the cube in Fig.\ref{figure:cube}. 
Here, too, each vertex corresponds to a solution of the 
KP hierarchy, and each edge equipped with arrow symbolizes
a solution to the mKP hierarchy.
The bottom face corresponds to ${\sf L}={\sf L}(\hat{\tau},\bar{\tau})$;  
the diagonal from $\hat \tau$ to $\tilde \tau$ 
on the front face corresponds to $F$, 
which is another wave function for $\mathcal{L}$, 
connected to $\Phi$ as in Proposition \ref{proposition:waveHD}.
Using Proposition \ref{proposition:mKPBack} we define 
$F'$, $\hat{F}$, $\bar{F}$, from which the HD Lax operators 
corresponding to the walls of the cube can be defined.

Let us define the Lax operator corresponding to the roof 
of the cube. Firstly, we obtain the mKP Lax operator 
corresponding to the upper edge of the front wall: 
$\tilde{\mathcal{L}}=F^{-1}_xF^2\partial_xF^{-1}
\mathcal{L}F\partial_x^{-1}F^{-2}F_{x}$, then, 
according to Proposition \ref{proposition:mKPBack}, 
$F^2(S/F)_xF_x^{-1}$ is its wave function, which allows us
to construct the desired Lax operator.
The results are summarized in the following list
of equations:
\beq\label{list}
\begin{array}{l}
    {\sf L}_{\rm face}={\mathcal L}^{(f)}, 
    \\ \\
    {\sf L}_{\rm left}=
    (S^{-1}{\mathcal L}S)^{(r_1)},
    \\ \\
    {\sf L}_{\rm right}=((S_x)^{-1}\partial_x
    \mathcal{L}\partial_x^{-1}S_x))^{(r_2)},
    \\ \\
    {\sf L}_{\rm back}=[((S^{-1})_x)^{-1}\partial_x 
    S^{-1}\mathcal{L}S\partial_x^{-1}(S^{-1})_{x}]^{(r_3)},
  \\ \\
  {\sf L}_{\rm top}=[((F^{-1})_x)^{-1}\partial_x 
  F^{-1}\mathcal{L}F\partial_x^{-1}(F^{-1})_{x}]^{(r_4)}.
\end{array}
\eeq
Here the notation 
$\mathcal{A}^{(r)}$ means the operator ${\cal A}$ 
acting on functions of the variable $r$, which in the first line
is $f$, in the second line is $r_1$, etc. These variables are defined
as follows: $f={\rm Inv}(F)$, $r_1={\rm Inv}(F/S)$, $r_2={\rm Inv}(F_x/S_x)$, $r_3={\rm Inv}((F/S)_x/(S^{-1})_x)$, 
$r_4={\rm Inv}((S/F)_x/(F^{-1})_x)$, 
where ${\rm Inv}$ means the inverse function:
$H({\rm Inv}(H(x)))=x$.

Moreover, there are expressions for the tau-functions associated 
with vertices of the cube in terms of $\hat{\tau}, \tau,S,F$:
\beq
\begin{array}{lll}
\displaystyle{
    \hat{\bar{\tau}}=\frac{S_x\hat{\tau}}{\tau}, }&\bar{\tau}=S\hat{\tau}, 
    & \displaystyle{\tilde{\bar{\tau}}=(F/S)_x\frac{S^2\hat{\tau}^2}{\tau}},
    \\ && \\
   \displaystyle{\tilde{\hat{\tau}}=\frac{F_x\hat{\tau}}{\tau}}, & 
   \tilde{\tau}=F\hat{\tau}, & \displaystyle{\tilde{\hat{\bar{\tau}}}=
   \left(\frac{F_x}{S_x}\right)_x(S^{-1})_x\frac{\hat{\tau}}{\tau^2}}.
\end{array}
\eeq

\begin{remark}
    As it was already mentioned, there are 
    two distinct choices of HD wave functions $\Phi=s$ and $\Phi=c$, 
    and their linear combinations lead to the $PSL(2)$-action on $S$ 
    and consequently on ${\sf L}$. The $S\rightarrow S+c$ 
    with constant $c$ does not affect ${\sf L}$, however, 
    $S\rightarrow S^{-1}$ leads to the transformation ${\sf L}^{(s)}\rightarrow (s^{-1}{\sf L}s)^{(1/s)}$. This can be also interpreted as the connection of the HD Lax operator to the side of the face, 
    and since each face is two-sided, this transformation corresponds to a flip.
\end{remark}

\begin{remark}
    One can try to construct other solutions of the HD hierarchy 
    (for example, using $\tilde{\mathcal{L}}$) to define Lax operators corresponding to the walls of the cube. However, the result will be the same up to a flip (see the previous remark). This is due 
    to the commutativity of the BD transformations for 
    the KP hierarchy.
\end{remark}

\section{Embedding into multi-component KP hierarchy}

In this section we will show that the SchKP hierarchy
defined by the bilinear equation (\ref{bil3}) admits an
embedding into the multi-component KP hierarchy
\cite{DJKM81a,KL93,TT07,Teo11}. In this hierarchy,
independent variables are $n$ infinite sets of (in general complex) 
``times''
$$
{\bf t}=\{{\bf t}_1, {\bf t}_2, \ldots , {\bf t}_n\}, \qquad
{\bf t}_{\alpha}=\{t_{\alpha , 1}, t_{\alpha , 2}, 
t_{\alpha , 3}, \ldots \, \},
\qquad \alpha = 1, \ldots , n
$$
and $n$ additional variables $s_1, \ldots , s_n\in \ZZ$
such that 
\beq\label{s1}
\sum_{\alpha =1}^n s_{\alpha}=0.
\eeq
By ${\bf s}$ we denote the vector
${\bf s}=\{s_1, \ldots , s_n\}$ and by ${\bf e}_{\alpha}$ the vector
whose $\alpha$'s component is 1 and all other components are 0.
We will also use the following standard
notation:
\beq\label{st1}
\left ({\bf t}\pm [z^{-1}]_{\gamma}\right )_{\alpha j}=t_{\alpha , j}\pm
\delta_{\alpha \gamma} \frac{z^{-j}}{j}, \qquad
\xi ({\bf t}_{\alpha}, z)=\sum_{j\geq 1}t_{\alpha , j}z^j.
\eeq
In the bilinear formalism, 
the dependent variable is the tau-function 
$\tau ({\bf s}, {\bf t})$. 
The hierarchy is defined as an infinite set of bilinear equations
for the tau-function which are encoded in the basic bilinear
relation \cite{DJKM81a,Teo11}
\beq\label{s3}
\begin{array}{l}
\displaystyle{
\sum_{\mu =1}^n \epsilon_{\alpha \mu}({\bf s})
\epsilon_{\beta \mu}({\bf s}')
\oint_{C_{\infty}}\! dz \, 
z^{s_{\mu}-s_{\mu}'+\delta_{\alpha \mu}+\delta_{\beta \mu}-2}
e^{\xi ({\bf t}_{\mu}-{\bf t}_{\mu}', \, z)}}
\\ \\
\displaystyle{\phantom{aaaaaaaaaa}
\times \tau \left ({\bf s}+{\bf e}_{\alpha}-{\bf e}_{\mu}, 
{\bf t}-[z^{-1}]_{\mu}\right )
\tau \left ({\bf s}'+{\bf e}_{\mu}-{\bf e}_{\beta}, 
{\bf t}'+[z^{-1}]_{\mu}\right )=0}
\end{array}
\eeq 
valid for all ${\bf t}$, ${\bf t}'$, ${\bf s}$, ${\bf s}'$
subject to the constraint (\ref{s1})). 
In (\ref{s3}) $epsilon_{\alpha \mu}({\bf s})$ is a sign
factor defined as
\beq\label{s3a}
\epsilon_{\alpha \mu}({\bf s})=\left \{
\begin{array}{cl} 
\hspace{0.3cm}(-1)^{s_{\alpha +1}+\ldots +s_{\mu}}, 
&\quad \alpha < \mu
\\ 
\hspace{-1.5cm}1, &\quad \alpha =\mu
\\ 
-(-1)^{s_{\mu +1}+\ldots +s_{\alpha}}
&\quad \alpha > \mu .
\end{array}
\right.
\eeq
Different bilinear relations 
for the tau-function of the Hirota-Miwa type which follow from
(\ref{s3}) for special choices of ${\bf s}-{\bf s}'$ and ${\bf t}-{\bf t}'$
are given in \cite{Teo11}.

The following well known statement directly follows from the
general equation (\ref{s3}).

\begin{proposition}
\label{proposition:1comp}
The tau-function $\tau ({\bf s}, {\bf t})$, regarded as a function
of ${\bf t}_{\alpha}$, with any ${\bf s}$ and ${\bf t}_{\mu}$ with
$\mu \neq \alpha$ fixed, is a tau-function of the KP hierarchy, i.e.,
it satisfies the bilinear equation (\ref{kp1}).
\end{proposition}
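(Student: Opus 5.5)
The plan is to specialize the basic bilinear relation (\ref{s3}) so that only the $\mu=\alpha$ term survives. Choose $\beta=\alpha$, ${\bf s}'={\bf s}$, and ${\bf t}_\mu'={\bf t}_\mu$ for all $\mu\neq\alpha$, leaving ${\bf t}_\alpha$ and ${\bf t}_\alpha'$ arbitrary. Then each term with $\mu\neq\alpha$ has integrand
$$z^{s_\mu-s_\mu'+\delta_{\alpha\mu}+\delta_{\beta\mu}-2}e^{\xi({\bf t}_\mu-{\bf t}_\mu',z)}=z^{-2}$$
times $\tau({\bf s}+{\bf e}_\alpha-{\bf e}_\mu,{\bf t}-[z^{-1}]_\mu)\,\tau({\bf s}+{\bf e}_\mu-{\bf e}_\alpha,{\bf t}+[z^{-1}]_\mu)$. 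As $z\to\infty$ this product is regular and tends to a constant, since the shifts $[z^{-1}]_\mu$ vanish there. Its expansion in powers of $z^{-1}$ is a power series, so the integrand is $O(z^{-2})$ and its residue at infinity is zero. Hence all terms with $\mu\neq\alpha$ vanish.

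For $\mu=\alpha$ the sign factors $\epsilon_{\alpha\alpha}({\bf s})\epsilon_{\alpha\alpha}({\bf s})$ equal $1$, and the power of $z$ is $z^{0+1+1-2}=1$. The shifts ${\bf s}+{\bf e}_\alpha-{\bf e}_\alpha={\bf s}$ leave ${\bf s}$ unchanged. What remains is therefore
$$\oint_{C_\infty}e^{\xi({\bf t}_\alpha-{\bf t}_\alpha',z)}\,\tau({\bf s},{\bf t}-[z^{-1}]_\alpha)\,\tau({\bf s},{\bf t}'+[z^{-1}]_\alpha)\,dz=0.$$
Because ${\bf t}$ and ${\bf t}'$ agree in every component $\mu\neq\alpha$, this is exactly (\ref{kp1}) for $\tau({\bf s},\cdot)$ viewed as a function of ${\bf t}_\alpha$ with the other components frozen.

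The one point needing care is the vanishing of the $\mu\neq\alpha$ contributions. This relies on the standard analyticity assumption that $\tau({\bf t}\pm[z^{-1}]_\mu)$ is holomorphic in $z$ near $\infty$ with a finite limit there, so that an integrand of order $z^{-2}$ indeed has zero residue. It also requires checking that the exponent is exactly $-2$ and not $-1$; this is why both $\beta=\alpha$ and ${\bf s}'={\bf s}$ are needed. The remaining bookkeeping of the sign factors is routine.
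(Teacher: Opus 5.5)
Your proof is correct. It is exactly the specialization the paper has in mind when it says the statement ``directly follows'' from (\ref{s3}) without giving details: taking $\beta=\alpha$, ${\bf s}'={\bf s}$ and ${\bf t}'_\mu={\bf t}_\mu$ for $\mu\neq\alpha$ makes every $\mu\neq\alpha$ integrand $O(z^{-2})$, so those terms have zero residue, and what remains is (\ref{kp1}) in ${\bf t}_\alpha$. This is the same residue-counting the paper uses implicitly when it reduces (\ref{s3}) to the three-term relation (\ref{s3b}).
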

\square

Let us fix three distinct indices 
$\alpha , \beta , \gamma \in \{1, 2, \ldots , n\}$ such that
$\alpha <\beta <\gamma$ and
consider the case when $s_{\mu}=s'_{\mu}$ unless
$\mu =\alpha , \beta , \gamma$ and 
${\bf t}_{\mu}={\bf t'}_{\mu}$ unless $\mu =\alpha$
(in the latter case we put ${\bf t}_{\alpha}={\bf t}$,
${\bf t'}_{\alpha}={\bf t'}$ for simplicity of the notation).
It is easy to see that in this case the sum in (\ref{s3})
contains only three non-zero terms:
\beq\label{s3b}
\begin{array}{c}
\displaystyle{
\oint_{C_{\infty}}\! dz \, 
z^{s_{\alpha}-s_{\alpha}'}
e^{\xi ({\bf t}-{\bf t'}, \, z)}
\tau \left ({\bf s}, 
{\bf t}-[z^{-1}]\right )
\tau \left ({\bf s}', 
{\bf t}'+[z^{-1}]\right )}
\\ \\
\displaystyle{
+\, \epsilon_{\alpha \beta}({\bf s})\epsilon_{\alpha \beta}({\bf s'})
\oint_{C_{\infty}}\! dz \, z^{s_{\beta}-s'_{\beta}-2}
\tau \left ({\bf s}+{\bf e}_{\alpha}-{\bf e}_{\beta}, 
{\bf t}-[z^{-1}]_{\beta}\right )
\tau \left ({\bf s'}+{\bf e}_{\beta}-{\bf e}_{\beta}, 
{\bf t}'+[z^{-1}]_{\beta}\right )}
\\ \\
\displaystyle{
+\, \epsilon_{\alpha \gamma}({\bf s})\epsilon_{\alpha \gamma}({\bf s'})
\oint_{C_{\infty}}\! dz \, z^{s_{\gamma}-s'_{\gamma}-2}
\tau \left ({\bf s}\! +\! {\bf e}_{\alpha}\! -\! {\bf e}_{\gamma}, 
{\bf t}\! -\! [z^{-1}]_{\gamma}\right )
\tau \left ({\bf s'}\! +\! {\bf e}_{\beta}\! -\! {\bf e}_{\gamma}, 
{\bf t}'\! +\! [z^{-1}]_{\gamma}\right )}=0.
\end{array}
\eeq
Putting here 
${\bf s}={\bf e}_{\beta}-{\bf e}_{\gamma}$, ${\bf s'}={\bf 0}$
or
${\bf s}={\bf 0}$, ${\bf s'}={\bf e}_{\beta}-{\bf e}_{\gamma}$,
we see that the last two integrals are given by residues at infinity.
This leads to the following two equations:
\beq\label{s4}
\begin{array}{l}
\displaystyle{
\frac{1}{2\pi i}\oint_{C_{\infty}}
e^{\xi ({\bf t}-{\bf t'}, \, z)}
\tau \left ({\bf e}_{\beta}-{\bf e}_{\gamma}, 
{\bf t}-[z^{-1}]\right )
\tau \left ({\bf 0}, 
{\bf t}'+[z^{-1}]\right )\, dz}
\\ \\
\phantom{aaaaaaaaaaaaaa}
+\displaystyle{\epsilon_{\alpha \beta}({\bf e}_{\beta}-{\bf e}_{\gamma})
\epsilon_{\alpha \beta}({\bf 0})
\tau \left ({\bf e}_{\alpha}-{\bf e}_{\gamma}, {\bf t}\right )
\tau \left ({\bf e}_{\beta}-{\bf e}_{\alpha}, {\bf t}\right )=0},
\end{array}
\eeq

\beq\label{s4a}
\begin{array}{l}
\displaystyle{
\frac{1}{2\pi i}\oint_{C_{\infty}}
e^{\xi ({\bf t}-{\bf t'}, \, z)}
\tau \left ({\bf 0}, 
{\bf t}-[z^{-1}]\right )
\tau \left ({\bf e}_{\beta}-{\bf e}_{\gamma}, 
{\bf t}'+[z^{-1}]\right )\, dz}
\\ \\
\phantom{aaaaaaaaaaaaaa}
+\displaystyle{\epsilon_{\alpha \gamma}({\bf e}_{\beta}-{\bf e}_{\gamma})
\epsilon_{\alpha \gamma}({\bf 0})
\tau \left ({\bf e}_{\alpha}-{\bf e}_{\gamma}, {\bf t}\right )
\tau \left ({\bf e}_{\beta}-{\bf e}_{\alpha}, {\bf t}\right )=0}.
\end{array}
\eeq
It is easy to see that
$$
\epsilon_{\alpha \beta}({\bf e}_{\beta}-{\bf e}_{\gamma})
\epsilon_{\alpha \beta}({\bf 0})+
\epsilon_{\alpha \gamma}({\bf e}_{\beta}-{\bf e}_{\gamma})
\epsilon_{\alpha \gamma}({\bf 0})=0.
$$
Therefore, summing the two equations (\ref{s4}), (\ref{s4a}),
we obtain the equation
\beq\label{s5}
\begin{array}{l}
\displaystyle{
\oint_{C_{\infty}}
e^{\xi ({\bf t}-{\bf t'}, \, z)}\Bigl [
\tau \left ({\bf 0}, 
{\bf t}-[z^{-1}]\right )
\tau \left ({\bf e}_{\beta}\! -\! {\bf e}_{\gamma}, 
{\bf t}'+[z^{-1}]\right )}
\\ \\
\phantom{aaaaaaaaaaaaa}\displaystyle{
 +\, 
\tau \left ({\bf e}_{\beta}\! -\! {\bf e}_{\gamma}, 
{\bf t}-[z^{-1}]\right )
\tau \left ({\bf 0}, 
{\bf t'}+[z^{-1}]\right )\Bigr ]dz =0}
\end{array}
\eeq
which coincides with (\ref{bil3}) after the identification
\beq\label{s6}
\tau \left ({\bf 0}, 
{\bf t}\right )=\hat \tau ({\bf t}), \qquad
\tau \left ({\bf e}_{\beta}\! -\! {\bf e}_{\gamma}, 
{\bf t}\right )=\bar \tau ({\bf t}).
\eeq
It also follows from Proposition \ref{proposition:1comp} that 
both $\tau \left ({\bf 0}, 
{\bf t}\right )$ and $\tau \left ({\bf e}_{\beta}\! -\! {\bf e}_{\gamma}, 
{\bf t}\right )$ are KP tau-functions.

Summarizing, in this section we have proved the following theorem.

\begin{theorem}
The SchKP hierarchy is
embedded into the $n$-component KP hierarchy with $n\geq 3$
in the following way. The independent variables corresponding to 
all but three components are to be frozen and put equal to zero 
(for definiteness, let the three unfrozen components be numbered
by indexes 1, 2, 3 with the tau-function
$\tau ({\bf s}, {\bf t})=
\tau (s_1, s_2, s_3; {\bf t}_1,  {\bf t}_2,  {\bf t}_3)$). 
Then
$$
\hat \tau ({\bf t})=\tau (0, 0, 0; {\bf t},  {\bf 0},  {\bf 0}), 
\quad
\bar \tau ({\bf t})=
\tau (0, 1, -1; {\bf t},  {\bf 0},  {\bf 0})
$$
is a pair of tau-functions for the SchKP hierarchy, i.e., they
satisfy equation (\ref{bil3}).
\end{theorem}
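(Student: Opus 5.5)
The plan is to specialize the basic bilinear relation (\ref{s3}) to the three-component sector. All components except $1,2,3$ are frozen at ${\bf t}_\mu={\bf t}'_\mu={\bf 0}$, $s_\mu=s'_\mu=0$. We take $\alpha=\beta=1$ in (\ref{s3}) for the index pair of the relation and keep $\gamma$-type indices $2,3$ as the summation indices $\mu\neq 1$. We also take ${\bf t}_2={\bf t}'_2={\bf 0}$ and ${\bf t}_3={\bf t}'_3={\bf 0}$, and only ${\bf t}_1={\bf t}$, ${\bf t}'_1={\bf t}'$ free. Then $\xi({\bf t}_\mu-{\bf t}'_\mu,z)=0$ for $\mu=2,3$, and the sum over $\mu$ collapses to the three terms of (\ref{s3b}) (with the roles $\alpha\to 1$, $\beta\to 2$, $\gamma\to 3$). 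First I check that frozen components $\mu\ge 4$ contribute nothing: their integrands are $z^{s_\mu-s'_\mu-2}=z^{-2}$ times power series in $z^{-1}$, so there is no residue at infinity.

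Next I make the two choices of discrete variables, $({\bf s},{\bf s}')=({\bf e}_2-{\bf e}_3,{\bf 0})$ and $({\bf 0},{\bf e}_2-{\bf e}_3)$. In each case the $\mu=2$ and $\mu=3$ integrands have the form $z^{-1}$ or $z^{-3}$ (respectively $z^{-1}$) times regular series in $z^{-1}$. Only the terms with total power $z^{-1}$ survive, and they give products of tau-functions at ${\bf t}$ (since ${\bf t}_\mu={\bf t}'_\mu$). Carefully tracking which of the $\mu=2,3$ terms survive in each case yields (\ref{s4}) and (\ref{s4a}). Both contain the same extra product $\tau({\bf e}_1-{\bf e}_3,{\bf t})\tau({\bf e}_2-{\bf e}_1,{\bf t})$ (possibly up to evaluation at ${\bf t}$ versus ${\bf t}'$, which I must verify coincides by the exponent bookkeeping), multiplied by the sign products $\epsilon_{12}({\bf e}_2-{\bf e}_3)\epsilon_{12}({\bf 0})$ and $\epsilon_{13}({\bf e}_2-{\bf e}_3)\epsilon_{13}({\bf 0})$.

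From (\ref{s3a}) I compute the signs. First, $\epsilon_{12}({\bf 0})=1$ and $\epsilon_{12}({\bf e}_2-{\bf e}_3)=(-1)^{s_2}=-1$. Second, $\epsilon_{13}({\bf 0})=1$ and $\epsilon_{13}({\bf e}_2-{\bf e}_3)=(-1)^{s_2+s_3}=1$. Hence the sign products are $-1$ and $+1$ and sum to zero. Adding (\ref{s4}) and (\ref{s4a}) therefore cancels the non-integral terms and gives exactly (\ref{s5}), which is (\ref{bil3}) under the identification $\hat\tau=\tau(0,0,0;{\bf t},{\bf 0},{\bf 0})$, $\bar\tau=\tau(0,1,-1;{\bf t},{\bf 0},{\bf 0})$. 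Proposition \ref{proposition:1comp} gives that each of them is a KP tau-function in ${\bf t}$. The first theorem of the introduction, or equivalently Theorem \ref{theorem:S} together with the SchKP derivation (\ref{SSc}), then yields the SchKP and HD interpretation.

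The main obstacle is the residue bookkeeping in step two. For each $\mu=2,3$ and each choice of $({\bf s},{\bf s}')$, I must check that the power $z^{s_\mu-s'_\mu+\delta_{1\mu}+\delta_{1\mu}-2}$ is $z^{-1}$ exactly when a nonzero residue is needed and $\le z^{-2}$ otherwise. I must also check that the shifted discrete arguments ${\bf s}+{\bf e}_1-{\bf e}_\mu$ and ${\bf s}'+{\bf e}_\mu-{\bf e}_1$ produce the same unordered pair of tau-functions in both cases, since otherwise the cancellation fails. I would verify this directly term by term, and if a mismatch appears, switch the choice of index pair in (\ref{s3}) (e.g.\ $\alpha=\beta$ versus the paper's $\beta$ appearing on the right) to restore symmetry.
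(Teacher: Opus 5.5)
Your proposal is correct and is essentially the paper's argument: specialize (\ref{s3}) with index pair $(1,1)$, take $({\bf s},{\bf s}')=({\bf e}_2-{\bf e}_3,{\bf 0})$ and $({\bf 0},{\bf e}_2-{\bf e}_3)$ so that exactly one residue term survives in each, and add the two relations using $\epsilon_{12}({\bf e}_2-{\bf e}_3)\epsilon_{12}({\bf 0})=-1$ and $\epsilon_{13}({\bf e}_2-{\bf e}_3)\epsilon_{13}({\bf 0})=+1$. The bookkeeping you flag works out without any fallback: in both cases the surviving term is $\tau({\bf e}_1-{\bf e}_3,{\bf t})\,\tau({\bf e}_2-{\bf e}_1,{\bf t}')$, with the first factor at ${\bf t}$ and the second at ${\bf t}'$, so the two extra terms cancel for all ${\bf t},{\bf t}'$.
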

\square

\section{Concluding remarks}

The main result of this paper is the established 
equivalence between the
Harry Dym (HD) hierarchy of partial differential equations and the 
bilinear integral equation (\ref{bil3}) for pairs $(\hat \tau ,
\bar \tau )$ of functions such that each of them is a tau-function
of the KP hierarchy (i.e. satisfies (\ref{kp1})). More precisely,
equation (\ref{bil3}) defines the Schwarzian KP
(SchKP) hierarchy within the framework of the 
bilinear formalism while the HD hierarchy
arises as its equivalent reformulation in terms of the Lax-Sato
approach. In contrast to what happens for the KP and mKP hierarchies,
such reformulation requires switching to a new space variable
$s=S(x)$, with $S$ being a wave function for the Lax operator 
of the mKP hierarchy,
so $\p_x =S_x \p_s$, and $U(s)$ from (\ref{HD2}) 
and (\ref{LLL}) (the dependent variable in the HD equation) is defined
as $U=S_x$.

Another result is an embedding of the SchKP hierarchy into 
the multi-component KP hierarchy, which, to the best of our knowledge,
was not discussed in the literature before.
Also, some known results related to B\"acklund-Darboux transformations
of integrable hierarchies have been reformulated in this paper 
in terms of tau-functions. 

An interesting open question is whether the bilinear equation
(\ref{bil1}) (which was obtained as a particular corollary of the 
more general integral equation (\ref{bil3}) is actually 
equivalent to the whole hierarchy, like this is the case for the
KP and mKP hierarchies. The problem is that the direct method 
of proving this equivalence based on some determinant or pfaffian
identities (suggested in \cite{Shigyo13} for KP, 
mKP and BKP) seems to be not applicable to (\ref{bil1})
because of a more complicated structure of the equation. So, one should
probably try to apply some indirect methods (similar to those
developed in \cite{TT95}). Another open question is whether
any solution to SchKP can be realized as a solution to the
multi-component KP hierarchy.

At last, we note that the connection with B\"acklund-Darboux
transformations and integrable discretizations of soliton 
equations deserves further study. In particular, this seems
to be an especially promising direction of further 
investigation, taking into 
account that BD transformations of integrable hierarchies play
an important role in applications to quantum integrable
model solvable by Bethe ansatz (see, e.g. the recent review
\cite{Z25}).

\section*{Acknowledgments}

\addcontentsline{toc}{section}{Acknowledgments}

We are grateful to A. Orlov and T. Takebe for 
illuminating discussions.
The work of V.P. (Sections 3.2, 4.2, 5)
and A.Z (Sections 2, 3.1, 4.1) 
was implemented in the framework 
of the Basic Research Program at HSE University (HSE-BR-2025-84).
%This work is an output of the research project 
%``Symmetry. Information. Chaos''
%implemented as a part of the Basic Research Program at 
%National Research University Higher School 
%of Economics (HSE University).


\begin{thebibliography}{99}

\addcontentsline{toc}{section}{References}

\bibitem{KP}
B. Kadomtsev, V. Petviashvili, {\it On the stability of solitary waves in weakly dispersive media}, Sov. Phys. Dokl. {\bf 15} (1970) 539–-541.

\bibitem{book1} V. Zakharov, S. Manakov, S. Novikov, L. Pitaevski,
{\it Theory of solitons. The inverse problem method}, Nauka, Moscow,
1980, Plenum (1984).

\bibitem{DJKM83} E. Date, M. Jimbo, M. Kashiwara and T. Miwa,
{\it Transformation groups for soliton equations: 
Nonlinear integrable systems --
classical theory and quantum theory} (Kyoto, 1981), 
Singapore: World Scientific,
1983, 39--119.

\bibitem{JM83} M. Jimbo and T. Miwa, {\it Solitons and 
infinite dimensional Lie
algebras}, Publ. Res. Inst. Math. Sci. Kyoto {\bf 19} (1983) 943--1001.

\bibitem{Hirota-book}
R. Hirota, {\it Direct methods in soliton theory},
Cambridge Tracts in Mathematics, vol. 155, 2004.


\bibitem{HarnadBalogh} 
J. Harnad and F. Balogh, {\it Tau functions and their applications},
Cambridge Monographs on Mathematical Physics, Cambridge University
Press, 2021.

\bibitem{K85} B. Kupershmidt, {\it Mathematics of dispersive 
water waves}, Comm. Math. Phys. {\bf 99} (1985) 51--73.

\bibitem{K89} B. Kupershmidt, {\it On the integrability of
modified Lax equations}, J. Phys. A: Math. Gen. {\bf 22} (1989)
L993--L998.

\bibitem{Kiso}
K. Kiso, {\it A remark on the commuting flows defined by Lax
equations}, Prog. Theor. Phys. {\bf 83} (1990) 1108--1125.

\bibitem{KO93}
B. Konopelchenko, W. Oevel, {\it An $r$-matrix approach to 
nonstandard classes of integrable equations}, Publ. RIMS,
Kyoto Univ. {\bf 29} (1993) 581--666.

\bibitem{OR} W. Oevel, C. Rogers, {\it Gauge transformations
and reciprocal links in 2+1 dimensions}, Rev. Math. Phys.
{\bf 05} (1993) 299--330.

\bibitem{Dickey99}
L. Dickey, {\it Modified KP and discrete KP}, Lett. Math. Phys.
{\bf 48} (1999) 277--289.

\bibitem{TakTeo06}
T. Takebe, L. Teo, {\it Coupled modified KP hierarchy 
and its dispersionless limit}, SIGMA {\bf 2} (2006) 072.

\bibitem{UT84}
K. Ueno and K. Takasaki, {\it Toda lattice hierarchy},
Adv. Studies in Pure Math. {\bf 4} (1984) 1--95.



\bibitem{Weiss83}
J. Weiss, {\it The Painlev\'e property 
for partial differential equations: II. B\"acklund 
transformations, Lax pairs
and Schwarzian derivative}, J. Math. Phys. {\bf 24} (1983) 1405–-1413.

\bibitem{BK98}
L. Bogdanov, B. Konopelchenko, 
{\it Analytic-bilinear approach to 
integrable hierarchies: 
I.Generalized KP hierarchy}, J. Math. Phys. 
{\bf 39} (1998) 4683–-4700.

\bibitem{BK99}
L. Bogdanov, B. Konopelchenko, 
{\it M\"obius invariant integrable lattice 
equations associated with KP
and 2DTL hierarchies}, Phys. Lett. A {\bf 256} (1999) 39–-46.

\bibitem{KS02}
B. Konopelchenko, W. Schief, {\it Menelaus' theorem, Clifford
configurations and inversive geometry of the Schwarzian KP 
hierarchy}, J. Phys. A: Math. Gen. {\bf 35} (2002) 6125--6144.

\bibitem{Schief03}
W. Schief, {\it Lattice geometry of the discrete Darboux, KP,
BKP and CKP equations. Menelaus’ and
Carnot’s theorems}, 
J. Nonlinear Math. Phys. {\bf 10} (Supplement 2) (2003) 
194-–208.

\bibitem{Kr75}
M. Kruskal, {\it Nonlinear wave equations}, in: Dynamical systems,
theory and applications: Battelle Seattle 1974 Recontres, ed. J. Moser,
Berlin, Heidelberg: Springer, 1975 (pp. 310--354).

\bibitem{Kad90}
L. Kadanoff, {\it Exact solutions for the Saffman-Taylor 
problem with surface tension}, Phys. Rev. Lett. {\bf 65} (1990)
2986--2988.


\bibitem{KD84}
B. Konopelchenko, V. Dubrovsky, {\it Some new integrable 
nonlinear evolution equations in 2+1 dimensions}, Phys. Lett. A
{\bf 102} (1984) 15--17.







\bibitem{AZ12}
A. Alexandrov, A. Zabrodin, {\it Free fermions and tau-functions},
Journal of Geometry and Physics {\bf 67} (2013) 37--80.

\bibitem{book}
V.B. Matveev, M.A. Salle, 
{\it Darboux transformations and solitons},
Berlin: Springer, 1991. 

\bibitem{RS02}
C. Rogers, W. Schieff, {\it B\"acklund and Darboux transformations},
Cambridge Texts in Applied Mathematics {\bf 30}, Cambridge
University Press, 2002.

\bibitem{Nimmo97}
J.J.C. Nimmo,
{\it Darboux transformations and the discrete KP equation},
J. Math. Phys. A: Math. Gen. {\bf 30} (1997) 8693–-8704.

\bibitem{NW97}
J.J.C. Nimmo, R. Willox,
{\it Darboux transformations for the two-dimensional Toda system},
Proc. R. Soc. Lond. A {\bf 453} (1997) 2497--2525.

\bibitem{HSS84a}
J. Harnad, Y. Saint-Aubin, S. Shnider,
{\it B\"acklund transformations for nonlinear sigma models
with values in Riemannian symmetric spaces},
Commun. Math. Phys. {\bf 92} (1984) 329--367.

\bibitem{ANP98}
H. Aratyn, E. Nissimov, S. Pacheva, {\it Method of squared
eigenfunction potentials in integrable hierarchies},
Comm. Math. Phys. {\bf 193} (1998) 493--525.

\bibitem{Cheng2017}
H. Chen, L. Geng, N. Li, J. Cheng,
{\it Solving the constrained modified KP hierarchy by gauge
transformations}, J. Nonlin. Math.
Phys. {\bf 26} (2019) 54--68.

\bibitem{Cheng2018}
J.P. Cheng, {\it The gauge transformation of 
modified KP hierarchy}, J. Nonlin. Math. Phys. {\bf 25} (2018) 
66–-85.

\bibitem{Z25a}
A. Zabrodin, {\it Revisiting B\"acklund-Darboux transformations 
for KP and BKP integrable hierarchies},
arXiv:2506.07208.

\bibitem{Miwa82} T. Miwa, {\it On Hirota's difference equations},
Proc. Japan Acad. {\bf 58} Ser. A (1982) 9--12.

\bibitem{TT95}
K. Takasaki, T. Takebe, 
{\it Integrable hierarchies and dispersionless limit}, 
Rev. Math. Phys. {\bf 7} (1995) 743--808,
arXiv:hep-th/9405096.

\bibitem{Shigyo13}
Y. Shigyo, {\it On addition formulae of KP, mKP
and BKP hierarchies}, SIGMA {\bf 9} (2013) 035.

\bibitem{ZM85}
V.E. Zakharov,  S.V. Manakov,
{\it Construction of higher-dimensional 
nonlinear integrable systems and of their solutions},
Funct. Anal. and Its Appl. {\bf 19} (1985) 89–-101.

\bibitem{Z18}
A. Zabrodin, {\it Lectures on nonlinear integrable equations
and their solutions}, arXiv:1812.11813.



\bibitem{DJKM81a} E. Date, M. Jimbo, M. Kashiwara and 
T. Miwa, {\it Transformation groups
for soliton equations III}, J. Phys. Soc. Japan {\bf 50} (1981) 3806--3812.

\bibitem{KL93} V. Kac and J. van de Leur, {\it The 
$n$-component KP hierarchy and 
representation theory}, in: A.S. Fokas, V.E. Zakharov (Eds.), Important Developments
in Soliton Theory, Springer-Verlag, Berlin, Heidelberg, 1993.


\bibitem{TT07} K. Takasaki and T. Takebe, {\it Universal
Whitham hierarchy, dispersionless Hirota equations
and multicomponent KP hierarchy}, Physica D {\bf 235} (2007) 109--125.

\bibitem{Teo11} L.-P. Teo, {\it The multicomponent KP hierarchy: 
differential Fay identities and Lax
equations}, J. Phys. A: Math. Theor. {\bf 44} (2011) 225201.


%%%%%%%%%%%%%%%%%%%%%%%%%%%%%%%%%%

\bibitem{Z25}
A. Zabrodin,
{\it Classical facets of quantum integrability}, to be published in
Proceedings of Beijing Summer Workshop in 
Mathematics and Mathematical Physics, 2024,
arXiv:2501.18557. 




%%%%%%%%%%%%%%%%%%%%%%%%%%%%%%%%%%%%%%%%%
%%%%%%%%%%%%%%%%%%%%%%%%%%%%%%%%%%%%%%%%%

%\bibitem{HBC89}
%implicit solution of the Harry Dym eqiation and its connections with 
%the Korteveg-de Vries equation}, J. Phys. A: Math. Gen. {\bf 22}
%(1989) 241--255.

%\bibitem{HJN16}
%J. Hietarinta, N. Joshi, F.W. Nijhoff, {\it Discrete Systems and
%Integrability}, Cambridge Texts in Applied Mathematics, Cambridge
%University Press, 2016.


%\bibitem{DJM82}  E. Date, M. Jimbo and T. Miwa, {\it Method for
%generating discrete soliton equations I, II}, Journ. Phys. Soc. Japan
%{\bf 51} (1982) 4116--4131.


%\bibitem{OHTI93}
%Y. Ohta, R. Hirota, S. Tsujimoto, T. Imai, {\it Casorati and
%discrete Gram type determinant representations of solutions to
%the discrete KP hierarchy}, J. Phys. Soc. Japan {\bf 62} (1993)
%1872--1886.





\end{thebibliography}
\end{document}